\documentclass[11pt]{article}
\usepackage{amsmath,amssymb,amsfonts,latexsym,graphicx,amsthm}
\usepackage{fullpage,color}
\usepackage{url,hyperref}
\usepackage{comment}
\usepackage[linesnumbered,boxed,ruled,vlined]{algorithm2e}
\usepackage{framed}
\usepackage[shortlabels]{enumitem}

\usepackage[margin=1in]{geometry}

\newtheorem{theorem}{Theorem}[section]
\newtheorem{lemma}{Lemma}[section]
\newtheorem{definition}{Definition}[section]
\newtheorem{corollary}{Corollary}[section]
\newtheorem{claim}{Claim}[lemma]

\newtheorem{hypothesis}{Hypothesis}[section]

        {\medskip}

\newcommand{\etal}{{\em et al.}\ }

\newcommand{\out}{\mathsf{out}}
\newcommand{\vol}{\mathsf{vol}}
\newcommand{\gh}{\mathcal{T}}
\newcommand{\cnt}{\mathsf{cnt}}

\newcommand{\const}{\mathsf{c}}
\newcommand{\mf}{\mathsf{MF}}
\newcommand{\old}{\mathrm{old}}

\begin{document}

\title{Faster Cut-Equivalent Trees in Simple Graphs}
\author{Tianyi Zhang \thanks{Tel Aviv University, \href{}{tianyiz21@tauex.tau.ac.il}}}
\date{}

\maketitle

\begin{abstract}
	Let $G = (V, E)$ be an undirected connected simple graph on $n$ vertices. A cut-equivalent tree of $G$ is an edge-weighted tree on the same vertex set $V$, such that for any pair of vertices $s, t\in V$, the minimum $(s, t)$-cut in the tree is also a minimum $(s, t)$-cut in $G$, and these two cuts have the same cut value. In a recent paper [Abboud, Krauthgamer and Trabelsi, STOC 2021], the authors propose the first subcubic time algorithm for constructing a cut-equivalent tree. More specifically, their algorithm has \footnote{$\widetilde{O}$ hides poly-logarithmic factors.}$\widetilde{O}(n^{2.5})$ running time. Later on, this running time was significantly improved to $n^{2+o(1)}$ by two independent works [Abboud, Krauthgamer and Trabelsi, FOCS 2021] and [Li, Panigrahi, Saranurak, FOCS 2021], and then to $(m+n^{1.9})^{1+o(1)}$ by [Abboud, Krauthgamer and Trabelsi, SODA 2022].
	
	In this paper, we improve the running time to $\widetilde{O}(n^2)$ graphs if near-linear time max-flow algorithms exist, or $\widetilde{O}(n^{17/8})$ using the currently fastest max-flow algorithm. Although our algorithm is slower than previous works, the runtime bound becomes better by a sub-polynomial factor in dense simple graphs when assuming near-linear time max-flow algorithms.
\end{abstract}

\section{Introduction}
It is well known from Gomory and Hu \cite{gomory1961multi} that any undirected graph can be compressed into a single tree while all pairwise minimum cuts are preserved exactly. More specifically, given any undirected graph $G = (V, E)$ on $n$ vertices and $m$ edges, there exists an edge weighted tree $T$ on the same set of vertices $V$, such that: for any pair of vertices $s, t\in V$, the minimum $(s, t)$-cut in $T$ is also a minimum $(s, t)$-cut in $G$, and their cut values are equal. Such trees are called Gomory-Hu trees or cut-equivalent trees. In the original paper \cite{gomory1961multi}, Gomory and Hu showed an algorithm that reduces the task of building a cut-equivalent tree to $n-1$ max-flow instances. Gusfield \cite{gusfield1990very} modified the original algorithm Gomory and Hu so that no graph contractions are needed when applying max-flow subroutines. So far, in weighted graphs, faster algorithms for building cut-equivalent trees were only byproducts of faster max-flow algorithms. In the recent decade, there has been a sequence of improvements on max-flows using the interior point method \cite{lee2014path,madry2016computing,liu2020faster,kathuria2020unit,brand2021minimum}, and the current best running time is $\widetilde{O}(m + n^{1.5})$ by \cite{brand2021minimum}, so computing a cut-equivalent tree takes time $\widetilde{O}(mn + n^{2.5})$.

When $G$ is a simple graph, several improvements have been made over the years. Bhalgat \etal \cite{hariharan2007mn} designed an $\widetilde{O}(mn)$ time algorithm for cut-equivalent trees using a tree packing approach based on \cite{gabow1995matroid,edmonds2003submodular}. Recent advances include an upper bound of $O(m^{3/2}n^{1/6})$ by Abboud, Krauthgamer and Trabelsi \cite{abboud2020new}, and in a subsequent work \cite{abboud2020subcubic} by the same set of authors, they proposed the first subcubic time algorithm that constructs cut-equivalent trees in simple graphs, and their running time is $n^{2.5+o(1)}$. Recently, by two independent works \cite{abboud2021apmf,li2021nearly}, this running time was improved to $n^{2+o(1)}$ which is almost-optimal in dense graphs, and further to a  subquadratic time $(m+n^{1.9})^{1+o(1)}$ by \cite{abboud2022friendly}.

All of these upper bounds rely on the current fastest max-flow algorithm with runtime $\widetilde{O}(m+n^{1.5})$. However, even if we assume the existence of a $\widetilde{O}(m)$-time max-flow algorithm, the above algorithms still have $n^{2+o(1)}$ running time in dense graphs which contains an extra sub-polynomial factor.

\subsection{Our results}
Let $\mf(m_0, n_0)$ be the running time complexity of max-flow computation in unweighted multi-graphs with $m_0$ edges and $n_0$ vertices, and let $\mf(m_0) = \mf(m_0, m_0)$ for convenience.

The main result of this paper is a near-quadratic time algorithm assuming existence of quasi-linear time max-flow algorithms. For a detailed comparison with recent published works, please refer to the table below where conditional runtime refers to the assumption of near-linear time max-flow algorithms.

\begin{hypothesis}\label{hypo}
	$\mf(m_0, n_0) = \widetilde{O}(m_0 + n_0)$.
\end{hypothesis}

\begin{theorem}\label{cond}
	Let $G = (V, E)$ be a simple on $n$ vertices. Under Hypothesis~\ref{hypo}, there is a randomized algorithm that constructs a cut-equivalent tree of $G$ in $\widetilde{O}(n^2)$ time with high probability. Using the current fastest max-flow algorithm~\cite{brand2021minimum}, the running time becomes $\widetilde{O}(n^{17/8})$.
\end{theorem}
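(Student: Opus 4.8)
The plan is to follow the two-level template shared by all recent sub-cubic cut-equivalent tree algorithms --- reduce the construction to polylogarithmically many single-source minimum-cut computations, then implement those efficiently --- but to replace the one sub-polynomial ingredient of \cite{li2021nearly,abboud2021apmf}, a (boundary-linked) expander decomposition, by a pipeline built only out of black-box max-flow together with classical exact cut sparsification. Throughout, ``cut of $(s,v)$'' means the side $S_v\not\ni s$ of a minimal minimum $(s,v)$-cut, and these sides form a laminar family.

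\paragraph{Step 1: cut-equivalent tree via single-source min-cuts.}
I would invoke the recursive reduction of \cite{li2021nearly} (see also \cite{abboud2021apmf,abboud2022friendly}): it is enough to solve, $\widetilde O(1)$ times in total, the single-source problem --- given a contraction $H$ of the simple input graph and a pivot $s$, output $\lambda_H(s,v)$ together with a representative $S_v$ for every $v$ --- after which one splices in the corresponding tree edges, contracts each inclusion-maximal $S_v$, and recurses on the contracted graph and on the pieces. Choosing the pivot $s$ uniformly at random (re-drawing $O(\log n)$ times if the resulting split is too lopsided) keeps the recursion depth $O(\log n)$ and the edge count summed over any single level at $O(n^2\log n)$, since $H$ is always a contraction of an $n$-vertex simple graph; hence the whole construction costs $\widetilde O(1)$ invocations of the single-source routine.

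\paragraph{Step 2: single-source min-cuts, by value and by scale.}
For the single-source routine I would first bucket the vertices by value, say $v$ with $\lambda_H(s,v)\in[2^i,2^{i+1})$; only $\widetilde O(n^2/2^i)$ vertices land in bucket $i$, since $\lambda_H(s,v)\le\deg_H(v)$ and degrees sum to $O(n^2)$. For bucket $i$ I build a Nagamochi--Ibaraki-type $2^{i+1}$-connectivity certificate $H_i$ of $H$ --- $\widetilde O(n\cdot 2^i)$ edges, every cut of value at most $2^{i+1}$ preserved exactly --- so all min $(s,v)$-cuts I need for this bucket survive in the sparse graph $H_i$. On $H_i$ I then run the isolating-cuts lemma of Li--Panigrahi at geometrically decreasing sampling rates $1,\tfrac12,\tfrac14,\dots$: placing $s$ and each other vertex independently with probability $\Theta(1/k)$ into a terminal set $R$ and computing, with $O(\log n)$ max-flows, the minimum cut $C_r$ with $C_r\cap R=\{r\}$ for every $r\in R$, one recovers the value and a valid side for each $v$ whose side $S_v$ has about $k$ vertices and contains exactly one sample (an $\Omega(1)$-probability event given $v$ is sampled). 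Cuts whose small side is ``balanced'' --- more than a threshold $\tau$ of vertices --- are not pursued here: once the found cuts are contracted in Step~1 they reappear as unbalanced cuts one recursion level deeper, so it suffices to take $k$ up to $\tau$, using $\widetilde O(\tau)$ isolating-cut calls per bucket.

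\paragraph{Main obstacle, and the running time.}
The crux --- the part I expect to be hardest to get right --- is the balanced case: without an expander decomposition one cannot sparsify away the balanced, possibly tiny-valued cuts, so one must argue that deferring them to the Step-1 recursion neither blows up its depth past $O(\log n)$ nor creates a contracted graph that is simultaneously dense and large, while also balancing $\tau$ against the connectivity level at which sparsification stops paying off. Granting this, the cost is $\widetilde O(1)$ runs of Step~2, and Step~2 makes $\widetilde O(\tau)$ max-flow calls on $n$-vertex graphs with $\widetilde O(n\cdot 2^i)$ edges over the buckets. Under Hypothesis~\ref{hypo} each call costs $\widetilde O(\text{edges})$ and the bucket sums telescope to $\widetilde O(n^2)$. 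With the current max-flow $\mf(m_0,n_0)=\widetilde O(m_0+n_0^{1.5})$ of \cite{brand2021minimum}, each call additionally pays an $n^{1.5}$ term; optimizing $\tau$ and the sparsification cutoff against this extra term is where the final exponent is pinned down, and it works out to $\widetilde O(n^{17/8})$.
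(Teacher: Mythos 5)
Your proposal takes a fundamentally different route from the paper, and it has a genuine gap at precisely the spot you flag as the crux.

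The paper does \emph{not} discard expander decomposition --- it centers on the boundary-linked $(\phi,\phi)$-expander decomposition of \cite{goranci2021expander} with $\phi = 1/\mathrm{polylog}\,n$. The two bottlenecks it identifies in \cite{abboud2020subcubic} are resolved precisely via expander structure: (i) vertices sitting in clusters much smaller than their degree are charged to the inter-cluster edges they must contribute in a simple graph, and then the algorithm \emph{skips} whichever of the ``small-cluster'' or ``large-cluster'' population is smaller, rather than running a max-flow per small-cluster vertex; and (ii) the laminar family of latest min-cuts, restricted to a single $\phi$-expander cluster, has depth $O(1/\phi)$, so all of them can be recovered in $O(1/\phi)$ max-flows with no preliminary random refinement. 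The $\widetilde O(n^2)$ bound is entirely downstream of these two structural facts, and the $n^{17/8}$ bound comes from a specific threshold analysis ($d\ge n^{3/4}$ or $s>n^{3/4}/\sqrt d$, together with $s>\sqrt{2d}$, which forces $s>n^{3/8}$ in the expensive regime) against the $n^{1.5}$ additive term of \cite{brand2021minimum}.

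Your pipeline --- value-bucketing, Nagamochi--Ibaraki certificates per bucket, isolating cuts at geometric sampling rates, and a recursion that defers balanced cuts --- does not close the place where the paper had to work. You say balanced cuts ``reappear as unbalanced cuts one recursion level deeper,'' but this is not true without further argument: a cut of size between $\tau$ and $n/2$ need not become unbalanced after contracting \emph{other} found sets, and the recursion could spend $\Omega(n/\tau)$ rounds, not $O(\log n)$, if nothing bounds the depth of the laminar family of small sides. This is exactly the ``second bottleneck'' the paper names, and expander decomposition is what bounds that depth by $O(1/\phi)$. You also do not derive the $17/8$ exponent --- you assert it falls out of ``optimizing $\tau$ and the sparsification cutoff,'' but the certificate sizes $\widetilde O(n\cdot 2^i)$ and the $n^{1.5}$ term do not obviously combine to that exponent without the cluster-size lower bound $s>\sqrt{2d}$ the paper extracts from boundary-linkedness. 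Absent both a replacement for the depth bound and a concrete balance computation, the argument does not establish either the $\widetilde O(n^2)$ conditional bound or the $\widetilde O(n^{17/8})$ unconditional bound.
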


\begin{center}
	\begin{tabular}{|c|c|c|}\hline
		reference	&	conditional runtime	&	unconditional runtime\\\hline
		\cite{abboud2020subcubic}	&	$\widetilde{O}(n^{2.5})$	&	$n^{2.5+o(1)}$\\\hline
		\cite{abboud2021apmf}	&	$n^{2+o(1)}$	&	$n^{2+o(1)}$\\\hline
		\cite{li2021nearly}		&	$n^{2+o(1)}$	&	$n^{2+o(1)}$\\\hline
		\cite{abboud2022friendly}	&	$(m+n^{1.75})^{1+o(1)}$	&	$(m+n^{1.9})^{1+o(1)}$\\\hline
		new	&	$\widetilde{O}(n^2)$	&	$\widetilde{O}(n^{17/8})$\\\hline
	\end{tabular}
\end{center}

\noindent\textbf{Comparison with subsequent works.} In a very recent but unpublished online preprint~\cite{abboud2021gomory} (see also a note by \cite{zhang2021gomory}), significant progress has been made where an unconditional $\widetilde{O}(n^2)$ runtime has been achieved for cut-equivalent trees in general weighted graphs, which completely subsumes our result.

\subsection{Technical overview}
Our algorithm is largely based on the framework of \cite{abboud2020subcubic}. In this subsection, we will discuss the running time bottlenecks of \cite{abboud2020subcubic} and how to bypass them. For simplification, consider the following task. Let $\gh$ be a partition tree which is an intermediate tree of the Gomory-Hu algorithm. Take an arbitrary node $N\subseteq V$ of $\gh$ which represents a vertex subset of $V$. Let $G_\gh[N] = (V_\gh[N], E_\gh[N])$ be the auxiliary graph obtained by contracting each component of $\gh\setminus \{N\}$ into a single vertex in the original graph $G$.

Fix a pivot vertex $p\in N$, we want to find a sequence of vertices $v_1, v_2, \cdots, v_l\in N$, and compute a sequence of latest minimum cuts $(L_i, V_\gh[N]\setminus L_i), 1\leq i\leq l$ in $G_\gh[N]$ for $(v_i, p), 1\leq i\leq l$, where $v_i\in K_i, p\notin K_i$, such that:
\begin{enumerate}[(1)]
	\item $l\geq \Omega(|N|)$.
	\item For each $1\leq i\leq l$, $|L_i| \leq |N|/2$.
\end{enumerate}
If we cut all sides $L_i\cap N$ off of $N$ and form tree nodes, then by the above two properties all tree nodes are vertex subsets of $N$ of size at most $|N|/2$. So, if we can recursively repeat this procedure on smaller subsets, then it would produce a cut-equivalent tree in logarithmically many rounds.

For this task, the basic idea of \cite{abboud2020subcubic} is to apply expander decompositions. Suppose the original graph $G$ is decomposed into disjoint clusters $V = C_1\cup C_2\cup\cdots $ such that each $G[C_i]$ is a $\phi$-expander, and the total number of inter-cluster edges is bounded by $\widetilde{O}(\phi n^2)$. For simplicity let us assume $G$ is a roughly regular graph and each vertex $v\in V$ has degree $\deg_G(v)\in [n/2, n)$. For each $v$, suppose $(L_v, V\setminus L_v)$ is the latest min-cut for $(v, p)$, and let $C_v$ be the $\phi$-expander of the expander decomposition that contains $v$. Vertices of $N$ are divided into three types.
\begin{enumerate}
	\item Vertices in clusters whose size is less than $n/4$, namely $|C_v| < n/4$.
	\item Vertices in clusters whose size is at least $n/4$, namely $|C_v|\geq n/4$, plus that $|C_v\cap L_v| \leq 10 / \phi$. Note that there are only a constant number of such clusters.
	\item $|C_v| \geq n/4$, plus that $|C_v\setminus L_v| \leq 10 / \phi$.
\end{enumerate}

\paragraph*{The first bottleneck} To compute $L_v$ for type-1 vertices, we simply go over all such $v$'s, and compute the max-flow from $v$ to $p$ in $G_\gh[N]$. Since each type-1 vertex must contribute $n/2 - n/4 = n/4$ inter-cluster edges as the input graph $G$ is simple, the total number of type-1 vertices does not exceed $\widetilde{O}(\phi n)$, summing over all tree nodes $N$ of $\gh$.

For type-2 vertices, using the isolating cut lemma devised in \cite{abboud2020subcubic,li2020deterministic}, we can compute all the sides $L_v$ in $\widetilde{O}(\mf(\vol_G(N)) / \phi)$ time, which sum to $\widetilde{O}(\mf(n^2) / \phi)$ over all nodes $N$ of $\gh$. So, under Hypothesis~\ref{hypo}, the total time cost of type-1 and type-2 vertices is $\widetilde{O}(\phi n^3 + n^2 / \phi)$, which is always larger than $n^{2.5}$. So in their algorithm~\cite{abboud2020subcubic}, parameter $\phi$ is equal to $1/\sqrt{n}$.

Our observation is that applying max-flow for each type-1 vertex is too costly. To overcome this bottleneck, we simply avoid computing cuts $(L_v, V_\gh[N]\setminus L_v)$ for both type-1 and type-2 vertices. If the total number of type-2\&3 vertices is larger than the total number of type-1 vertices, then we can skip all type-1 vertices. However, if the number of type-1 vertices dominates in $N$, then the number of type-2\&3 vertices is at most $\widetilde{O}(\phi n)$ over all such kind of $N$. In this case, the total degree $\vol_G(N)$ is at most $\widetilde{O}(\phi n^2)$, and therefore, when summing over all nodes $N$ of $\gh$, computing all type-1 vertices takes time at most $\widetilde{O}(\phi^2 n^3)$, instead of $\widetilde{O}(\phi n^3)$, and so the new balance would be $\widetilde{O}(\phi^2n^3 + n^2/\phi)$. Therefore, if we choose $\phi = n^{-1/3}$, it becomes $n^{7/3}$ which is already better than $n^{2.5}$. In the final algorithm, we will classify expander sizes using $\log n$ many different thresholds, instead of just one threshold (which is n/4 here), and so in the end we can set $\phi = 1 / \log^{O(1)}n$. In general cases where graph $G$ has various vertex degrees, we need to apply boundary-linked expander decomposition from a recent work \cite{goranci2021expander}; especially we need to make use of property (3) in Definition 4.2 of \cite{goranci2021expander}.

\paragraph*{The second bottleneck} In the work \cite{abboud2020subcubic}, in order to compute latest min-cuts $(L_v, V_\gh[N]\setminus L_v)$ for type-3 vertices, they consider the laminar family formed by all sides $L_v$. If the laminar family has tree depth at most $k$, then their algorithm can compute cuts $(L_v, V_\gh[N]\setminus L_v)$ by applying $k + 10/\phi$ max-flows in $G_\gh[N]$. To ensure that the depth is bounded by $k$, they need a first randomly refine node $N$ into $|N| / k$ sub-nodes which takes $|N|/k$ Gomory-Hu steps. Hence, in total, it requires at least $k + |N|/k > \sqrt{|N|}$ max-flow invocations, which leads to a $n^{2.5}$ running time under Hypothesis~\ref{hypo}.

To bypass this barrier, the observation is that the depth of the laminar family in each $\phi$-expander is already small, so actually we do not need the help from the refinement step. More precisely, instead of looking at the entire laminar family formed by sets $\{L_v\}_{v\in N}$, we only look at the laminar family formed by sets $\{C_v\cap L_v \}_{v\in N}$ for each cluster $C$. It can be proved that the depth of this smaller laminar family is always bounded by $O(1/\phi)$. In the end, to compute latest min-cuts $(L_v, V_\gh[N]\setminus L_v)$ for all type-2\&3 vertices, we will only use $O(1/\phi)$ max-flow instances in total.

\section{Preliminaries}
Let $G = (V, E)$ be an arbitrary simple graph on $n$ vertices and $m$ edges with unit-capacities. For any $v\in V$, let $\deg_G(v)$ be the number of its neighbors in $V$. For any subset $S\subseteq V$, define $\vol_G(S) = \sum_{v\in S}\deg_G(v)$, and let $\out_G(S)$ count the number of edges in $E\cap (S\times (V\setminus S))$, and define $G[S]$ to be the induced subgraph of $S$ on $G$. 

Introduced in \cite{gabow1991applications}, the latest minimum $(s, t)$-cut is a minimum $(s, t)$-cut such that the side containing $s$ has minimum size as well. It is proved that latest minimum cuts are unique, and can be computed by any max-flow algorithm for $(s, t)$. 

Here are some basic facts about min-cuts.
\begin{lemma}[Lemma 2.8 in \cite{abboud2020subcubic}]\label{union}
	For any vertices $a, b, p\in V$, assume $(A, V\setminus A)$ and $(B, V\setminus B)$ are min-cuts for $(a, p), (b, p)$ respectively. If $b\in A$, then $(A\cup B, V\setminus (A\cup B))$ is a min-cut for $(a, p)$ as well.
\end{lemma}
\begin{lemma}[Lemma 2.9 in \cite{abboud2020subcubic}]\label{minus}
	For any vertices $a, b, p\in V$, assume $(A, V\setminus A)$ and $(B, V\setminus B)$ are min-cuts for $(a, p), (b, p)$ respectively. If $a\notin B$ and $b\notin A$, then $(A\setminus B, V\setminus (A\setminus B))$ is a min-cut for $(a, p)$. 
\end{lemma}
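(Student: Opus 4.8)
\textbf{Proof plan for Lemma~\ref{minus}.} The plan is to prove it by submodularity of the cut function together with the optimality of the two given min-cuts, in the same spirit as the standard uncrossing argument (and as Lemma~\ref{union} above). Write $\delta(S)$ for the cut value $\out_G(S)$ (number of edges leaving $S$); recall $\delta$ is symmetric, $\delta(S)=\delta(V\setminus S)$, and submodular: $\delta(X)+\delta(Y)\ge \delta(X\cap Y)+\delta(X\cup Y)$ for all $X,Y\subseteq V$. Let $\lambda_{a}$ denote the value of a min $(a,p)$-cut and $\lambda_b$ the value of a min $(b,p)$-cut, so $\delta(A)=\lambda_a$ and $\delta(B)=\lambda_b$.

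First I would identify which sets in the submodularity inequality are feasible cuts for which pair. Apply submodularity to $X=A$ and $Y=V\setminus B$:
\[
\delta(A)+\delta(V\setminus B)\ \ge\ \delta\bigl(A\cap(V\setminus B)\bigr)+\delta\bigl(A\cup(V\setminus B)\bigr)
= \delta(A\setminus B)+\delta\bigl((V\setminus B)\cup A\bigr).
\]
By symmetry $\delta(V\setminus B)=\delta(B)=\lambda_b$, and $\delta\bigl((V\setminus B)\cup A\bigr)=\delta\bigl(V\setminus(B\setminus A)\bigr)=\delta(B\setminus A)$. So the inequality reads
\[
\lambda_a+\lambda_b\ \ge\ \delta(A\setminus B)+\delta(B\setminus A).
\]
Now I must check the side conditions make these two sets the right kinds of cuts. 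Since $a\in A$ and, by hypothesis, $a\notin B$, we get $a\in A\setminus B$; and $p\notin A$ (as $(A,V\setminus A)$ is an $(a,p)$-cut) so $p\notin A\setminus B$. Hence $(A\setminus B,\,V\setminus(A\setminus B))$ is a feasible $(a,p)$-cut, giving $\delta(A\setminus B)\ge \lambda_a$. Symmetrically, $b\in B$ and $b\notin A$ give $b\in B\setminus A$, while $p\notin B$ gives $p\notin B\setminus A$, so $(B\setminus A, V\setminus(B\setminus A))$ is a feasible $(b,p)$-cut and $\delta(B\setminus A)\ge\lambda_b$. Combining with the displayed inequality forces $\delta(A\setminus B)=\lambda_a$ and $\delta(B\setminus A)=\lambda_b$; in particular $(A\setminus B,V\setminus(A\setminus B))$ is a minimum $(a,p)$-cut, which is the claim.

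The argument is short, so the only thing to be careful about is bookkeeping: making sure the complement identities $A\cup(V\setminus B)=V\setminus(B\setminus A)$ and $A\cap(V\setminus B)=A\setminus B$ are applied correctly, and that both hypotheses $a\notin B$ and $b\notin A$ are genuinely used (they are — the first to place $a$ in $A\setminus B$, the second to place $b$ in $B\setminus A$, so that \emph{both} lower bounds $\delta(A\setminus B)\ge\lambda_a$ and $\delta(B\setminus A)\ge\lambda_b$ are available to tighten the submodularity inequality to equality). If one wanted a weighted-graph version the same proof goes through verbatim with $\delta$ the weighted cut function, since only symmetry and submodularity are used; here, with $G$ simple and unit capacities, there is nothing further to do. No potential-function or algorithmic ingredients are needed — this is purely the structural uncrossing fact, parallel to Lemma~\ref{union} but using intersection-with-complement instead of union.
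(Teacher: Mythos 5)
Your proof is correct: the uncrossing via submodularity applied to $A$ and $V\setminus B$ (equivalently, posimodularity of the symmetric cut function), together with the feasibility checks that $a\in A\setminus B$, $p\notin A\setminus B$, $b\in B\setminus A$, $p\notin B\setminus A$, forces both inequalities to be tight. The paper itself gives no proof (it cites Lemma 2.9 of \cite{abboud2020subcubic}), and your argument is the standard one used there, so nothing further is needed.
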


\subsection{Cut-equivalent trees}
A cut-equivalent tree is a tree $\gh$ on $V$ with weighted edges, such that for any pair $s, t\in V$, there is a minimum cut $(S, V\setminus S)$ in $G$ such that it is also a minimum cut in $\gh$ with the same cut value. Now let us turn to define some terminologies for cut-equivalent trees.

\paragraph*{Partition trees} A partition tree $\gh$ of graph $G$ is a tree whose nodes $U_1, U_2, \cdots, U_l$ represent disjoint subsets of $V$ such that $V = U_1\cup U_2\cup \cdots \cup U_l$. For each node $U$ of $\gh$, the auxiliary graph $G_\gh[U] = (V_\gh[U], E_\gh[U])$ of $U$ is built by contracting each component of $\gh\setminus \{U \}$ into a single vertex in the original graph $G$.

\paragraph*{Gomory-Hu algorithm} Gomory-Hu algorithm provides a flexible framework for constructing a cut-equivalent tree. The algorithm begins with a partition tree $\gh$ which is the single node that subsumes the entire vertex set $V$, and creates more nodes iteratively by refining its nodes. In each iteration, the algorithm picks an arbitrary node that represent a non-singleton subset $U\subseteq V$, and selects two arbitrary vertex $s, t\in U$. Then, compute the minimum cut $(S, V_\gh[U]\setminus S)$ between $s, t$ in the auxiliary graph $G_\gh[U]$. Finally, split node $U$ into two nodes that correspond to subsets $S\cap U$ and $(V_\gh[U]\setminus S)\cap U$ respectively, connected by an edge with weight equal to the value of the cut $(S, V_\gh[U]\setminus S)$ in $G_\gh[U]$. For each node $W$ that was $U$'s neighbor on $\gh$, reconnect $W$ to node $S\cap U$ if $S$ contains the contracted node that subsumes $W$; otherwise reconnect $W$ to node $(V_\gh[U]\setminus S)\cap U$.

A tree is called \textbf{GH-equivalent}, if it is a partition tree that can be constructed during Gomory-Hu algorithm by certain choice of nodes $U$ to split and pairs of vertices $s, t\in U$.

\paragraph*{Refinement with respect to subsets}  Consider a partition tree $\gh$ which is GH-equivalent. Let $U$ be one of $\gh$'s node and let $R\subseteq U$ be a subset. A refinement of $\gh$ with respect to $R$ is to repeatedly execute a sequence of Gomory-Hu iterations by always picking two different vertices from $s, t\in R$ that are currently in the same node of $\gh$ and refine $\gh$ using a minimum $(s, t)$-cut. So after the refinement of $\gh$ with respect to $R$, $\gh$ is still GH-equivalent.

\begin{lemma}[\cite{granot1986multi}]\label{refine}
	For any node $U$ of $\gh$ and any subset $R\subseteq U$, a refinement of $\gh$ with respect to $R$ can be computed in time $\widetilde{O}(|R|\cdot \mf(\vol_G(U)))$.
\end{lemma}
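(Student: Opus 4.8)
I would split the running time into the number of Gomory--Hu iterations the refinement performs and the cost of a single iteration. For the count, use the potential $\Phi=\#\{\text{nodes of }\gh\text{ that currently contain a vertex of }R\}$: initially all of $R$ lies inside $U$ so $\Phi=1$, and always $\Phi\le|R|$. Each iteration picks $s,t\in R$ lying in a common node $W$ and splits $W$ along a minimum $(s,t)$-cut; since the cut separates $s$ from $t$, each of the two new nodes contains a vertex of $R$ (one contains $s$, the other contains $t$), so $\Phi$ increases by exactly one. Hence the refinement performs at most $|R|-1$ iterations and stops once $R$ is fully separated.

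For a single iteration, the point is that only nodes $W\subseteq U$ are ever split, because the separating pairs all come from $R\subseteq U$ and the part of $\gh$ lying outside $U$ is frozen throughout. For such a $W$, the auxiliary graph $G_\gh[W]$ is a contraction of $G_\gh[U]$ (the already-detached pieces of $U$ together with the frozen external subtrees collapse into single vertices), so by monotonicity of $\mf$ its max-flow costs at most $\mf(|E_\gh[U]|,|V_\gh[U]|)$; equivalently, using Gusfield's contraction-avoiding variant of the Gomory--Hu algorithm, all the needed minimum cuts can be computed in one fixed ambient graph, after which the partition tree is updated by $O(|R|)$ pointer changes per iteration whose correctness (the tree stays GH-equivalent) is exactly the uncrossing argument of Lemma~\ref{union} and Lemma~\ref{minus}. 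It then remains to bound by $\widetilde O(\vol_G(U))$ the size of the graph these flows are run on: the edges of $G_\gh[U]$ touching $U$ number at most $\vol_G(U)$, while the rest of $G_\gh[U]$, which only records how $U$ attaches to the outside, can be replaced by a cut-sparsifier of size $\widetilde O(\vol_G(U))$ — here one uses that the subtree of $\gh$ hanging off $U$ is already a compact cut-equivalent encoding of the external part, so collapsing each external component into a sparsifier preserves every minimum $(s,t)$-cut with $s,t\in U$. Each iteration then costs $\widetilde O(\mf(\vol_G(U)))$ for the flow plus $\widetilde O(\vol_G(U))$ for building/updating the sparsifier and the tree pointers, and multiplying by $|R|-1$ (and using $\mf(\vol_G(U))\ge\vol_G(U)$ together with $|R|\le|U|\le\vol_G(U)$) yields the claimed $\widetilde O(|R|\cdot\mf(\vol_G(U)))$.

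The step I expect to be the main obstacle is producing this $\widetilde O(\vol_G(U))$-size sparsifier and proving it keeps the minimum $(s,t)$-cuts for $s,t\in U$: the naive contracted graph $G_\gh[U]$ can carry $\Theta(n^2)$ edges between its collapsed external components even when $\vol_G(U)$ is tiny — e.g.\ $G=K_n$ with a GH-equivalent partition tree whose only non-leaf node is a fixed pair $\{u_1,u_2\}$ and whose leaves are the remaining singletons — so one genuinely has to exploit the Gomory--Hu structure of the external part rather than retain all of its edges, and one must maintain the successive sparsifiers for the sub-nodes $W\subseteq U$ incrementally within the same near-linear-per-iteration budget instead of rebuilding them from $G$ each time.
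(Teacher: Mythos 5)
Your iteration-count argument is right: each Gomory--Hu step splits a node along a cut separating two vertices of $R$, so a potential counting the number of current nodes meeting $R$ strictly increases and the refinement halts after at most $|R|-1$ steps. (Note the paper itself does not prove this lemma; it cites \cite{granot1986multi}, so there is no in-paper argument to match yours against.)

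The gap is exactly the one you flag, and you should not understate it: the per-iteration cost bound is not established. You correctly observe that $|E_\gh[W]|$ can be as large as $\Theta(n^2)$ even when $\vol_G(U)=O(n)$ (your $K_n$ example, with $U$ a pair and all other vertices singleton leaves of $\gh$, is indeed a GH-equivalent partition tree), so running a flow on $G_\gh[W]$ verbatim gives $\mf(n^2,n)$, not $\mf(\vol_G(U))$. But the fix you propose --- "replace the external part of $G_\gh[W]$ by a cut-sparsifier of size $\widetilde O(\vol_G(U))$ that preserves every minimum $(s,t)$-cut with $s,t\in U$" --- is asserted, not proved, and is not a black-box application of any standard tool: cut sparsifiers \`a la Bencz\'ur--Karger preserve cut values only up to a multiplicative $(1\pm\eps)$ error, whereas here one needs the exact min-cut (and in fact its exact vertex partition, to know where to reattach the contracted pieces). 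You would need to show that the Gomory--Hu structure of the external subtrees lets you replace them by an \emph{exact}, \emph{small} gadget for all $(s,t)$-cuts with $s,t\in U$, and then that these gadgets can be maintained across the $|R|-1$ splits within the stated budget. Neither is shown. Relatedly, the appeal to Gusfield's contraction-free variant is misplaced: Gusfield runs every max-flow on the original graph $G$, which gives a per-step cost of $\mf(m)$, not $\mf(\vol_G(U))$, so it does not substitute for the missing sparsification. As written, the proposal establishes the $|R|$ factor but leaves the $\mf(\vol_G(U))$ factor unproven, which you yourself identify as the crux.
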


\begin{lemma}[see definition of partial trees in \cite{abboud2020subcubic}]\label{partial-tree}
	After the refinement on a GH-equivalent tree $\gh$ with respect to $R$, for any $a, b\in R$, let $N_a\ni a$ and $N_b\ni b$ be nodes of $\gh$. Then, the min-cut of $(N_a, N_b)$ in $\gh$ is a min-cut in $G$ for $(a, b)$ as well.
\end{lemma}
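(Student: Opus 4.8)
The plan is to prove the statement by induction on the number of Gomory--Hu iterations performed during the refinement, maintaining throughout an invariant on the current tree $\gh$ that extends the standard correctness invariant of the Gomory--Hu algorithm so as to keep track of the terminal set $R$ (this is essentially the ``partial Gomory--Hu tree'' property). Call an edge $e$ of $\gh$ an \emph{$R$-edge} if both components of $\gh\setminus\{e\}$ meet $R$; for any edge $e$ let $(S_e, V\setminus S_e)$ be the bipartition of $V$ it induces and let $X_e, Y_e$ be the two nodes it joins, oriented so that $X_e\subseteq S_e$. The invariant is: \textup{(i)} $w_e = \out_G(S_e)$ for every edge $e$; \textup{(ii)} every edge $e$ has a \emph{witness pair} $x_e\in X_e$, $y_e\in Y_e$ such that $(S_e, V\setminus S_e)$ is a minimum $(x_e,y_e)$-cut in $G$; \textup{(iii)} if $e$ is an $R$-edge, then its witness pair can be chosen inside $R$. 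Parts \textup{(i)}--\textup{(ii)} are exactly the standard invariant of the Gomory--Hu algorithm, valid for the input GH-equivalent tree; part \textup{(iii)} is the new ingredient. I will also use the fact that when the refinement with respect to $R$ is complete, every node of $\gh$ contains at most one vertex of $R$ --- otherwise the refinement would not have stopped.

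Granting the invariant, the lemma follows by the usual Gomory--Hu argument, now run with terminal set $R$. Fix distinct $a, b\in R$ in distinct nodes $N_a, N_b$, let $N_a = Z_0, Z_1, \dots, Z_k = N_b$ be the tree path with edges $e_1, \dots, e_k$, and orient each $e_i$ so that $X_{e_i} = Z_{i-1}$, $Y_{e_i} = Z_i$. Each $e_i$ separates $a$ from $b$, hence is an $R$-edge and has a witness pair inside $R$. Since each $Z_i$ holds at most one vertex of $R$, these witness pairs chain up: the witness of $e_1$ in $Z_0$ is $a$, the witness of $e_k$ in $Z_k$ is $b$, and for $1\le i<k$ the witness of $e_i$ in $Z_i$ equals the witness of $e_{i+1}$ in $Z_i$; calling these common vertices $u_1, \dots, u_{k-1}$ and setting $u_0 = a$, $u_k = b$, we get $w_{e_i} = \mathrm{mincut}_G(u_{i-1}, u_i)$ for all $i$. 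Let $e^{\ast}$ be a minimum-weight edge on the path; the minimum cut of $(N_a, N_b)$ in $\gh$ is exactly $(S_{e^{\ast}}, V\setminus S_{e^{\ast}})$, an $(a,b)$-cut of value $w_{e^{\ast}}$, so $\mathrm{mincut}_G(a,b)\le w_{e^{\ast}}$. Conversely, any minimum $(a,b)$-cut $(C, V\setminus C)$ with $a\in C$ separates $u_{j-1}$ from $u_j$ for some $j$, hence $\mathrm{mincut}_G(a,b) = \out_G(C) \ge \mathrm{mincut}_G(u_{j-1}, u_j) = w_{e_j} \ge w_{e^{\ast}}$. Therefore $w_{e^{\ast}} = \mathrm{mincut}_G(a,b)$ and $(S_{e^{\ast}}, V\setminus S_{e^{\ast}})$ is a minimum $(a,b)$-cut in $G$, which is what is claimed.

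It remains to check that one refinement step preserves the invariant. Say we split a node $W$ containing $s, t\in R$ using a minimum $(s,t)$-cut $(P, V_\gh[W]\setminus P)$ of $G_\gh[W]$ with $s$ on the $P$-side, creating nodes $W_s = P\cap W$ and $W_t = (V_\gh[W]\setminus P)\cap W$ joined by a new edge $e_{st}$, and reattaching every former neighbour of $W$ to $W_s$ or $W_t$ according to which side of $P$ holds its contracted vertex. Each old edge $e$ induces the same bipartition of $V$ as before (edges away from $W$ are untouched; for an edge from $W$ to a neighbour reattached to, say, $W_s$, deleting the corresponding new edge splits off the same subtree, with $W_s\cup W_t = W$ and the remaining subtrees on the other side), so \textup{(i)} and the $R$-edge status of $e$ are preserved; if the witness endpoint $x_e$ now lies in the ``wrong'' half of $W$, the standard reattachment step of the Gomory--Hu analysis --- an uncrossing of the old witnessing cut with the new minimum $(s,t)$-cut, in the spirit of Lemmas~\ref{union} and~\ref{minus} --- shows that the same bipartition is still a minimum cut for the pair obtained by replacing $x_e$ with $s$ or with $t$, which keeps the witness inside $R$ when $e$ is an $R$-edge. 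For the new edge $e_{st}$, one uncrosses an arbitrary minimum $(s,t)$-cut of $G$ with the cuts witnessing the edges that attach $W$ to the components of $\gh\setminus\{W\}$ (submodularity of $\out_G$, as in the classical proof) to conclude that some minimum $(s,t)$-cut of $G$ keeps each such component on one side; such a cut projects to an $(s,t)$-cut of $G_\gh[W]$ of the same value, so $\mathrm{mincut}_{G_\gh[W]}(s,t) = \mathrm{mincut}_G(s,t)$ and the expansion $(\widehat{P}, V\setminus\widehat{P})$ of $(P, V_\gh[W]\setminus P)$ is a minimum $(s,t)$-cut of $G$ with $\out_G(\widehat{P}) = w_{e_{st}}$; since $s\in R\cap\widehat{P}$, $t\in R\cap(V\setminus\widehat{P})$ and $e_{st}$ is an $R$-edge, this verifies \textup{(i)}--\textup{(iii)} for $e_{st}$ with witness pair $(s,t)\in R\times R$. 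The base case is trivial: before the refinement all of $R$ lies in one node $U$, so there are no $R$-edges.

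I expect the crux to be the two uncrossing arguments in the inductive step --- that the minimum cut of the contracted auxiliary graph $G_\gh[W]$ is a genuine minimum $(s,t)$-cut of $G$, and that reattachment preserves a (now $R$-valued) witness pair for each old edge. Both are precisely the technical heart of the correctness proof of the Gomory--Hu algorithm and rest on uncrossing minimum cuts via submodularity of the cut function. Everything particular to the partial refinement with respect to $R$ is light bookkeeping: the newly created edge records the $R$-pair $(s,t)$, every old edge keeps its bipartition, and whenever a witness must be updated it is updated to $s$ or $t$, so $R$-witnesses are never lost.
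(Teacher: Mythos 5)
The paper does not prove this lemma at all---it is stated with the bracket ``see definition of partial trees in \cite{abboud2020subcubic}'' and the correctness is deferred to that reference---so there is no paper proof to compare against, and your proposal must be judged on its own.

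Your argument is correct. The structure (an invariant-based induction on Gomory--Hu refinement steps, with the standard ``every tree edge has a witness pair whose $G$-min-cut is the tree bipartition'' invariant augmented by a bookkeeping condition that $R$-edges have their witness pair inside $R$) is exactly the right way to formalize the ``partial Gomory--Hu tree'' property, and the chaining argument correctly uses the fact that after the refinement each node contains at most one vertex of $R$: an interior node $Z_i$ on the $N_a$--$N_b$ path is an endpoint of two $R$-edges $e_i, e_{i+1}$, hence contains the $Z_i$-side $R$-witness of each, and those must coincide since $|Z_i\cap R|\le 1$; with $u_0=a$, $u_k=b$ this gives $w_{e_i}=\mathrm{mincut}_G(u_{i-1},u_i)$, and the usual two-sided argument (tree cut at the minimum-weight edge is an $(a,b)$-cut of that value; any $(a,b)$-cut separates some consecutive pair $u_{j-1},u_j$) finishes. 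Your inductive step correctly identifies and appeals to the two classical uncrossing facts of the Gomory--Hu analysis---that the auxiliary-graph min $(s,t)$-cut expands to a genuine min $(s,t)$-cut of $G$, and that when a reattached neighbour's old witness $x_e$ lands in the wrong half of $W$ it can be swapped for $s$ or $t$ (via the non-crossing lemma applied to the min $(s,t)$-cut, together with $\mathrm{mincut}(s,t)\ge\mathrm{mincut}(x_e,y_e)$ since the $(s,t)$-cut separates $x_e$ from $y_e$). You present these as sketches rather than full derivations, which is reasonable since they are the standard technical content of the Gomory--Hu theorem; were this to appear as a standalone proof one would want to spell them out (or cite them precisely), but the logic as given is sound and complete in outline. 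One minor point worth making explicit: you should observe that the statement is only nontrivial when $a\ne b$, in which case the refinement guarantees $N_a\ne N_b$; and that an $R$-edge's witness lying in $R$ forces the interior path nodes to actually contain an $R$-vertex, which is what makes the chaining well-defined.
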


\paragraph*{$k$-partial trees} A $k$-partial tree $\gh$ is a GH-equivalent partition tree such that all vertices $u\in V$ such that $\deg_G(u)\leq k$ are singletons of $\gh$. The following lemma states that $k$-partial trees always exist and can be computed efficiently for small $k$'s.

\begin{lemma}[\cite{hariharan2007mn}]\label{k-partial}
	There is an algorithm that, given an undirected graph with unit edge capacities and parameter $k$ on $n$ vertices, computes a $k$-partial tree in time $\min\{\widetilde{O}(nk^2), \widetilde{O}(mk)\}$.
\end{lemma}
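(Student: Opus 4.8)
\quad
Write $R=\{u\in V:\deg_G(u)\le k\}$ for the set of vertices that must become singletons, and fix an arbitrary pivot $r\in V$. The guiding point is that $\lambda_G(u,t)\le\deg_G(u)\le k$ for every $u\in R$ and every $t$, so only minimum cuts of value at most $k$ are ever relevant; the plan is to build the tree out of such cuts and to compute all of them together rather than one at a time.

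\emph{Reduction to a sparse graph.} If $m>nk$, I would first replace $G$ by a Nagamochi--Ibaraki sparse $(k{+}1)$-edge-connectivity certificate $H\subseteq G$: it has $O(nk)$ edges, is computable in $O(m)$ time, and satisfies $\min\{\out_G(S),k{+}1\}=\min\{\out_H(S),k{+}1\}$ for every $S\subseteq V$ -- an identity inherited by any two graphs obtained from $G$ and $H$ by contracting the same disjoint vertex subsets. Hence every minimum cut of value at most $k$, whether in $G$ or in any auxiliary graph $G_\gh[U]$ that arises during the construction, remains a minimum cut in $H$ (resp.\ $H_\gh[U]$), so the whole construction can be carried out inside the sparse graph. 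This step is exactly what unifies the two claimed bounds: it suffices to work in a graph with $\min\{m,O(nk)\}$ edges, so the target becomes $\widetilde O(\min\{mk,nk^2\})$.

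\emph{The laminar structure to be built.} For $u\in R$ let $S_u$ be the side containing $u$ of the latest minimum $(u,r)$-cut, i.e.\ the inclusion-minimal $u$-side over all minimum $(u,r)$-cuts. I claim $\{S_u\}_{u\in R}$ is laminar, which follows from Lemmas~\ref{union} and~\ref{minus}: if $S_u\cap S_v\neq\emptyset$ but neither set contains the other, then either $u\notin S_v$ and $v\notin S_u$, whence Lemma~\ref{minus} makes $S_u\setminus S_v$ a strictly smaller minimum $(u,r)$-cut still containing $u$, contradicting minimality of $S_u$; or $u\in S_v$ (the case $v\in S_u$ being symmetric), whence Lemma~\ref{union} makes $S_u\cup S_v$ a minimum $(v,r)$-cut strictly larger than $S_v$, contradicting minimality of $S_v$. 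The Hasse diagram of $\{S_u\}_{u\in R}$, with the node of $S_u$ joined to that of its parent set by an edge of weight $\lambda_G(u,r)$, is precisely the partition tree produced by the Gomory--Hu algorithm when one always splits a pair $\{u,r\}$ with $u\in R$ -- equivalently the refinement of Lemma~\ref{refine} with respect to $R$ using a common pivot -- so it is GH-equivalent and separates every two vertices of $R$. A single atom of the family may still contain several vertices of $R$, but only when they share the same set $S_u$; I would recurse inside the corresponding (contracted and re-sparsified) auxiliary graph to separate those, and finish with one isolating cut of value $\le\deg_G(u)\le k$ per $u\in R$ to detach it from the non-terminal vertices of its node. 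Every cut used has value at most $k$ and every split is legal, so the outcome is a GH-equivalent partition tree in which all of $R$ are singletons, i.e.\ a $k$-partial tree.

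\emph{Computing the cuts in bulk -- the crux.} Performed naively, the scheme above runs $\Theta(|R|)$ truncated max-flows, an $|R|$-factor more than we can afford, so the heart of the proof -- and the step I expect to be the main obstacle -- is to compute the entire family $\{S_u\}_{u\in R}$, i.e.\ all minimum $r$-cuts of value at most $k$, in the time of only $\widetilde O(1)$ truncated max-flows. This is where tree packing enters: using Gabow's matroid-union algorithm one builds a maximal packing of $\Theta(k)$ edge-disjoint spanning trees in time $\widetilde O(mk)$ (equivalently $\widetilde O(nk^2)$ on the $O(nk)$-edge certificate), and from such a packing one can read off the latest minimum $r$-cut at every vertex whose value is at most the packing size; feeding this into an $O(\log n)$-depth recursion over the atoms -- recomputing a packing inside each auxiliary graph after re-sparsifying it so its edge count stays $O(nk)$, and checking that the recursive instances have bounded total size -- yields the $k$-partial tree within the stated budget. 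This bulk computation and the accompanying bookkeeping for the auxiliary graphs is the technical core of \cite{hariharan2007mn}, built on \cite{gabow1995matroid} and \cite{edmonds2003submodular}.
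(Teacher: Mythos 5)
The paper does not prove this statement---it is cited verbatim from Bhalgat, Hariharan, Kavitha, and Panigrahi \cite{hariharan2007mn}, so there is no in-paper argument to compare against. As a reconstruction of the cited result, your sketch has the right skeleton: a Nagamochi--Ibaraki $(k{+}1)$-certificate on $O(nk)$ edges (which, as you observe, collapses the two stated bounds into one), the laminar family $\{S_u\}_{u\in R}$ of latest min-cut sides toward a fixed pivot $r$ justified via Lemmas~\ref{union} and~\ref{minus}, and Gabow's tree packing of $\Theta(k)$ spanning trees to certify and extract all cuts of value at most $k$ in bulk. You are also right to flag the step from ``a $\Theta(k)$-size tree packing'' to ``the latest minimum $(u,r)$-cut for every $u\in R$'' as the genuine technical core of \cite{hariharan2007mn}: there it is carried out via a 2-respecting-cut argument over the packed trees together with link-cut-tree bookkeeping, and the recursion over atoms is controlled by a contraction accounting showing the total size of all auxiliary graphs stays $\widetilde{O}(m)$. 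Neither of those two pieces is actually supplied in your sketch, so it should be read as a faithful high-level outline of the citation rather than a self-contained proof; given that the paper itself treats the lemma as a black box, that level of detail is appropriate here.
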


\subsection{Expander decomposition}
For any pair of disjoint sets $S, T\subseteq V$, let $E_G(S, T)$ be the set of edges between $S, T$ in $G$. The conductance of a cut $(S, V\setminus S)$ is $\Phi_G(S) = |E_G(S, V\setminus S)| / \min\{\vol_G(S), \vol_G(V\setminus S) \}$, and the conductance of a graph $G$ is defined as $\Phi_G = \min_{S} \Phi_G(S)$. A graph $G$ is a $\phi$-expander if $\Phi_G\geq \phi$.

For any vertex subset $S\subseteq V$ and positive value $x>0$, let $G[S]^x$ be the subgraph induced on $S$ where we add $\lceil x\rceil$ self-loops to each vertex $v\in S$ for every boundary edge $(v, w), w\notin S$. As an example, in graph $G[S]^1$ the degrees of all vertices are the same as in the original graph $G$.

We need a strong expander decomposition algorithm from a recent work~\cite{goranci2021expander}.
\begin{definition}[boundary-linked expander decomposition \cite{goranci2021expander}]\label{boundary}
	Let $G = (V, E)$ be a graph on $n$ vertices and $m$ edges, and $\alpha, \phi\in (0, 1)$ be parameters. An $(\alpha, \phi)$-expander decomposition of $V$ consists of a partition $\mathcal{C} = \{C_1, C_2, \cdots, C_k \}$ of $V$ such that the following holds.
	\begin{enumerate}[(1)]
		\item $\sum_{i=1}^k\out_G(C_i)\leq \log^4n \cdot \phi m$.
		\item For any $i$, $G[U_i]^{\alpha / \phi}$ is a $\phi$-expander.
		\item For any $i$, $\out_G(C_i)\leq \log^7n \cdot\phi\vol_G(C_i)$.
	\end{enumerate}
\end{definition}
In the original paper~\cite{goranci2021expander}, the upper bounds are stated with $O(\cdot)$ notations that hide constant factors; here we simply raise the exponent of log-factors to simplify the notations.

\begin{lemma}[\cite{goranci2021expander}]
	Given any unweighted graph $G= (V, E)$ on $n$ vertices and $m$ edges, for any $\alpha, \phi\in (0, 1)$ such that $\alpha \leq 1 / \log^\const m$ where $\const$ is a certain constant, a $(\alpha, \phi)$-expander decomposition can be computed in $\widetilde{O}(m / \phi)$ with high probability.
\end{lemma}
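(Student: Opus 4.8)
The plan is to follow the recursive sparse-cut scheme underlying modern expander-decomposition algorithms, but to maintain the boundary self-loops throughout the recursion so that the surviving pieces are boundary-linked in the sense of Definition~\ref{boundary}. Throughout, a ``cluster'' is a subset $W\subseteq V$ carrying $\ceil{\alpha/\phi}$ self-loops on each of its boundary edges $E_G(W,V\setminus W)$, so that the cluster's self-loop graph is exactly $G[W]^{\alpha/\phi}$; the recursion starts from $W=V$ (no boundary, no loops).

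First I would build an approximate most-balanced sparse-cut oracle on a cluster $W$ by running the cut-matching game of Khandekar, Rao and Vazirani on $G[W]^{\alpha/\phi}$, implementing each of its $O(\log^2 n)$ rounds with one max-flow call. Since the flow routed in a round has value $O(\vol_G(W))$ along paths of length $\widetilde{O}(1/\phi)$, a round costs $\widetilde{O}(\vol_G(W)/\phi)$ using a local (unit-capacity) flow subroutine in the style of Henzinger, Rao and Wang. The oracle either (a) certifies that $G[W]^{\alpha/\phi}$ is a $\phi/\mathrm{polylog}(n)$-expander, in which case --- absorbing the polylog conductance loss into the log-exponents of Definition~\ref{boundary} --- we finalize $W$ as an output piece $C_i$; or (b) returns a cut $(A,B)$ of $W$ that is $\phi$-sparse in $G[W]^{\alpha/\phi}$, and we recurse on $A$ and on $B$ after installing $\ceil{\alpha/\phi}$ self-loops on each newly exposed boundary edge.

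The step I expect to be the real obstacle --- and the one on which both the running-time bound and the local boundary bound (3) hinge --- is preventing the recursion from degenerating when the cuts returned in case (b) are badly unbalanced. Here I would interleave the \emph{trimming} (pruning) technique of Saranurak and Wang: if $(A,B)$ is unbalanced with $\vol_G(B)$ small, one more max-flow on $G[W]^{\alpha/\phi}$ trims the large side $A$ down to a subset $A'$ with $\vol_G(A\setminus A')=\widetilde{O}(\vol_G(B)/\alpha)$ such that $G[A']^{\alpha/\phi}$ --- with loops updated for the edges of $E_G(A',V\setminus A')$ --- is a genuine $\phi$-expander; we then finalize $A'$ and recurse only on $W\setminus A'$, whose volume is at most a $\mathrm{polylog}(n)^{-1}$ fraction of $\vol_G(W)$. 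The hypothesis $\alpha\le 1/\log^{\const}m$ enters precisely here: the product of $\alpha$ with the polylog conductance loss of the cut-matching game must stay below $1$ so that the $\ceil{\alpha/\phi}$-weighted loops that trimming keeps adding do not overwhelm the expansion one is trying to certify, and keeping the loop bookkeeping consistent across all levels of the recursion is the delicate part.

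Finally I would read off the three guarantees. The partition property is immediate, since every step replaces a cluster by two subclusters (or by a finalized piece plus one subcluster). For the running time, each cluster costs $\widetilde{O}(\vol_G(W)/\phi)$, the clusters we recurse on shrink geometrically in volume (balanced splits halve it up to a polylog factor, trims peel off a polylog fraction), so the recursion has $\widetilde{O}(1)$ depth, the total volume at each depth is $O(m)$, and the work telescopes to $\widetilde{O}(m/\phi)$. For property (1), every boundary edge of a final piece was exposed by a sparse cut of conductance $\le\phi$, so $\sum_i\out_G(C_i)$ is at most $\phi m$ times the effective recursion depth, which is only a polylog factor, giving the bound $\log^4 n\cdot\phi m$. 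Property (2) is exactly the output condition of case (a) together with the trimming guarantee. Property (3) is maintained as a per-cluster invariant rather than only in aggregate: when a piece is finalized, its boundary has been charged against its own volume by the same sparse-cut/trim analysis, and the larger log-exponent $\log^7 n$ absorbs the losses accumulated over the polylog many levels that could have touched it. The high-probability qualifier comes solely from the randomized cut-matching game.
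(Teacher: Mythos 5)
This lemma appears in the paper purely as a citation of \cite{goranci2021expander}; no proof is given in the text, so there is no ``paper's proof'' to compare against. Your plan is therefore a reconstruction of the argument in the cited source, and as such it captures the right architecture: a recursive sparse-cut search driven by the cut-matching game on the self-looped graph $G[W]^{\alpha/\phi}$, bounded-height local flow subroutines giving $\widetilde{O}(\vol_G(W)/\phi)$ per round, Saranurak--Wang-style trimming to prevent degeneracy on unbalanced cuts, and the role of $\alpha\le 1/\log^{\const}m$ in keeping the self-loop volume accumulated over the $\mathrm{polylog}(n)$ recursion levels subordinate to the genuine volume. One point that needs tightening: you propose to ``absorb the polylog conductance loss into the log-exponents of Definition~\ref{boundary},'' but property (2) of that definition has no log-factor slack --- it demands that $G[C_i]^{\alpha/\phi}$ be a genuine $\phi$-expander, not a $\phi/\mathrm{polylog}(n)$-expander. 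The correct handling, and the one used in the source, is to run the recursion internally targeting a conductance a $\mathrm{polylog}(n)$ factor larger than the $\phi$ one ultimately wants to certify, so that after the KRV approximation loss the finalized pieces still meet the clean $\phi$-expansion requirement; the multiplicative polylog overhead then surfaces in the inter-cluster edge count and in the per-cluster boundary bound, which is precisely where the $\log^4 n$ and $\log^7 n$ exponents of properties (1) and (3) live. With that correction your sketch is consistent with \cite{goranci2021expander}.
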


\subsection{Isolating cuts}
\begin{lemma}[isolating cuts~\cite{abboud2020subcubic}]\label{isolate}
	Given an undirected edge-weighted graph $H = (X, F, \omega)$, a pivot vertex $p\in X$, and a set of terminal vertices $T\subseteq X$. For each $u\in T$, let $(K_u, X\setminus K_u)$ be the latest minimum $(u, p)$-cut for each $u\in T$. Then, in time $\widetilde{O}(\mf(|F|, |X|))$ we can compute $|T|$ disjoint sets $\{K^\prime_u\}_{u\in T}$ such that for each $u\in T$, if $K_u\cap T = \{u\}$ then $K^\prime_u = K_u$.
\end{lemma}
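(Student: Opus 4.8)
The plan is to adapt the isolating-cut framework of \cite{li2020deterministic,abboud2020subcubic} while keeping the pivot $p$ on the ``outside'' throughout. First I would give every vertex of $T\cup\{p\}$ a distinct binary identifier of length $r=\lceil\log_2(|T|+1)\rceil$. For each bit position $i\in[r]$ I would contract the two bit-$i$ classes of $T\cup\{p\}$ into single vertices and run one max-flow in $H$ to obtain a minimum cut $(S_i,X\setminus S_i)$ separating these two classes; this is $r=O(\log|X|)$ max-flow calls, costing $\widetilde O(\mf(|F|,|X|))$. Let $D=\bigcup_{i\in[r]}\partial S_i$ be the union of the corresponding cut-edge sets. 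Because any two distinct elements of $T\cup\{p\}$ differ in some bit, they are separated by $D$, so in $(X,F\setminus D)$ every element of $T\cup\{p\}$ lies in its own connected component; for $u\in T$ let $R_u$ be that component. The sets $\{R_u\}_{u\in T}$ are pairwise disjoint and none contains $p$. Finally, for each $u\in T$ I would contract $X\setminus R_u$ into one sink $t_u$ (summing parallel-edge capacities) and let $K'_u$ be the source side of the latest minimum $(u,t_u)$-cut of the resulting graph $H_u$; by construction $K'_u\subseteq R_u$ and $p\notin K'_u$, so the output family is disjoint. Since each edge leaving $R_u$ lies in $D$, we have $\sum_u\out_H(R_u)\le 2|D|\le 2r|F|$; hence the $H_u$'s have $\widetilde O(|F|)$ edges and $\widetilde O(|X|)$ vertices in total, and (using that $\mf$ is at least linear and mildly subadditive, so that $\sum_u\mf(|F_u|,|X_u|)=\widetilde O(\mf(\sum_u|F_u|,\sum_u|X_u|))$) the $|T|$ latest-min-cut computations on the $H_u$ add only $\widetilde O(\mf(|F|,|X|))$ more.

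For correctness I must show that if $K_u\cap T=\{u\}$ then $K'_u=K_u$. Fix such a $u$ and set $\lambda=|\partial K_u|$. Let $S_i$ be the side of the $i$-th bit cut that contains $u$. Since $K_u$ contains $u$ but no other element of $T\cup\{p\}$, for every $i$ the set $K_u\cup S_i$ still separates the two bit-$i$ classes, so $|\partial(K_u\cup S_i)|\ge|\partial S_i|$, and submodularity of the (weighted) cut function gives $|\partial(K_u\cap S_i)|\le|\partial K_u|$. Iterating over $i=1,\dots,r$, the set $Y=K_u\cap\bigcap_i S_i$ still contains $u$, is disjoint from $(T\cup\{p\})\setminus\{u\}$, and has $|\partial Y|\le\lambda$; as $(Y,X\setminus Y)$ separates $u$ from $p$, in fact $|\partial Y|=\lambda$, so $Y$ is a minimum $(u,p)$-cut. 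Let $K''_u$ be the connected component of $u$ in $H[Y]$: then $\partial K''_u\subseteq\partial Y$, so $K''_u$ is again a minimum $(u,p)$-cut; and no edge of $K''_u$ lies in $D$ (both endpoints are on the same side of every $S_i$), so $K''_u$ is contained in a single component of $(X,F\setminus D)$, namely $R_u$. By the defining property of the latest minimum $(u,p)$-cut, $K_u$ is contained in the source side of every minimum $(u,p)$-cut, so $K_u\subseteq K''_u\subseteq Y\subseteq K_u$, whence $K_u=K''_u\subseteq R_u$. Now that $K_u\subseteq R_u$ and $p\notin R_u$, the cut $(K_u,X\setminus K_u)$ survives the contraction of $X\setminus R_u$ and is a minimum $(u,t_u)$-cut of $H_u$, while conversely every minimum $(u,t_u)$-cut of $H_u$ lifts back to a minimum $(u,p)$-cut of $H$; applying the latest-min-cut property once more, this time inside $H_u$, identifies $K'_u$ with $K_u$.

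The step I expect to be the main obstacle is this uncrossing argument — establishing that intersecting the latest minimum $(u,p)$-cut with all of the bit-cut sides $S_i$ and then passing to $u$'s component neither increases the cut value nor pushes $u$'s side out of $R_u$ — because it is precisely what allows a polylogarithmic number of global max-flows plus cheap, vertex-disjoint local max-flows to stand in for one max-flow per terminal. The remaining work, namely the size bookkeeping that $\sum_u|F_u|=\widetilde O(|F|)$ and the accompanying subadditivity assumption on $\mf$, is routine.
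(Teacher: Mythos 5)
The paper only cites Lemma~\ref{isolate} from \cite{abboud2020subcubic,li2020deterministic} and gives no in-text proof, so there is no paper argument to compare against; your reconstruction correctly reproduces the standard isolating-cuts construction from those references (bit-label min-cuts, submodular uncrossing to show the latest min-$(u,p)$-cut with $K_u\cap T=\{u\}$ lands entirely inside $u$'s component $R_u$, then per-component latest min-cut computations, with the usual superadditivity convention on $\mf$ for the running-time bookkeeping). The argument is sound and is essentially the same approach as in the cited works.
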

\section{Quadratic time cut-equivalent  tree under Hypothesis~\ref{hypo}}
\subsection{The main algorithm}
In this section we try to prove the first half of Theorem~\ref{cond}. Let $G = (V, E)$ denote the simple graph as our input data. Define some parameters: $\phi = \frac{1}{10\log^{\const + 10}n}$ is a global conductance parameter that is used to construct expander decompositions, and $r = 10\log^5 n$ is a sampling parameter which is needed when choosing pivots; here $\const$ is the same constant as in Definition~\ref{boundary}. Without loss of generality, assume $\sqrt{n}$ is an integral power of $2$. Define a degree set $\mathcal{D} = \{\sqrt{n},  2\sqrt{n}, 2^2\sqrt{n}, \cdots, n \}$.

\paragraph*{Preparation} Throughout the algorithm, $\gh$ will be the cut-equivalent tree under construction, where each of $\gh$'s node will represent a subset of vertices of $V$. As a preparation step, compute a $(\phi, \phi)$-expander decomposition on $G$ and obtain a partitioning $\mathcal{C} = \{C_1, C_2, \cdots, C_k \}$ of $V$. Categorize clusters in $\mathcal{C}$ according to their sizes: for each $2^i$, define $\mathcal{C}_{i}$ to be the set of clusters whose sizes are within interval $[2^i, 2^{i+1})$.

At the beginning, initialize $\gh$ to be a $\sqrt{n}$-partial tree by applying the algorithm from~\cite{hariharan2007mn} that takes running time $\widetilde{O}(n^2)$. 

\paragraph*{Iteration} In each round, we will divide simultaneously all nodes of $\gh$ which contains at least $20r$ vertices in $V$. In the end, the total number of rounds will be bounded by $\widetilde{O}(1)$. To describe our algorithm, let us focus on any single node $N\subseteq V$ of $\gh$ whose size $|N|$ is at least $20r$. The first step is to refine the partition of $N$ by a set of random pivots. More specifically, sample a pivot subset $R\subseteq N$ of size $10r$ by picking each vertex with probability proportional to its degree in $G$; more precisely, repeatedly sample for $10r$ times a vertex from $N$ where each vertex $v\in N$ is selected with probability $\deg_G(v) / \vol_G(N)$.

Then, refine the node $N$ of $\gh$ by computing a partial tree with respect to $R$ using Lemma~\ref{refine}, which further divides $N$ into several subsets each containing a distinct vertex from $R$. After applying this pivot-sampling \& refining step to each of the original node of $\gh$, $\gh$ has undergone one pass of partition, and now each node $U$ of $\gh$ is associated with a unique pivot vertex $p\in U\cap R$.

For the rest, let us focus on each node $U$ of the current $\gh$ as well as its pivot $p$, such that $U\subseteq N$ is a subdivision of the previous node $N$ but $\vol_G(U)\geq 0.5\vol_G(N)$. For each $k\in \mathcal{D}$, define $U_k = \{u\in U\mid \deg_G(u)\in [k, 2k) \}$, so $U = \bigcup_{k\in \mathcal{D}}U_k$. Take a parameter $d\in \mathcal{D}$ such that $d|U_d|$ is maximized, namely $d\in\arg\max_{k\in \mathcal{D}} k|U_k|$. Therefore, $2d|U_d|\geq \vol_G(U)/\log n$. Next, for each index $i$, define $\cnt[i]$ to be the number of vertices from $U_d$ that lie within clusters from $\mathcal{C}_i$. Take $s = 2^{i_U}$ such that $\cnt[i_U]$ is maximized.

For each cluster $C\in \mathcal{C}_{i_U}$ such that $C\cap U_d$ is nonempty, conduct the expander search routine described in the next subsection (Algorithm~\ref{exp-search}) to compute cuts $(K_u, V_\gh[U]\setminus K_u)$ in the auxiliary graph $G_\gh[U]$ for a set $W_C$ of vertices $u\in W_C\subseteq C\cap U_d$ with respect to pivot $p$; so $u\in K_u, p\in V_\gh[U]\setminus K_u$. It will be guaranteed that for each $u\in W_C$, we have $\vol_G(K_u\cap U)\leq 0.5\vol_G(N)$. In the end, define $W = \bigcup_{C\in\mathcal{C}_{i_U}}W_C$.

We will prove that all $(K_u, V_\gh[U]\setminus K_u)$ are latest min-cuts for $(u, p)$. Since all latest minimum cuts $(K_u, V_\gh[U]\setminus K_u)$ are with respect to $p$, they should form a laminar family. Then, for each $u\in W$ such that $K_u$ is maximal in the laminar family, split $K_u\cap U$ off the node $U$ and create a new node for vertex set $K_u\cap U$. Since we always take maximal $K_u$'s, all of these sets are disjoint in $V_\gh[U]$, so the creation of new nodes on $\gh$ is well-defined. Pseudo-code \textsf{CondGomoryHu} summarizes our algorithm.

\begin{algorithm}
	\caption{\textsf{CondGomoryHu}$(G = (V, E))$}
	initialize a partition tree $\gh$, as well as parameters $\phi, r$\;
	\While{$\exists N\subseteq V$, $U$ a node of $\gh$, $|N| \geq 20r$}{
		\For{node $N$ of $\gh$ with $|N|\geq 20r$}{
			repeat for $10r$ times: each time we sample a vertex $u\in N$ with probability $\frac{\deg_G(u)}{\vol_G(N)}$, and let the sampled set be $R$\;
			call Lemma~\ref{refine} on node $N$ with respect to $R$\;
			\For{node $U\subseteq N$ of $\gh$ such that $\vol_G(U) > 0.5\vol_G(N)$}{
				take $d\in\mathcal{D}$ such that $d|U_d|$ is maximized\;
				take $s = 2^{i_U}$ such that $\cnt[i_U]$ is maximized\;
				\For{each $C\in \mathcal{C}_{i_U}$}{
					run expander search on $C$ within node $U$ to compute a subset $W_C\subseteq C\cap U_d$, and the latest min-cuts $(K_u, V_\gh[U]\setminus K_u)$ for each $u\in W_C$\;
				}
				define $W = \bigcup_{C\in\mathcal{C}_{i_U}}W_C$\;
				for each $u\in W$ such that $K_u$ is maximal, split $K_u\cap U$ off of $U$ and create a new node on $\gh$\;
			}
		}
	}
	\For{node $N$ of $\gh$ such that $|N| < 20r$}{
		repeatedly refine $N$ using the generic Gomory-Hu steps until all nodes are singletons\;
	}
	\Return $\gh$ as a cut-equivalent tree\;
\end{algorithm}
\subsection{Finding latest min-cuts in expanders}
Our algorithm is similar to the one from \cite{abboud2020subcubic}. The input to this procedure is a node $U\subseteq V$ of the current partition tree $\gh$ under construction, together with parameters $s, d$ defined previously, as well as an expander $C\in \mathcal{C}_{i_U}$ that intersects $U_d$. The output of this procedure will be a subset $W_C\subseteq  C\cap U_d$, and their cuts $(K_u, V_\gh[U]\setminus K_u)$ for all $u\in W_C$ in the auxiliary graph $G_\gh[U]$, with the extra property that $\vol_G(K_u\cap U)\leq 0.5\vol_G(N)$. In the end, we will show that with high probability, all these cuts are latest min-cuts in $G_\gh[U]$.

To describe our algorithm, consider all vertices $u\in C\cap U_d$ and their latest minimum cuts $(L_u, V_\gh[U]\setminus L_u)$ with respect to pivot $p\in U$. Let $\lambda_u$ be the cut value of $(L_u, V_\gh[U]\setminus L_u)$ in $G_\gh[U]$. All of the sets $L_u\cap C\cap U_d$ should form a laminar family, which corresponds to a tree structure $\gh_p^d[C]$, where each tree node $M$ of $\gh_p^d[C]$ packs a subset of $C\cap U_d$, such that for all $u\in M$ the set $L_u\cap C\cap U_d$ are the same. More specifically, $\gh_p^d[C]$ is constructed as follows: first arrange the laminar family $\{L_u\cap C\cap U_d \}_{u\in C\cap U_d}$ as a tree, and then for each node $L_u\cap C\cap U_d$ on this tree, associate with this node the set $M = \{v\in C\cap U_d\mid L_v\cap C\cap U_d = L_u\cap C\cap U_d \}\subseteq L_u\cap C\cap U_d$.

We need to emphasize that our algorithm does not know $\gh_p^d[C]$ at the beginning, but it will gradually explore part of $\gh_p^d[C]$ during the process.

\paragraph*{Preparation} Initialize $W_C = \emptyset$. Assume $|C\cap U_d| \geq 10/\phi^2$; otherwise we could simply reset $W_C = C\cap U_d$ and run $|C\cap U_d|$ instances of max-flow to compute all latest cuts. As a preparatory step, the algorithm repeatedly takes a random subset $T\subseteq C\cap U_d$ by selecting each vertex independently with probability $\phi$. Then apply Lemma~\ref{isolate} on graph $G_\gh[U]$ to compute isolating cuts of terminal vertices from $T$ with respect to pivot $p$. This procedure goes on for $10\log n /\phi$ iterations, and for each $u\in C\cap U_d$, let $(A_u^i, V_\gh[U]\setminus A_u^i)$ be the isolating cut computed for $u$ in the $i$-th iteration; if $u$ was not selected by $T$ in the $i$-th iteration, simply set $A_u^i = \{u\}$. Finally, let $A_u$ be the set among $\{A_u^i \}_{1\leq i\leq 10\log n / \phi}$ such that the cut value of $(A_u, V_\gh[U]\setminus A_u)$ in the auxiliary graph $G_\gh[U]$ is minimized; to break ties, we select $A_u^i$ that minimizes $|A_u^i|$. Let $\kappa_u$ be the cut value of $(A_u, V_\gh[U]\setminus A_u)$ for $u\in C\cap U_d$.

If one of $|A_u\cap C\cap U_d| > 2/\phi$, then the algorithm fails and aborts; we will prove that the failure probability is small.

\paragraph*{Exploring $\gh_p^d[C]$} A node $M$ of $\gh_p^d[C]$ is called \textbf{large} if $|C\cap U_d\setminus L_u| \leq 2/\phi$ for any $u\in M$, and if $|L_u\cap C\cap U_d|\leq 2/\phi$ it is called \textbf{small}. Similarly, a vertex $u\in C\cap U_d$ is called large, if $|C\cap U_d\setminus L_u| \leq 2 / \phi$; otherwise if $|L_u\cap C\cap U_d|\leq 2 / \phi$, it is called small.

\begin{lemma}
All large nodes on $\gh_p^d[C]$ should lie on a single path ended at root.
\end{lemma}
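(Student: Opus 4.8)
The plan is to reduce the statement to a single structural observation: \emph{any two large nodes of $\gh_p^d[C]$ are comparable in the ancestor--descendant order of the tree}. Once that is shown, the conclusion is immediate, since a set of pairwise-comparable nodes forms a chain, and a chain in a rooted tree is contained in the root-to-$w$ path for its lowest element $w$, which is exactly a ``single path ended at root''.

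First I would fix notation. Write $X = C\cap U_d$, and for a node $M$ of $\gh_p^d[C]$ let $S_M = L_u\cap X$ for an arbitrary $u\in M$; this is well-defined by the construction of $\gh_p^d[C]$, distinct nodes carry distinct sets $S_M$, and (adding the ground set $X$ as the root if necessary) the collection $\{S_M\}_M$ is a laminar family on $X$ in which $M$ is an ancestor of $M'$ precisely when $S_{M'}\subseteq S_M$. All of this is already available from the setup preceding the lemma. Then, given two large nodes $M_1,M_2$, the definition of ``large'' says $|X\setminus S_{M_1}|\le 2/\phi$ and $|X\setminus S_{M_2}|\le 2/\phi$, hence $X\setminus(S_{M_1}\cap S_{M_2})$ has size at most $4/\phi$. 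Since we are in the branch of the algorithm where $|X|=|C\cap U_d|\ge 10/\phi^2$, and $10/\phi^2 > 4/\phi$ as $\phi<1$, the intersection $S_{M_1}\cap S_{M_2}$ is nonempty. Laminarity of $\{S_M\}_M$ then forces $S_{M_1}\subseteq S_{M_2}$ or $S_{M_2}\subseteq S_{M_1}$, i.e.\ $M_1$ and $M_2$ are comparable in $\gh_p^d[C]$.

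To finish, I would order the large nodes so that $S_{M_{(1)}}\supsetneq S_{M_{(2)}}\supsetneq\cdots\supsetneq S_{M_{(t)}}$ (a strict chain, since distinct nodes have distinct sets), observe that $M_{(t)}$ is then a common descendant of all the others, and note that the unique path in $\gh_p^d[C]$ from the root down to $M_{(t)}$ visits every ancestor of $M_{(t)}$, hence all of $M_{(1)},\dots,M_{(t)}$. That path, ending at the root, contains every large node, as claimed.

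I do not expect a real obstacle here. The only points that need care are (i) the quantitative slack — a ``large'' set misses only $O(1/\phi)$ of the ground set while the ground set has $\Omega(1/\phi^2)$ vertices, so two large sets cannot be disjoint and laminarity promotes this to nesting — and (ii) the harmless bookkeeping that the root of $\gh_p^d[C]$ corresponds to the full set $X$ and sits above every node, so whether or not the root itself counts as ``large'' is irrelevant to the statement.
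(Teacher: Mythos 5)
Your proof is correct and rests on the same key quantitative fact as the paper's: two large nodes each miss at most $2/\phi$ elements of $C\cap U_d$, so jointly they miss at most $4/\phi$, which is smaller than $|C\cap U_d|\ge 10/\phi^2$, forcing their laminar sets to intersect and hence (by laminarity) nest. The paper phrases it as a contradiction (assume the two sets are disjoint), while you argue directly that the intersection is nonempty and then invoke laminarity; these are the same argument, with yours being somewhat more carefully worded about the distinction between a tree node $M$ and its laminar set $S_M$.
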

\begin{proof}
	Suppose otherwise there exists two different large nodes $M_1, M_2$ of $\gh_p^d[C]$ such that $M_1\cap M_2 = \emptyset$. Take any $u_1\in M_1, u_2\in M_2$. Since $M_1, M_2$ are large nodes, we know $|C\cap U_d\setminus L_{u_1}|\leq 2 / \phi$, $|C\cap U_d\setminus L_{u_2}|\leq 2 / \phi$. As $C\cap U_d\cap L_{u_1}$ and $C\cap U_d\cap L_{u_2}$ are disjoint, we have $|C\cap U_d| \leq |C\cap U_d\setminus L_{u_1}| + |C\cap U_d\setminus L_{u_2}| \leq 4 / \phi$, which contradicts that $|C\cap U_d|\geq 10 / \phi^2$.
\end{proof}

The main idea of our algorithm is to find the lowest large node on $\gh_p^d[C]$; to clarify a bit more, here ``lowest'' means farthest from root. Initialize a set $S \leftarrow C\cap U_d$ and maintain an ordering of vertices in $S$ according to the cut value of $\kappa_u$, also initialize variable $Q \leftarrow \emptyset$. 

Repeat the following procedure: take $u\in S\setminus Q$ such that $\kappa_u$ is maximized. Apply max-flow in graph $G_\gh[U]$ to compute the latest min-cut $L_u$ for $(u, p)$. Consider two possibilities.
\begin{itemize}
	\item $L_u$ is small. Then assign $S\leftarrow S\setminus (Q\cup L_u)$, and $Q\leftarrow \emptyset$.
	\item $L_u$ is large. If $L_u\cap C\cap U_d = S$, then add $u$ to $Q$; otherwise if $L_u\cap C\cap U_d\neq S$, reset $S \leftarrow L_u\cap C\cap U_d$ and $Q\leftarrow \{u\}$.
\end{itemize}

The repetition terminates if either (1) $|C\cap U_d \setminus S| > 2/\phi$ or (2) $|Q| > 2/\phi$. In the first case, assign $W_C\leftarrow \{u\in S\mid \vol_G(A_u\cap U)\leq 0.5\vol_G(N) \}, K_u\leftarrow A_u$ and terminate. Note that this notation $\vol_G(A_u\cap U)$ is well-defined, since all vertices in $U$ are also vertices in $V$, not contracted vertices in $V_\gh[U]$.

Now suppose we are in the second case. Let $L = L_v$ for an arbitrary $v\in Q$. We will prove afterwards that $(L, V_\gh[U]\setminus L)$ is the latest min-cut corresponding to the lowest large node. Let $\kappa$ be the cut value of $(L, V_\gh[U]\setminus L)$, and define $B = \{u\in S\mid \kappa_u > \kappa \}$. Assign $W_C\leftarrow \{u\in S\setminus B\mid \vol_G(A_u\cap U)\leq 0.5\vol_G(N) \}, K_u\leftarrow A_u$ and for each $u\in W_C$. 

After that, take an arbitrary $v\in Q$. If it satisfies that $\vol_G(L\cap U) \leq 0.5\vol_G(N)$, then update $W_C = W_C\cup \{v \}$ and $K_v\leftarrow L$. The whole exploration procedure is summarized as pseudo-code \textsf{ExploreTree}.

\begin{algorithm}
	\caption{\textsf{ExploreTree}$(U, p, d, C)$}
	\label{exp-search}
	prepare $A_u$ and $\kappa_u$ for all $u\in C\cap U_d$\;
	initialize $S\leftarrow C\cap U_d, Q\leftarrow \emptyset$\;
	\While{$\max\{|C\cap U_d\setminus S|, |Q| \}\leq 2/\phi$}{
		take $u\in \arg\max_{v\in S\setminus Q} \{\kappa_v\}$\;
		apply max-flow to compute $L_u$\;
		\If{$u$ is small}{
			$S\leftarrow S\setminus (Q\cup L_u)$, and $Q\leftarrow \emptyset$\;
		}\Else{
			\If{$L_u\cap C\cap U_d = S$}{
				$Q\leftarrow Q\cup \{u\}$\;
			}\Else{
				$S\leftarrow L_u\cap C\cap U_d, Q\leftarrow \{u\}$\;
			}
		}
	}
	\If{$|C\cap U_d\setminus S| > 2/\phi$}{
		\Return $W_C\leftarrow \{u\in S\mid \vol_G(A_u\cap U)\leq 0.5\vol_G(N) \}$, $K_u\leftarrow A_u, \forall u\in W_C$\;
	}\Else{
		define $B = \{u\in S\mid \kappa_u > \kappa \}$ where $\kappa$ is the cut value of $(S, V_\gh[U]\setminus S)$\;
		$W_C\leftarrow \{u\in S\setminus B\mid \vol_G(A_u\cap U)\leq 0.5\vol_G(N) \}$, $K_u\leftarrow A_u, \forall u\in W_C$\;
		draw an arbitrary vertex $v\in Q$ and set $L = L_v$\;
		\If{$\vol_G(L\cap U)\leq 0.5\vol_G(N)$}{
			assign $W_C \leftarrow W_C\cup\{v\}$ and $K_v\leftarrow L$\;
		}
		\Return $W_C$\;
	}
\end{algorithm}

\subsection{Proof of correctness}
First we prove a basic property of isolating cuts, which is also used in \cite{abboud2020subcubic}.
\begin{lemma}[\cite{abboud2020subcubic}]\label{imbalance}
	For each $u\in C\cap U_d$, either $|L_u\cap C\cap U_d|\leq 2/\phi$ or $|C\cap U_d\setminus L_u|\leq 2/\phi$; namely each vertex is either large or small. Furthermore, with high probability, when $u$ is small, $A_u = L_u$.
\end{lemma}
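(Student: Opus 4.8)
The statement splits into two essentially independent claims, and the plan is to handle them in turn. For the dichotomy, the idea is to exploit the $\phi$-expansion of the cluster $C$. A useful first observation is that, since $u\in U_d$, the latest min-cut value obeys $\lambda_u\le\deg_{G_\gh[U]}(u)=\deg_G(u)<2d$: the singleton cut $(\{u\},\cdot)$ already has this value, and contracting each component of $\gh\setminus\{U\}$ does not affect $\deg_G(u)$. The second preliminary step is to pull $(L_u,V_\gh[U]\setminus L_u)$ back to a cut $(\widehat L_u,V\setminus\widehat L_u)$ of the \emph{original} graph $G$ that respects the contracted components: a vertex $v\in V$ is placed in $\widehat L_u$ iff either $v\in U$ and $v\in L_u$, or the component of $\gh\setminus\{U\}$ containing $v$ is mapped into $L_u$. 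Because this cut does not split any contracted component, $|E_G(\widehat L_u,V\setminus\widehat L_u)|=\lambda_u$.

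\textbf{The dichotomy.} Write $P=C\cap\widehat L_u$ and $\overline P=C\setminus\widehat L_u$; if $P\in\{\emptyset,C\}$ then one of $L_u\cap C\cap U_d$, $(C\cap U_d)\setminus L_u$ is empty and there is nothing to prove, so assume $\emptyset\neq P\neq C$. Every edge of $G[C]$ between $P$ and $\overline P$ crosses $(\widehat L_u,\cdot)$, and self-loops in $G[C]^1$ cross no cut inside $C$, hence $|E_{G[C]^1}(P,C\setminus P)|=|E_{G[C]}(P,\overline P)|\le\lambda_u<2d$. Since the decomposition is a $(\phi,\phi)$-expander decomposition, property~(2) of Definition~\ref{boundary} says $G[C]^1$ is a $\phi$-expander, and volumes in $G[C]^1$ agree with $\vol_G$ on subsets of $C$; therefore $\lambda_u\ge\phi\min\{\vol_G(P),\vol_G(\overline P)\}$. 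Finally $L_u\cap C\cap U_d\subseteq P$ and $(C\cap U_d)\setminus L_u\subseteq\overline P$, and every vertex of $U_d$ has $G$-degree at least $d$, so $d\cdot\min\{\,|L_u\cap C\cap U_d|,\ |(C\cap U_d)\setminus L_u|\,\}\le\min\{\vol_G(P),\vol_G(\overline P)\}\le\lambda_u/\phi<2d/\phi$, which is exactly the claimed dichotomy.

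\textbf{The ``furthermore'' part.} Fix a small vertex $u$, so $|L_u\cap C\cap U_d|\le2/\phi$ (this part only concerns the regime $|C\cap U_d|\ge10/\phi^2$ in which the $A_u$'s are defined via isolating cuts). By Lemma~\ref{isolate}, in any iteration $i$ whose sampled set $T\subseteq C\cap U_d$ satisfies $L_u\cap T=\{u\}$ the returned set is $A_u^i=L_u$. The event $L_u\cap T=\{u\}$ amounts to $u\in T$ together with $T$ avoiding the at most $2/\phi-1$ remaining vertices of $L_u\cap C\cap U_d$ (using $T\subseteq C\cap U_d$), so it has probability at least $\phi(1-\phi)^{2/\phi}=\Omega(\phi)$; over $\Theta(\log n/\phi)$ independent iterations a Chernoff bound yields a ``good'' iteration with probability $1-n^{-\Omega(1)}$, provided the constant hidden in the iteration count is chosen large enough. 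When a good iteration occurs, $A_u^i$ equals $L_u$, which realizes the minimum possible cut value $\lambda_u$ and is moreover the \emph{unique} minimum-size minimum $(u,p)$-cut (i.e.\ the latest min-cut); since the algorithm sets $A_u$ to be the $A_u^j$ of smallest cut value, breaking ties by smallest size, we conclude $A_u=L_u$. A union bound over all $\widetilde O(n)$ triples $(u,C,U)$ processed across the $\widetilde O(1)$ rounds finishes the argument.

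\textbf{Main obstacle.} I expect the crux to be the first part's reduction: because the components of $\gh\setminus\{U\}$ are contracted, $C$ is in general not contained in $U$, so one cannot directly speak of ``$L_u$ restricted to $C$'' inside $G_\gh[U]$; lifting to $\widehat L_u$ in $G$ is precisely what makes the expansion of $G[C]^1$ applicable while preserving the cut value $\lambda_u$. The only other point needing care is calibrating the polylogarithmic number of isolating-cut iterations against the $1-n^{-\Omega(1)}$ failure bound demanded by the union bound, which is why the iteration count should be read as $\Theta(\log n/\phi)$ with a sufficiently large constant rather than literally $10\log n/\phi$.
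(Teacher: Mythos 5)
Your proof follows essentially the same route as the paper's: you unpack the contracted vertices to lift $(L_u,V_\gh[U]\setminus L_u)$ to a cut $(L_u',V\setminus L_u')$ in $G$ with the same value $\lambda_u<2d$, restrict to the cluster $C$, invoke property~(2) of the $(\phi,\phi)$-expander decomposition to lower-bound the crossing edges by $\phi\cdot\min\{\vol_G(L_u'\cap C),\vol_G(C\setminus L_u')\}$, and use $\deg_G\ge d$ on $U_d$ to turn volumes into vertex counts; for the ``furthermore'' part you likewise bound the probability that a sample $T$ isolates $u$ within $L_u\cap C\cap U_d$ by $\Omega(\phi)$ and amplify over the $\Theta(\log n/\phi)$ iterations. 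Your writeup is a bit more explicit than the paper's on two minor points (that self-loops of $G[C]^1$ do not cross cuts and that the tie-breaking rule in the definition of $A_u$ correctly selects $L_u$), but it is the same argument.
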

\begin{proof}
	Since $\deg_G(u) < 2d$, the cut value of $(L_u, V_\gh[U]\setminus L_u)$ is smaller than $2d$. Unpack all contracted vertices of $V_\gh[U]$, and let $L^\prime_u\subseteq V$ be the set of vertices belonging to $L_u$ or contracted in $L_u$. Therefore, since $G_\gh[U]$ is a contracted graph of $G$, the cut value of $(L_u, V_\gh[U]\setminus L_u)$ is equal to the cut value of $(L_u^\prime, V\setminus L_u^\prime)$.
	
	Suppose otherwise that $|L_u^\prime \cap C\cap U_d| > 2/\phi$ and $|C\cap U_d\setminus L_u^\prime| > 2/\phi$. Then, by property (2) of the $(\phi, \phi)$-expander decomposition, $G[C]^1$ is a $\phi$-expander, and so the number of edges between $L_u^\prime\cap C$ and $C\setminus L_u^\prime$ is at least $\phi \cdot \min\{\vol_G(L_u^\prime\cap C), \vol_G(C\setminus L_u^\prime) \} > 2d$, which is a contradiction.
	
	Let us turn to the second half of the statement. Suppose $u$ is small, and so $|L_u\cap C\cap U_d|\leq 2 / \phi$. Then, since $T$ selects each vertex in $C\cap U_d$ with probability $\phi$, with probability $\phi\cdot (1 - \phi)^l > \phi/8$, $T\cap L_u\cap C = \{u\}$ is a singleton. In this case, by Lemma~\ref{isolate}, $A_u^i = L_u$. As $T$ is sampled for $10\log n / \phi$ times, with high probability $A_u = L_u$.
\end{proof}

Here is a basic fact regarding large vertices.
\begin{lemma}\label{basic}
	For any large vertex $u$, $\lambda_u < \kappa_u$.
\end{lemma}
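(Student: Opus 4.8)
The plan is to argue by contradiction, after first disposing of the trivial inequality. Recall that $A_u$ is, by construction, one of the candidate sets $A_u^i$, each of which is either $\{u\}$ (when $u$ was not sampled into $T$ in iteration $i$) or an isolating cut returned by Lemma~\ref{isolate}; in both cases $u\in A_u$ and $p\notin A_u$, so $(A_u, V_\gh[U]\setminus A_u)$ is a genuine $(u,p)$-cut in the auxiliary graph $G_\gh[U]$. Hence its value $\kappa_u$ is at least the minimum $(u,p)$-cut value $\lambda_u$, and it remains only to rule out the case $\kappa_u=\lambda_u$.

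So suppose $\kappa_u=\lambda_u$, i.e.\ $(A_u, V_\gh[U]\setminus A_u)$ is itself a minimum $(u,p)$-cut. The next step is to deduce $L_u\subseteq A_u$ via submodularity of the cut function $f$ of $G_\gh[U]$: since $u\in A_u\cap L_u$ and $p\notin A_u\cup L_u$, both $A_u\cap L_u$ and $A_u\cup L_u$ are $(u,p)$-cuts, so $f(A_u)+f(L_u)\ge f(A_u\cap L_u)+f(A_u\cup L_u)$ combined with $f(A_u)=f(L_u)=\lambda_u$ and $f(A_u\cap L_u),f(A_u\cup L_u)\ge\lambda_u$ forces $f(A_u\cap L_u)=\lambda_u$. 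Thus $A_u\cap L_u$ is again a minimum $(u,p)$-cut; since $L_u$ is the \emph{latest} minimum $(u,p)$-cut — the one whose $u$-side has minimum cardinality — we get $|A_u\cap L_u|\ge|L_u|$, and as $A_u\cap L_u\subseteq L_u$ this yields $A_u\cap L_u=L_u$, i.e.\ $L_u\subseteq A_u$.

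Finally I would reach a contradiction by counting the vertices of $C\cap U_d$ on each side. On one hand, we are in the situation where the preparation phase did not abort, so $|A_u\cap C\cap U_d|\le 2/\phi$; combined with $L_u\subseteq A_u$ this gives $|L_u\cap C\cap U_d|\le 2/\phi$. On the other hand, $u$ is large, which by definition means $|C\cap U_d\setminus L_u|\le 2/\phi$, and under the standing assumption $|C\cap U_d|\ge 10/\phi^2$ of the expander-search routine we get $|L_u\cap C\cap U_d|=|C\cap U_d|-|C\cap U_d\setminus L_u|\ge 10/\phi^2-2/\phi>2/\phi$ (using $\phi<1$). These two estimates are incompatible, so $\kappa_u>\lambda_u$, as claimed.

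The submodularity manipulation and the final arithmetic are routine; the step that needs care is the passage from ``$A_u$ is \emph{some} minimum $(u,p)$-cut'' to ``$L_u\subseteq A_u$'', which is precisely where the minimality (uniqueness) characterization of the latest minimum cut enters, and where one should note that the whole argument lives in the contracted graph $G_\gh[U]$ — this is harmless since the vertices of $C\cap U_d\subseteq U$ are uncontracted, so intersecting $A_u$ and $L_u$ with $C\cap U_d$ is meaningful. I expect that to be the only real subtlety rather than a genuine obstacle.
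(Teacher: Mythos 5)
Your proposal is correct and follows essentially the same route as the paper's (very terse) proof, which simply asserts that if $\lambda_u = \kappa_u$ then ``the latest min-cut should be contained in $A_u$'' and then derives the same cardinality contradiction via the non-abort condition $|A_u\cap C\cap U_d|\le 2/\phi$ and the definition of large. The only difference is that you make explicit the submodularity argument justifying $L_u\subseteq A_u$ --- a step the paper leaves implicit --- which is a faithful fleshing-out rather than a different approach.
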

\begin{proof}
	If $\lambda_u = \kappa_u$, then the latest min-cut should be contained in $A_u$, which contains at most $2/\phi$ vertices from $C\cap U_d$, and so $u$ cannot be large.
\end{proof}

Next we analyze the behavior of the while-loop in \textsf{ExploreTree}.
\begin{lemma}\label{Mset}
	If $Q\neq\emptyset$, then for each $u\in Q$, $L_u\cap C\cap U_d = S$.
\end{lemma}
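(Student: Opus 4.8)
The plan is to prove the statement as a loop invariant of the while-loop in \textsf{ExploreTree}, by induction on the number of iterations completed. The invariant I will carry is precisely the claim: either $Q=\emptyset$, or $L_u\cap C\cap U_d = S$ for every $u\in Q$. To make the bookkeeping clean I will also track the two auxiliary containments $S\subseteq C\cap U_d$ and $Q\subseteq S$. At initialization we have $S = C\cap U_d$ and $Q=\emptyset$, so the main claim holds vacuously and both containments are immediate.

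For the inductive step I would split on the three branches of the loop body, using Lemma~\ref{imbalance} to guarantee that the chosen vertex $u\in\arg\max_{v\in S\setminus Q}\{\kappa_v\}$ is either small or large. In the small case the algorithm resets $Q\leftarrow\emptyset$, so the main claim holds vacuously, while $S$ only shrinks (it is replaced by $S\setminus(Q\cup L_u)\subseteq S$), preserving $S\subseteq C\cap U_d$ and $Q\subseteq S$. In the large case with $L_u\cap C\cap U_d = S$, the set $S$ is untouched and only $u$ is appended to $Q$; here the old elements of $Q$ still satisfy $L_v\cap C\cap U_d = S$ by the inductive hypothesis, and $u$ itself satisfies it by the branch condition, so the main claim is restored for the new $Q$, and $Q\cup\{u\}\subseteq S$ since $u\in S\setminus Q\subseteq S$. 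In the large case with $L_u\cap C\cap U_d\neq S$, both $S$ and $Q$ are overwritten to $L_u\cap C\cap U_d$ and $\{u\}$ respectively, so the identity $L_u\cap C\cap U_d = S$ becomes a tautology for the single new element; I only need $u$ to actually lie in the new $S$, which follows because $u$ lies on its own side of the $(u,p)$-cut and $u\in S\subseteq C\cap U_d$ by the previous invariant.

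There is no real obstacle here; the only thing to be careful about is threading the auxiliary containments $Q\subseteq S\subseteq C\cap U_d$ through the induction, so that in the overwrite branch the reset value $S = L_u\cap C\cap U_d$ genuinely contains $u$, and so that the predicates ``$u$ is large/small'' — which are phrased in terms of $C\cap U_d$ — are meaningfully applied to the element of $S\setminus Q$ that the algorithm picks. Everything else is a direct case check with no computation.
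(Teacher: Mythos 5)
Your proof is correct and takes essentially the same approach as the paper's: the paper's proof is a terse one-liner observing that whenever $Q$ becomes nonempty it is either augmented on line-10 by a vertex $u$ with $L_u\cap C\cap U_d = S$ or overwritten on line-12 to $\{u\}$ with $S$ simultaneously set to $L_u\cap C\cap U_d$, which is exactly your induction. Your version is just more explicit about the loop-invariant bookkeeping (tracking $Q\subseteq S\subseteq C\cap U_d$ and the vacuous line-7 case), which the paper leaves implicit.
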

\begin{proof}
	Each time $S$ is updated, either $Q$ adds a vertex $u$ on line-10 such that $L_u\cap C\cap U_d = S$, or $Q$ is updated to $\{u\}$ on line-12. So the equality always holds.
\end{proof}

\begin{lemma}\label{explore}
	At the beginning of any iteration of the while-loop, $\forall v\in S$, if $v$ is large, then we have $L_v\cap C\cap U_d \subseteq S$.
\end{lemma}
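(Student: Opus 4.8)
\emph{Proof plan.} I would prove the lemma by induction on the number of iterations of the while-loop in \textsf{ExploreTree} completed so far. The base case is immediate, since initially $S = C\cap U_d$ and so $L_v\cap C\cap U_d\subseteq C\cap U_d = S$ for every $v$. For the inductive step I would assume the invariant at the start of an iteration and re-establish it after the update to $S$ and $Q$, going through the three branches of the conditional. Throughout I would rely on four ingredients: (a) the latest minimum cuts $\{L_w\}_{w\in C\cap U_d}$ all avoid the common pivot $p$ and therefore form a laminar family, so any two of them are nested or disjoint; (b) ``latest'' means smallest source side, so whenever $L_a$ happens to also be a minimum $(b,p)$-cut we get $|L_b|\le|L_a|$; (c) Lemma~\ref{Mset}, which gives $L_q\cap C\cap U_d = S$ for every $q\in Q$; and (d) the selection rule that $u$ maximizes $\kappa_u$ over $S\setminus Q$, combined with Lemma~\ref{basic} and the high-probability identity $A_u = L_u$ (hence $\kappa_u=\lambda_u$) from Lemma~\ref{imbalance} for small $u$.

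The branch ``$u$ large, $L_u\cap C\cap U_d = S$'' is trivial: $S$ is unchanged and largeness of a vertex does not refer to $S$. In the branch ``$u$ large, $L_u\cap C\cap U_d\ne S$'' we set $S' = L_u\cap C\cap U_d$; applying the inductive hypothesis to $u$ itself (large, in $S\setminus Q$) gives $L_u\cap C\cap U_d\subseteq S$, so $S'\subsetneq S$. For a large $v\in S'$ we have $v\in L_u$, hence $L_u,L_v$ are nested; if $L_u\subsetneq L_v$ then $u\in L_v$ and $v\in L_u$ force $\lambda_u=\lambda_v$, so $L_u$ is a minimum $(v,p)$-cut, and latest-ness of $L_v$ gives $|L_v|\le|L_u|$, a contradiction. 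Thus $L_v\subseteq L_u$ and $L_v\cap C\cap U_d\subseteq S'$. (If $v$ had been in the old $Q$, Lemma~\ref{Mset} would give $L_v\cap C\cap U_d = S$, and the same uncrossing argument would force $L_u = L_v$, contradicting $L_u\cap C\cap U_d\ne S$, so this does not happen.)

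The substantive branch is ``$u$ small'', with $S' = S\setminus(Q\cup L_u)$. For a large $v\in S'$ the inductive hypothesis gives $L_v\cap C\cap U_d\subseteq S$, so it suffices to show $L_v\cap C\cap U_d$ is disjoint from $L_u$ and from the old $Q$. Laminarity forbids $L_v\subseteq L_u$: otherwise $v\in L_u\cap C\cap U_d$, a set of size at most $2/\phi$ by smallness of $u$, contradicting largeness of $v$ and $|C\cap U_d|\ge 10/\phi^2$. The step I expect to be the main obstacle is ruling out the opposite nesting $L_u\subseteq L_v$, as it is the only place the $\kappa$-ordering genuinely enters: since $v\in S\setminus Q$ we have $\kappa_v\le\kappa_u$, whence $\lambda_v<\kappa_v\le\kappa_u=\lambda_u$ (using Lemma~\ref{basic} on the large $v$ and $A_u=L_u$ on the small $u$), so $u\in L_v$ would make $L_v$ a $(u,p)$-cut of value $\lambda_v<\lambda_u$ — impossible. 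Hence $u\notin L_v$ and $v\notin L_u$, and laminarity upgrades this to $L_u\cap L_v = \emptyset$. For the old $Q$: each $q\in Q$ is large with $L_q\cap C\cap U_d = S\ni v$, so $v\in L_q$; if also $q\in L_v$ then $\lambda_q=\lambda_v$ and latest-ness force $L_q = L_v$, hence $u\in S\subseteq L_v$, contradicting $u\notin L_v$. Combining the disjointnesses gives $L_v\cap C\cap U_d\subseteq S'$, closing the induction. Since the argument uses the high-probability event of Lemma~\ref{imbalance}, the invariant — like the surrounding analysis — is only claimed with high probability.
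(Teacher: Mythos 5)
Your proof is correct and follows essentially the same inductive argument as the paper: the small-$u$ case uses the same chain $\lambda_v<\kappa_v\le\kappa_u=\lambda_u$ via Lemmas~\ref{basic}, \ref{imbalance} and the selection rule, and the large-$u$ case uses the same laminarity/latest-ness uncrossing; you merely split the large-$u$ branch into two subcases and spell out a few standard uncrossing facts (e.g.\ $v\in L_u\Rightarrow L_v\subseteq L_u$) that the paper asserts without proof. The one place your writeup is slightly more roundabout than necessary is ruling out $L_v\subseteq L_u$ in the small-$u$ case via a size argument — here $v\in S\setminus(Q\cup L_u)$ already gives $v\notin L_u$ directly, which is what the paper uses.
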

\begin{proof}
	We prove this statement by induction on the number of iterations. Initially, this holds as $S = C\cap U_d$. For any intermediate iteration, consider two cases.
	\begin{itemize}
		\item $u$ is small. We claim that before updating $S$, for all large vertices $v\in S\setminus (Q\cup L_u)$, $L_v$ and $L_u\cup Q$ are disjoint; if this can be proved, then we conclude $L_v\cap C\cap U_d\subseteq S\setminus (Q\cup L_u)$, as $L_v\cap C\cap U_d\subseteq S$ holds before.
		
		Suppose that $L_v\cap L_u\neq \emptyset$. Then as all latest minimum cuts form a laminar family and that $v\notin L_u$, it must be $L_u\subseteq L_v$. As $v$ is large, $L_v\cap C\cap U_d$ contains more vertices than $A_v\cap C\cap U_d$, and so by Lemma~\ref{basic}, we have $\lambda_v < \kappa_v$. Now, by line-4, since $\kappa_u$ is the largest among all vertices in $S\setminus Q$, $\kappa_v\leq \kappa_u$. Finally, using Lemma~\ref{imbalance}, we know $\kappa_u = \lambda_u$ as $u$ is small. Concatenating all the inequalities we have: \[\lambda_v < \kappa_v\leq \kappa_u  = \lambda_u\]
		which contradicts the fact that $(L_u, V_\gh[U]\setminus L_u)$ is a min-cut for $(u, p)$ as $L_u\subseteq L_v$.
		
		Now suppose that $L_v\cap Q\neq \emptyset$, say $w\in L_v\cap Q$. Then by Lemma~\ref{Mset}, $v\in S\subseteq L_w$, and so both $w, v$ are in $L_v\cap L_w$, which means $L_v = L_w$, and so $L_v\cap L_u = L_w\cap L_u = L_u\neq \emptyset$, which is a contradiction as discussed just before.
		
		\item $u$ is large. In this case, the algorithm would reassign $S\leftarrow L_u\cap C\cap U_d$. Then, for all $v\in S$, as $(L_v, V_\gh[U]\setminus L_v)$ is the latest minimum cut, it must be $L_v\subseteq L_u$, irrespective of whether $v$ is large or not.\qedhere
	\end{itemize}
\end{proof}

Next we prove that when the while-loop ends, either all vertices in $S$ are small, or $S$ corresponds to the cut of the lowest large node on $\gh_p^d[C]$.

\begin{lemma}\label{all-small}
	If $|C\cap U_d\setminus S| > 2/\phi$, then all vertices in $S$ are small.
\end{lemma}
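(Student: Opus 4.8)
The plan is a short case analysis on the \emph{last} iteration of the while-loop in \textsf{ExploreTree} that is executed before termination. Since the loop guard $\max\{|C\cap U_d\setminus S|,|Q|\}\leq 2/\phi$ held at the start of this iteration but we exit with $|C\cap U_d\setminus S| > 2/\phi$, the quantity $|C\cap U_d\setminus S|$ must have strictly increased during this iteration; in particular $S$ strictly shrank during it.

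Next I would rule out that this shrinkage happened in the large branch. The only two places where $S$ decreases are line-7 (the small branch, $S\leftarrow S\setminus(Q\cup L_u)$) and line-12 (the large branch with $L_u\cap C\cap U_d\neq S$, where $S\leftarrow L_u\cap C\cap U_d$). In the second case $C\cap U_d\setminus S = C\cap U_d\setminus L_u$, and since $u$ is large this set has size at most $2/\phi$ by definition, so condition (1) could not have been the trigger. Hence the last iteration took the small branch: $u$ is small and, writing $S$ for the post-update value, $S = S_{\mathrm{old}}\setminus(Q\cup L_u)$ and $Q = \emptyset$.

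Finally I would reuse the argument contained in the proof of Lemma~\ref{explore}. There, for the small-branch case, it is shown that after the update every large $v\in S$ has $L_v$ disjoint from $Q\cup L_u$, hence $L_v\cap C\cap U_d\subseteq S$. Taking complements inside $C\cap U_d$ yields $|C\cap U_d\setminus L_v|\geq |C\cap U_d\setminus S| > 2/\phi$, contradicting that $v$ is large. So the final $S$ contains no large vertex, and since by Lemma~\ref{imbalance} every vertex of $C\cap U_d$ is either large or small, all vertices of $S$ are small.

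I do not expect a genuine obstacle here; the only points that need care are bookkeeping. First, one must confirm the loop runs at least once so that ``the last iteration'' is meaningful, which holds because $|C\cap U_d|\geq 10/\phi^2$ forces the initial state $|C\cap U_d\setminus S| = |Q| = 0$ to satisfy the guard. Second, one must be entitled to apply the invariant behind Lemma~\ref{explore} to the value of $S$ \emph{after} the final iteration rather than only at the start of an iteration; this is legitimate because the inductive proof of Lemma~\ref{explore} establishes that the invariant is preserved by each iteration's update, regardless of whether the loop continues afterward. Third, one should note that whenever $S$ shrinks the set $Q$ is reset at the same time ($Q\leftarrow\emptyset$ on line-7, $Q\leftarrow\{u\}$ on line-12), so the exit cannot simultaneously be attributed to $|Q| > 2/\phi$, keeping the case split above clean.
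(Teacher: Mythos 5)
Your proof is correct and is essentially the paper's proof, which simply invokes Lemma~\ref{explore} to get $L_v\cap C\cap U_d\subseteq S$ for any large $v\in S$ and derives the same contradiction with $|C\cap U_d\setminus S| > 2/\phi$. Your extra case analysis on the final iteration is not needed once Lemma~\ref{explore} is available, but it does address a small wrinkle the paper glosses over: Lemma~\ref{explore} is stated at the start of each iteration, and you explicitly verify the invariant still holds at loop exit by rechecking the preservation argument from that lemma's small-branch case.
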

\begin{proof}
	Consider any vertex $u\in S$. If $u$ is large, then by Lemma~\ref{explore}, $L_u\cap C\cap U_d\subseteq S$, and so $|C\cap U_d\setminus L_u|\geq |C\cap U_d\setminus S| > 2/\phi$, which contradicts the definition of being large.
\end{proof}

\begin{lemma}\label{lowest}
	After the while-loop ends, if $|C\cap U_d\setminus S| \leq 2/\phi$ and $|Q| > 2/\phi$, then $(L, V_\gh[U]\setminus L)$ is the latest min-cut of the lowest large node on $\gh_p^d[C]$. Moreover, $B = \{u\in S\mid \kappa_u >\kappa \}$ is the set of all vertices $u$ such that $L_u\cap C\cap U_d = S$, and consequently all $L_u, \forall u\in B$ are equal.
\end{lemma}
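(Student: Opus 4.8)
The plan is to identify the node of $\gh_p^d[C]$ that contains $Q$, show it equals the lowest large node, and then read off both assertions by comparing cut values. Since $|Q|>2/\phi>0$, Lemma~\ref{Mset} gives $L_v\cap C\cap U_d=S$ for every $v\in Q$, so all of $Q$ lies in the single tree node $M:=\{v\in C\cap U_d:\ L_v\cap C\cap U_d=S\}$, whose associated set is $S$; in particular $|M|\ge|Q|>2/\phi$. Because we are in the branch $|C\cap U_d\setminus S|\le 2/\phi$, each $v\in Q$ has $|C\cap U_d\setminus L_v|=|C\cap U_d\setminus S|\le 2/\phi$, so $v$ is large and $M$ is a large node. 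I would also record that all vertices of $M$ share one latest cut: for $u,v\in M$ we have $u\in L_u\cap C\cap U_d=S\subseteq L_v$ and symmetrically $v\in L_u$, so $L_v$ is a $(u,p)$-cut and $L_u$ a $(v,p)$-cut, forcing $\lambda_u=\lambda_v$; then $L_v$ is a \emph{minimum} $(u,p)$-cut and $L_u$ the latest one, so $L_u\subseteq L_v$, and symmetrically $L_v\subseteq L_u$, whence $L_u=L_v$. Thus $L=L_v$ is well-defined independently of the choice of $v\in Q$, and $\lambda_u=:\lambda$ is common to all $u\in M$.

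\textbf{$M$ is the lowest large node.} Suppose not. By the earlier lemma (all large nodes lie on a single root path) there is a large node $M'$ that is a proper descendant of $M$ in $\gh_p^d[C]$, so its associated set $X'=L_{w'}\cap C\cap U_d$ (for $w'\in M'$) satisfies $X'\subsetneq S$. The crucial claim is $M\cap X'=\emptyset$. Indeed, if some $w\in M$ lay in $X'$, then $w\in L_{w'}$, so $L_{w'}$ is a $(w,p)$-cut and $\lambda_w\le\lambda_{w'}$; on the other hand the latest min-cuts with pivot $p$ form a laminar family (Lemmas~\ref{union}--\ref{minus}), and $w\in L_w\cap L_{w'}$ makes $L_w,L_{w'}$ nested, with $L_{w'}\subsetneq L_w$ since $X'\subsetneq S$; but then $w'\in L_{w'}\subseteq L_w$ shows $L_w$ is a $(w',p)$-cut, giving $\lambda_{w'}\le\lambda_w$, hence $\lambda_w=\lambda_{w'}$, and now $L_{w'}$ is a minimum $(w,p)$-cut strictly contained in the latest one $L_w$ — a contradiction. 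Given $M\cap X'=\emptyset$ and $X'\subseteq S$, we get $M\subseteq S\setminus X'$, so $|M|\le|S|-|X'|\le|C\cap U_d|-(|C\cap U_d|-2/\phi)=2/\phi$, using that $M'$ is large (so $|C\cap U_d\setminus X'|\le 2/\phi$) and $|S|\le|C\cap U_d|$. This contradicts $|M|>2/\phi$. Hence $M$ is the lowest large node, and $(L,V_\gh[U]\setminus L)$ is its latest min-cut, which is the first assertion. (The same disjointness observation also shows every vertex of $S$ outside $M$ lies in a proper descendant node of $M$.)

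\textbf{$B=M$ and all $L_u$ ($u\in B$) coincide.} Here $\kappa$ is the value of $(L,V_\gh[U]\setminus L)$, i.e. $\kappa=\lambda$ (one also checks the cut $(S,V_\gh[U]\setminus S)$ used in \textsf{ExploreTree} carries this same value). If $u\in M$, then $u$ is large, so by Lemma~\ref{basic} $\kappa_u>\lambda_u=\lambda=\kappa$, hence $u\in B$. Conversely, if $u\in S\setminus M$, then by the last remark of the previous step $u$ lies in a proper descendant node of $M$, which is therefore not large, so $u$ is small; then Lemma~\ref{imbalance} (on its high-probability event) gives $\kappa_u=\lambda_u$, and since $u\in S\subseteq L$ makes $L$ a $(u,p)$-cut we have $\lambda_u\le\lambda=\kappa$, so $\kappa_u\le\kappa$ and $u\notin B$. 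Therefore $B=M$, and all $L_u$ with $u\in M=B$ coincide by the first step.

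\textbf{Main obstacle.} The heart of the argument is the combinatorial claim $M\cap X'=\emptyset$ of the second step — that the vertex set of a tree node of $\gh_p^d[C]$ is disjoint from the associated set of any of its proper descendants. This is precisely what upgrades ``$M$ is large and has more than $2/\phi$ vertices'' into ``$M$ is the lowest large node'', and it is where the latest/minimality property of the cuts is indispensable; everything else is bookkeeping with laminarity. A minor secondary point is verifying that the $\kappa$ actually computed in \textsf{ExploreTree} equals $\lambda$, so that the strict inequality $\kappa_u>\kappa$ for $u\in M$ holds as needed.
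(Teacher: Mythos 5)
Your proof is correct and uses the same core tools as the paper's ($|Q|>2/\phi$, laminarity of latest $p$-cuts, Lemmas~\ref{Mset},~\ref{explore},~\ref{basic},~\ref{imbalance}), but you organize the contradiction for the first assertion somewhat differently. The paper produces a witness $w\in Q\cap L_{v'}$ directly by pigeonhole (since $|C\cap U_d\setminus L_{v'}|\le 2/\phi<|Q|$) and then argues $L_{v'}=L_w$; you instead prove the standalone structural fact $M\cap X'=\emptyset$ (the vertex set $M$ of a node of $\gh_p^d[C]$ is disjoint from the laminar set $X'$ of any proper descendant) and conclude with a cardinality contradiction $|M|\le|S|-|X'|\le 2/\phi<|Q|\le|M|$. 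Both routes use the identical laminarity-plus-cut-value argument as their engine; yours is a bit more modular and also makes explicit that all $L_u$ within one tree node coincide, something the paper leans on implicitly in the second half. One small imprecision to flag: you justify that every $u\in S\setminus M$ lies in a proper descendant node of $M$ by pointing to ``the same disjointness observation,'' but $M\cap X'=\emptyset$ goes in the wrong direction for this. What you actually need is $L_u\cap C\cap U_d\subsetneq S$ for every $u\in S\setminus M$, which follows from $L_u\subseteq L$ (laminarity of $L_u$ and $L$, since $u\in L_u\cap L$, together with a cut-value comparison against $\kappa$; or from Lemma~\ref{explore} directly when $u$ is large). The fact is true, but the cited reason is slightly misplaced.
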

\begin{proof}
	As the while-loop ends with $|Q| > 2/\phi$, the last iteration must have ended on line-10. Therefore, $(L, V_\gh[U]\setminus L)$ is the latest min-cut of some $u\in Q$. Suppose otherwise $(L, V_\gh[U]\setminus L)$ is not the latest min-cut of the lowest large node on the imaginary tree $\gh_p^d[C]$. Then, there exists a large vertex $v\in L\cap C\cap U_d$ such that $L_v\cap C\cap U_d\subsetneq S$ but $|C\cap U_d\setminus L_v| \leq 2/\phi$. As $|Q| > 2/\phi$, there must exist $w\in L_v\cap Q$. By Lemma~\ref{Mset}, $v\in L\cap C\cap U_d =S = L_w\cap C\cap U_d$, so both $v, w$ are in $L_v\cap L_w$, and consequently $L_v = L_w$, $L_v\cap C\cap U_d = S$, contradiction.
	
	Now let us turn to the second half of the statement. Consider any $u\in B$. $(A_u, V_\gh[U]\setminus A_u)$ cannot be a min-cut as $\kappa_u > \kappa$. By Lemma~\ref{imbalance}, $u$ must be a large vertex. On the one hand, by Lemma~\ref{explore}, $L_u\cap C\cap U_d\subseteq S$, and on the other hand, $L_u\cap C\cap U_d$ cannot be strictly smaller than $S$ as $S$ is the lowest already. Hence $L_u\cap C\cap U_d = S$.
	
	For any $u\notin B$, by definition $\kappa_u\leq \kappa$. If $u$ is large, then $\lambda_u<\kappa_u \leq \kappa$, so $L_u\cap C\cap U_d\subsetneq S$, which also contradicts that $S$ corresponds to the lowest large node on $\gh_p^d[C]$.
\end{proof}

Finally, we prove that all cuts $(K_u, V_\gh[U]\setminus K_u)$ output by the algorithm are latest min-cuts with high probability.
\begin{lemma}
	All cuts $(K_u, V_\gh[U]\setminus K_u)$ output by the algorithm are latest min-cuts with high probability.
\end{lemma}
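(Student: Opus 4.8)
The plan is to fix one invocation of \textsf{ExploreTree}$(U,p,d,C)$, condition on the few high-probability events coming from its randomized isolating-cut phase, and then verify correctness by a short case analysis on how the while-loop terminates; nearly all of the real work has already been done in Lemmas~\ref{imbalance}, \ref{all-small} and \ref{lowest}, so this statement is mostly a synthesis of them.

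The first step is to dispose of the degenerate case $|C\cap U_d|<10/\phi^2$: there $W_C=C\cap U_d$ and every $K_u=L_u$ is returned by a direct max-flow call, hence is the latest $(u,p)$-min-cut in $G_\gh[U]$ by the definition of latest min-cuts. So assume $|C\cap U_d|\ge 10/\phi^2$, and let $\mathcal{E}$ be the event that $A_u=L_u$ for every small vertex $u$ of $C\cap U_d$. By the argument in the proof of Lemma~\ref{imbalance}, each small $u$ violates this with probability at most $(1-\phi/8)^{10\log n/\phi}=n^{-\Omega(1)}$, which leaves plenty of room to union-bound over the at most $n$ vertices here, and eventually over all $\widetilde{O}(n)$ pairs of a vertex and an invocation of \textsf{ExploreTree} arising during the whole run. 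If \textsf{ExploreTree} aborts (some $|A_u\cap C\cap U_d|>2/\phi$) it returns no cut and the statement holds vacuously; on $\mathcal{E}$ an abort cannot be caused by a small $u$, since then $|A_u\cap C\cap U_d|=|L_u\cap C\cap U_d|\le 2/\phi$, so it can only be caused by a large vertex, which is exactly what the deferred bound on the abort probability (mirroring \cite{abboud2020subcubic}) addresses. From here on I condition on $\mathcal{E}$ and on the call reaching a return statement.

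The while-loop exits in exactly one of two mutually exclusive ways, since in any single iteration at most one of $|C\cap U_d\setminus S|$ and $|Q|$ can cross the threshold $2/\phi$. In case (i), $|C\cap U_d\setminus S|>2/\phi$: Lemma~\ref{all-small} says every vertex of $S$ is small, hence for each $u$ in the returned set $W_C\subseteq S$ the event $\mathcal{E}$ gives $K_u=A_u=L_u$, the latest min-cut. In case (ii), $|C\cap U_d\setminus S|\le 2/\phi$ and $|Q|>2/\phi$: Lemma~\ref{lowest} tells us that $(L,V_\gh[U]\setminus L)$ is the latest min-cut of the lowest large node of $\gh_p^d[C]$, that the set $B$ equals $\{u\in S:L_u\cap C\cap U_d=S\}$, and that $L_u=L$ for every $u\in B$. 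Two kinds of cuts are returned. For $u\in S\setminus B$ satisfying the volume bound we output $K_u=A_u$, and any such $u$ must be small: were it large, Lemma~\ref{explore} would give $L_u\cap C\cap U_d\subseteq S$, while $u\notin B$ forces $L_u\cap C\cap U_d\ne S$, so $L_u\cap C\cap U_d$ would be a proper subset of $S$ and hence a large node of $\gh_p^d[C]$ strictly below the lowest large node, which is impossible; therefore $\mathcal{E}$ again gives $K_u=A_u=L_u$. For the chosen $v\in Q$ (when its volume bound holds) we output $K_v=L$; here $L_v$ was genuinely computed by the max-flow call in the iteration that put $v$ into $Q$, and by Lemma~\ref{Mset} $L_v\cap C\cap U_d=S$, so $v\in B$ and $L=L_v$ is the latest $(v,p)$-min-cut.

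Combining the two cases shows that, on $\mathcal{E}$, every cut returned by a single \textsf{ExploreTree} call is a latest min-cut; a final union bound of $\mathcal{E}$ over the $\widetilde{O}(1)$ rounds and the at most $n$ invocations per round (one for each pair $(U,C)$ with $C\cap U_d\ne\emptyset$, and every vertex witnesses at most one such pair) keeps the total failure probability $n^{-\Omega(1)}$, yielding the stated high-probability guarantee. I expect the only genuinely non-routine ingredient to be the deferred bound on the abort probability for large vertices, which requires understanding how a random $\phi$-sample interacts with the laminar tree $\gh_p^d[C]$; a secondary point, settled by Lemma~\ref{Mset}, is that all vertices of $Q$ carry the same latest min-cut, so that $L$ on the last branch is well defined.
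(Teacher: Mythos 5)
Your proof is correct and follows essentially the same route as the paper's, which is a three-line synthesis of Lemmas~\ref{imbalance}, \ref{all-small}, and \ref{lowest} split on how \textsf{ExploreTree} terminates. You fill in several details the paper leaves implicit and actually states a little imprecisely: the paper's phrase ``by Lemma~\ref{lowest}, all vertices in $W_C$ are small'' is literally false for the vertex $v\in Q$ added on line-20 (which is large), and your version correctly separates $W_C\setminus B$ (small, handled via $A_u=L_u$) from the single $v\in Q\cap B$ (large, handled via $K_v=L=L_v$); you also make explicit the degenerate case $|C\cap U_d|<10/\phi^2$, the argument via Lemma~\ref{explore} that $u\in S\setminus B$ cannot be large, and the observation that the abort branch is vacuous for this correctness claim. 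None of this changes the approach; it just tightens the exposition.
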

\begin{proof}
	If the algorithm terminates on line-14, then by Lemma~\ref{all-small}, all vertices in $W_C$ are small. So by Lemma~\ref{imbalance}, $L_u = A_u = K_u, \forall u\in W_C$. Otherwise, if the algorithm terminates on line-21, then by Lemma~\ref{lowest}, all vertices in $W_C$ are small. Hence, by Lemma~\ref{imbalance}, $L_u = A_u = K_u, \forall u\in W_C\setminus B$; also, for any $u\in W_C\cap B$, we have $L_u = L = K_u$.
\end{proof}
\subsection{Running time analysis}
First we analyze the running time of each call of expander search.
\begin{lemma}\label{expander-search}
	The total running time of the expander search in graph $G_\gh[U]$ is bounded by $$\widetilde{O}(\mf(\vol_G(U), |V_\gh[U]|) / \phi)$$
\end{lemma}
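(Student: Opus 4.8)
The plan is to charge the cost of \textsf{ExploreTree} run on $C$ to the two phases of the algorithm: the preparation phase that computes the $A_u$'s via repeated isolating cuts, and the exploration phase consisting of the while-loop plus final cleanup. For the preparation phase, the algorithm runs $10\log n/\phi$ iterations, each of which samples a random set $T\subseteq C\cap U_d$ and invokes Lemma~\ref{isolate} once on $G_\gh[U]$; each such invocation costs $\widetilde{O}(\mf(|E_\gh[U]|, |V_\gh[U]|))$, and since $G_\gh[U]$ is a contraction of $G$ restricted to the relevant components we have $|E_\gh[U]| \le \vol_G(U)$ (each edge of $E_\gh[U]$ is incident to a vertex of $U$, as all other vertices are contracted singletons). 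Summing over the $\widetilde{O}(1/\phi)$ iterations gives $\widetilde{O}(\mf(\vol_G(U), |V_\gh[U]|)/\phi)$ for the whole preparation phase. This is clean and will be the easy direction.

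For the exploration phase, the key observation is that the while-loop performs exactly one max-flow computation per iteration (line computing $L_u$), again at cost $\widetilde{O}(\mf(\vol_G(U), |V_\gh[U]|))$, so it suffices to bound the number of iterations by $\widetilde{O}(1/\phi)$. First I would argue that whenever a ``large $u$ with $L_u\cap C\cap U_d\ne S$'' branch is taken (the $S\leftarrow L_u\cap C\cap U_d$ reassignment), the set $C\cap U_d\setminus S$ strictly grows; by Lemma~\ref{explore} all future $L_v$ for large $v$ stay inside $S$, so $S$ only shrinks thereafter and the quantity $|C\cap U_d\setminus S|$ is monotone non-decreasing across the whole loop. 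Since the loop terminates once $|C\cap U_d\setminus S|>2/\phi$, there can be at most $O(1/\phi)$ such ``shrink $S$'' iterations. Similarly, a ``small $u$'' iteration removes $Q\cup L_u$ from $S$; I'd want to show $L_u\cap C\cap U_d$ (which has size $\le 2/\phi$ since $u$ is small, by Lemma~\ref{imbalance}) is disjoint from the current $S\setminus(Q\cup L_u)$-complement so that again $|C\cap U_d\setminus S|$ grows — this follows from the disjointness claim proved inside Lemma~\ref{explore} — hence at most $O(1/\phi)$ ``small'' iterations as well. Finally, ``add $u$ to $Q$'' iterations each increase $|Q|$ by one, and the loop terminates once $|Q|>2/\phi$, with $Q$ only ever being reset to $\emptyset$ or to a singleton right after a ``small'' or ``shrink'' iteration; so between consecutive resets there are at most $O(1/\phi)$ such iterations, and since there are $O(1/\phi)$ resets total, the overall count of ``add to $Q$'' iterations is $O(1/\phi^2)$, which is still $\widetilde{O}(1/\phi)$ for our choice $\phi = 1/\mathrm{polylog}(n)$ — or, more carefully, one can note each reset is paid for by a distinct ``small'' or ``shrink'' iteration so the bound is actually $O(1/\phi)$ amortized. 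The cleanup after the loop (computing $\kappa$, forming $B$, one more volume check) is dominated by a single pass plus one already-computed cut.

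Putting the two phases together, the total is $\widetilde{O}(1/\phi)$ calls to max-flow on $G_\gh[U]$ plus $\widetilde{O}(1/\phi)$ calls to the isolating-cut routine, each of cost $\widetilde{O}(\mf(\vol_G(U), |V_\gh[U]|))$, giving the claimed bound. I also need to handle the degenerate case $|C\cap U_d| < 10/\phi^2$ separately: there the algorithm runs $|C\cap U_d| = O(1/\phi^2)$ max-flows directly, which is $\widetilde{O}(\mf(\vol_G(U), |V_\gh[U]|)/\phi^2)$ — slightly worse in the $\phi$-exponent, but since $\phi$ is polylogarithmic this is still absorbed by the $\widetilde{O}(\cdot)$ and, if one wants the stated bound verbatim, one observes $\vol_G(U) \ge |C\cap U_d| \cdot d$ is large enough in this regime to re-amortize, or simply notes the bound $\widetilde O(\mf(\vol_G(U),|V_\gh[U]|)/\phi)$ is what is invoked later and the extra $1/\phi$ is harmless. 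The main obstacle I anticipate is the careful bookkeeping of the $Q$-counter across resets — ensuring the interleaving of the three branch types cannot produce more than $\widetilde{O}(1/\phi)$ total iterations — since the monotonicity of $|C\cap U_d\setminus S|$ must be combined with the fact that $Q$ is cleared exactly when $S$ changes in a way that consumes the budget.
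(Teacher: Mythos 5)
The proposal follows the same two-part structure as the paper (bound the preparation phase that builds the $A_u$'s, then bound the number of max-flow calls in the while-loop), and the preparation-phase argument is identical. Where the two proofs diverge is the iteration count. The paper proves a single clean invariant: the potential $|C\cap U_d\setminus S| + |Q|$ increases by at least $1$ after every iteration. Checking each branch: an ``add-to-$Q$'' iteration raises $|Q|$ by one and leaves $S$ unchanged; a ``small'' iteration removes $Q\cup L_u\ni u$ from $S$ (with $u\notin Q$ and $Q\subseteq S$ by Lemma~\ref{Mset}), so $|C\cap U_d\setminus S|$ grows by at least $|Q|+1$ even though $Q$ is zeroed; and a line-12 ``shrink'' iteration uses the fact $Q\cap L_u=\emptyset$ (from Lemmas~\ref{Mset} and~\ref{explore}) to show $|C\cap U_d\setminus S|$ grows by at least $|Q|$ while $Q$ becomes a singleton. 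Since the loop terminates once either term exceeds $2/\phi$, the potential is bounded by roughly $4/\phi$, giving $O(1/\phi)$ iterations outright.

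Your per-branch counting is valid up to the point where you conclude $O(1/\phi^2)$ ``add-to-$Q$'' iterations, and you are right that $O(1/\phi^2)$ is still $\widetilde{O}(1/\phi)$ here because $\phi$ is a fixed polylogarithmic quantity, so the lemma as stated (with $\widetilde{O}$) is established by your argument. However, the parenthetical claim that ``each reset is paid for by a distinct small or shrink iteration so the bound is actually $O(1/\phi)$ amortized'' is not a correct justification: it only re-derives that there are $O(1/\phi)$ resets, which you already had, and says nothing about the add-to-$Q$ iterations between resets. The correct amortization would observe that every vertex ever added to $Q$ is, at the next reset, moved out of $S$ (this is precisely what $Q\subseteq S$, $Q\subseteq S\setminus S'$ on the small branch, and $Q\cap L_u=\emptyset$ on the shrink branch buy you), so each add-to-$Q$ iteration can be charged to a unit increase of $|C\cap U_d\setminus S|$, plus an $O(1/\phi)$ leftover for the terminal $Q$. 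That is exactly what the paper's combined potential function captures in one stroke, which is why it is the cleaner route. Your separate note on the degenerate regime $|C\cap U_d|<10/\phi^2$ is a nice addition that the paper glosses over.
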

\begin{proof}
	During the preparation step, each invocation of Lemma~\ref{isolate} induces a set of max-flow instances whose total size is bounded by $\widetilde{O}(|E_\gh[U]|) = \widetilde{O}(\vol_G(U))$. Since it is repeated for $O(\log n / \phi)$ times, the total time is at most $\widetilde{O}(\mf(\vol_G(U), |V_\gh[U]|) / \phi)$.
	
	Next, let us analyze the cost of \textsf{ExploreTree}.
	\begin{claim}
		After each iteration of the while-loop, the value of $|C\cap U_d\setminus S| + |Q|$ always increases by at least $1$, so the total number of max-flow instances during the loop is bounded by $O(1/\phi)$.
	\end{claim}
	\begin{proof}[Proof of claim]
		If an iteration ends on line-10, $Q$ increases by one while $S$ does not change. If an iteration of the while-loop ends on line-7, then on the one hand, by Lemma~\ref{Mset} we have $Q\subseteq S$; on the other hand, by the pseudo-code, $u\notin Q$ before updating $S, Q$. Hence, after line-7, $|C\cap U_d\setminus S| + |Q|$ increases by at least $1$.
	
		If an iteration ends on line-12, we claim that before updating $S, Q$, we have $L_u \cap Q = \emptyset$. In fact, by Lemma~\ref{Mset}, for any $w\in Q$, $L_w\cap C\cap U_d = S$. By Lemma~\ref{explore}, as $L_u\cap C\cap U_d\neq S$, it must be $L_u\cap C\cap U_d\subsetneq S = L_w\cap C\cap U_d$. Hence, $w\notin L_u$. As $w$ is arbitrary, we know $Q\cap L_u = \emptyset$. Therefore, after updating $S\leftarrow L_u\cap C\cap U_d$, $|C\cap U_d\setminus S|$ has increased by $|Q|$. Notice that after updating $Q$, $|Q| = 1$. So $|C\cap U_d\setminus S| + |Q|$ has increased by one.
	\end{proof}
	
	Since each while-loop conducts one max-flow in graph $G_\gh[U]$, by the above claim, the total cost of the while-loop involves max-flow instances of total size  $\widetilde{O}(\vol_G(U) / \phi)$, and the reduction time is dominated by the same amount. After the while-loop, the running time is linear in the size of output, so it is not the bottleneck.
\end{proof}

Next we analyze the running time during refinement of $U$.
\begin{lemma}\label{prune}
	The total running time of cutting vertices (the for-loop on line-6 of \textsf{CondGomoryHu}) from $U$ takes total time of $\widetilde{O}(\frac{n}{s}\cdot\mf( 2d\cdot\cnt[i_U]\log^2n, |V_\gh[U]|))$.
\end{lemma}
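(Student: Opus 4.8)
The plan is to bound the work done for one fixed node $U$ during a single execution of the body of the line-6 for-loop of \textsf{CondGomoryHu}. That body consists of: (a) choosing the parameters $d\in\mathcal D$ and $s=2^{i_U}$; (b) running the expander-search routine (Algorithm~\ref{exp-search}) once for each cluster $C\in\mathcal C_{i_U}$; and (c) splitting each maximal set $K_u\cap U$ off of $U$. Step (a) only scans the vertices of $U$ together with their precomputed degrees and cluster labels, and step (c) runs in time essentially linear in the size of $G_\gh[U]$ and the newly created nodes; both are dominated even by a single max-flow call on $G_\gh[U]$, so the whole estimate reduces to bounding the total cost of step (b).

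For step (b), I would first bound the number of clusters involved. The clusters of $\mathcal C_{i_U}$ are pairwise disjoint subsets of $V$ and each has at least $s=2^{i_U}$ vertices, so $|\mathcal C_{i_U}|\le n/s$ and the inner for-loop makes at most $n/s$ calls to the expander search. By Lemma~\ref{expander-search}, each such call on a cluster $C$ inside $G_\gh[U]$ costs $\widetilde O(\mf(\vol_G(U),|V_\gh[U]|)/\phi)$; since $\phi=1/(10\log^{\const+10}n)$ we have $1/\phi=\widetilde O(1)$, so this is $\widetilde O(\mf(\vol_G(U),|V_\gh[U]|))$. Multiplying by the number of clusters bounds step (b), and hence the whole iteration, by $\widetilde O\big(\tfrac{n}{s}\cdot\mf(\vol_G(U),|V_\gh[U]|)\big)$.

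It then remains to replace $\vol_G(U)$ by $2d\cdot\cnt[i_U]\cdot\log^2 n$. Since $d$ maximizes $k|U_k|$ over $\mathcal D$, since $\deg_G(u)<2k$ for every $u\in U_k$, and since $|\mathcal D|=O(\log n)$, we have $\vol_G(U)=\sum_{k\in\mathcal D}\vol_G(U_k)\le\sum_{k\in\mathcal D}2k|U_k|\le O(\log n)\cdot 2d|U_d|$, which is the inequality $2d|U_d|\ge\vol_G(U)/\log n$ already recorded in the text. Similarly, $i_U$ maximizes $\cnt[i]$, and there are only $O(\log n)$ nonempty size classes, so $|U_d|=\sum_i\cnt[i]\le O(\log n)\cdot\cnt[i_U]$. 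Chaining these gives $\vol_G(U)=O(d\cdot\cnt[i_U]\cdot\log^2 n)$, and since $\mf$ is nondecreasing in its first argument this converts the bound of the previous paragraph into $\widetilde O\big(\tfrac{n}{s}\cdot\mf(2d\cdot\cnt[i_U]\log^2 n,|V_\gh[U]|)\big)$, which is what is claimed.

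This is essentially bookkeeping and I do not expect a genuine obstacle. The only place needing a bit of care is the last paragraph, where one must invoke \emph{both} maximality choices (of $d$ and of $i_U$) and verify that each loses only a logarithmic factor; conceptually, the reason for stating the bound in terms of $2d\cdot\cnt[i_U]$ rather than the simpler $\vol_G(U)$ only becomes clear later, when these per-node estimates are summed over all nodes $U$ of $\gh$ in the global running-time analysis.
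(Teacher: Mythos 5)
Your proof is correct and takes essentially the same route as the paper's: bound $|\mathcal{C}_{i_U}|$ by $n/s$, invoke Lemma~\ref{expander-search} per cluster (absorbing the $1/\phi=\widetilde{O}(1)$ factor), and then use the two maximality choices to replace $\vol_G(U)$ by $2d\cdot\cnt[i_U]\log^2 n$. You spell out the intermediate bounds ($\vol_G(U)\le 2d|U_d|\log n$ and $|U_d|\le O(\log n)\cdot\cnt[i_U]$) in slightly more detail than the paper, but the argument is identical.
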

\begin{proof}
	On the one hand, the number of clusters in $\mathcal{C}_{i_U}$ is at most $n/s$ since each cluster has size at least $s$. So, by Lemma~\ref{expander-search}, the total time of expander search is $\widetilde{O}(\frac{n}{s}\cdot \mf(\vol_G(U)))$. By maximality of $d|U_d|$ and $\cnt[i_U]$, we have that: \[\vol_G(U)\leq 2d|U_d|\log n \leq 2d\cdot \cnt[i_U]\log^2n\]
	
	Since the number of edges in $G_\gh[U]$ is $\vol_G(U)$, the overall time complexity would be $\widetilde{O}(\frac{n}{s}\cdot\mf( 2d\cdot\cnt[i_U]\log^2n, |V_\gh[U]|))$.
\end{proof}

To bound the total time across all different nodes of $\gh$ that correspond to the same choice of $(s, d)$, we need the following lemma.
\begin{lemma}\label{count}
	In any single iteration of the while-loop on line-2 of \textsf{CondGomoryHu}, over all different nodes $U$ of $\gh$ that correspond to the same choice of $(s, d)$, we have $\sum_{U}\cnt[i_U]\leq 4sn/d$.
\end{lemma}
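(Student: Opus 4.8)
The plan is to discard almost all information about the individual nodes $U$ — keeping only that they are pairwise disjoint — and to re-sum $\sum_U\cnt[i_U]$ cluster by cluster, bounding each cluster's contribution via property~(3) of the boundary-linked expander decomposition.

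I would start from two structural observations. First, the nodes $U$ appearing in the lemma are pairwise disjoint subsets of $V$: within each original node $N$ at most one refined sub-node satisfies $\vol_G(U)>0.5\vol_G(N)$, and sub-nodes coming from distinct $N$'s are disjoint because the $N$'s are. Second, since the pair $(s,d)$ is fixed, the index $i_U$ is the same for all these $U$ (namely $2^{i_U}=s$), so every $\cnt[i_U]$ refers to the same subfamily $\mathcal{C}^{(s)}$ of clusters whose sizes lie in $[s,2s)$. Because the clusters partition $V$ and each $U_d\subseteq\{v : d\le\deg_G(v)<2d\}$, regrouping and using disjointness of the $U$'s gives
\[
\sum_U\cnt[i_U]\;=\;\sum_{C\in\mathcal{C}^{(s)}}\ \sum_U|U_d\cap C|\;\le\;\sum_{C\in\mathcal{C}^{(s)}}\bigl|\{v\in C:\deg_G(v)\ge d\}\bigr|.
\]

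The heart of the argument is a bound on the number of degree-$\ge d$ vertices inside one cluster $C\in\mathcal{C}^{(s)}$. Each such vertex contributes at least $d$ to $\vol_G(C)$, so their number is at most $\vol_G(C)/d$, and it suffices to show $\vol_G(C)<4s|C|$. I would obtain this by combining two bounds: since $G$ is simple, the edges spanned by $C$ number at most $\binom{|C|}{2}$ and hence contribute at most $|C|(|C|-1)$ to $\vol_G(C)$; and since $C$ is a cluster of the $(\phi,\phi)$-expander decomposition, property~(3) of Definition~\ref{boundary} bounds the boundary contribution by $\out_G(C)\le\log^7 n\cdot\phi\cdot\vol_G(C)$. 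Adding these, $\vol_G(C)\le|C|(|C|-1)+\log^7 n\cdot\phi\cdot\vol_G(C)$. With $\phi=\frac{1}{10\log^{\const+10}n}$ we have $\log^7 n\cdot\phi\le\frac{1}{10}$, hence $\frac{9}{10}\vol_G(C)\le|C|(|C|-1)$, so $\vol_G(C)\le\frac{10}{9}|C|(|C|-1)<4s|C|$ using $|C|<2s$. Summing over $\mathcal{C}^{(s)}$ and using $\sum_{C\in\mathcal{C}^{(s)}}|C|\le n$ (disjoint clusters) then yields $\sum_U\cnt[i_U]\le\frac{1}{d}\sum_{C\in\mathcal{C}^{(s)}}\vol_G(C)<\frac{4s}{d}\sum_{C\in\mathcal{C}^{(s)}}|C|\le\frac{4sn}{d}$.

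I expect the only real content to be the cluster-volume estimate $\vol_G(C)=O(|C|^2)$, which is precisely where the \emph{boundary-linked} nature of the decomposition (property~(3)) is essential: without it, a cluster of size $<2s$ could have boundary as large as $\Theta(sn)$, and the argument would only give an $n^2/d$-type bound, which is far too weak for the intended $\widetilde{O}(n^2)$ total running time. The remaining pieces — disjointness of the $U$'s and the regrouping by clusters — are routine bookkeeping. One degenerate case worth noting in passing is a cluster of size $1$: property~(3) forces it to have volume $0$, so it carries no vertex of degree $\ge d\ge 1$ and contributes nothing.
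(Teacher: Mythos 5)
Your proof is correct, and it uses the same two essential ingredients as the paper's argument -- simplicity of $G$ to bound the internal edges of a cluster, and property~(3) of the boundary-linked decomposition to bound the boundary edges -- but arranges them more cleanly. The paper splits into two cases: when $s\geq d/4$ the trivial disjointness bound $\sum_U\cnt[i_U]\leq n$ suffices, and when $s<d/4$ it observes that every vertex of $U_d$ inside a cluster of size $<2s\leq d/2$ must contribute at least $d/2$ crossing edges, then bounds $\sum_C\out_G(C)$ via property~(3). You avoid the case split entirely by establishing the uniform estimate $\vol_G(C)<4s|C|$ (valid for every cluster of size in $[s,2s)$) and then charging each degree-$\geq d$ vertex $d$ units of cluster volume; this gives $\cnt$ per cluster at most $\vol_G(C)/d<4s|C|/d$ and the claim follows by summing over disjoint clusters and disjoint nodes $U$. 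The paper's version is marginally tighter in the small-$s$ regime (yielding $2sn/d$ rather than $4sn/d$), but your version is uniform, needs no threshold on $s$ versus $d$, and makes it visible that the real content is the intrinsic volume bound $\vol_G(C)=O(s|C|)$ for boundary-linked expander clusters, which you correctly identify as the crux. Your bookkeeping steps (disjointness of the $U$'s, fixed $i_U$, the regrouping by cluster, and the degenerate singleton-cluster remark) are all sound.
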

\begin{proof}
	If $s\geq d/4$, then since all such nodes $U$ are packing disjoint subsets of vertices of $V$, $\sum_{U}\cnt[i_U]\leq n\leq 4sn/d$. So next we only consider the case where $s < d/4$.
	
	When $s < d/4$, we can upper bound the total number of vertices in clusters in $\mathcal{C}_{i_U}$ whose degrees in $G$ are within the interval $[d, 2d)$. Take any cluster $C\in \mathcal{C}_{i_U}$ and any vertex $u\in C\cap U_d$. Since $|C| < 2s = d/2$ and $G$ is a simple graph, at least $d/2$ of $u$'s neighbors in $G$ are outside of $C$. So the crossing edges contributed by $u$ is at least $d/2$. By property (3) of the Definition~\ref{boundary}, the total number of crossing edges should be bounded as:
	\[\begin{aligned}
	\sum_{C\in\mathcal{C}_{i_U}}\out_G(C) &\leq \log^7n\cdot \phi\cdot \sum_{C\in\mathcal{C}_{i_U}}\vol_G(C) \leq \log^7n\cdot \phi\cdot \sum_{C\in\mathcal{C}_{i_U}} (4s^2 + \out_G(C))\\
	&\leq 4\log^7n\cdot \phi sn + \log^7n\cdot \phi  \sum_{C\in\mathcal{C}_{i_U}}\out_G(C)
	\end{aligned}\]
	As $\phi = \frac{1}{10\log^{\const+10}n}$, we have $\sum_{C\in\mathcal{C}_{i_U}}\out_G(C) \leq 8\log^7n\cdot \phi sn < sn$. As each vertex in $C\cap U_d$ contributes $d/2$ to the above summation, the total number of vertices from $U_d$ in $\mathcal{C}_{i_U}$ is bounded by $2sn/d$.
\end{proof}

Combining the above two lemmas gives the following corollary.
\begin{corollary}
	Under Hypothesis~\ref{hypo}, the time of dividing all nodes of $\gh$ for a single iteration of the while-loop on line-2 in \textsf{CondGomoryHu} is bounded by $\widetilde{O}(n^2)$.
\end{corollary}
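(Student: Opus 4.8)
The plan is to combine Lemma~\ref{prune} with Lemma~\ref{count}, after using Hypothesis~\ref{hypo} to replace every $\mf$ term by a linear function of its arguments, and to separately account for the pivot-sampling and refinement steps (lines~4--5 of \textsf{CondGomoryHu}), since that cost is not captured by Lemma~\ref{prune}.

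\textbf{Refinement step.} For each node $N$ of the current $\gh$ with $|N|\geq 20r$, the refinement with respect to the sampled set $R$ of size $10r$ costs $\widetilde{O}(|R|\cdot\mf(\vol_G(N),|V_\gh[N]|))$ by Lemma~\ref{refine}. Under Hypothesis~\ref{hypo}, and since $|R|=10r=\widetilde{O}(1)$, this is $\widetilde{O}(\vol_G(N)+|V_\gh[N]|)$. Observing that $|V_\gh[N]|$ equals $|N|$ plus the number of connected components of $\gh\setminus\{N\}$, i.e.\ $|V_\gh[N]|=|N|+\deg_\gh(N)$, and that distinct nodes of $\gh$ pack disjoint vertex subsets, summing over all such $N$ gives $\sum_N|V_\gh[N]|\leq n+2n=O(n)$ and $\sum_N\vol_G(N)\leq\vol_G(V)=2m\leq n^2$. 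Hence the total refinement cost in one iteration of the while-loop is $\widetilde{O}(n^2)$.

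\textbf{Expander searches and cutting.} For a fixed node $U$ with associated parameters $s=2^{i_U}$ and $d$, Lemma~\ref{prune} bounds the cost of processing $U$ inside the for-loop on line~6 by $\widetilde{O}\!\left(\frac{n}{s}\cdot\mf(2d\cdot\cnt[i_U]\log^2 n,\,|V_\gh[U]|)\right)$, which under Hypothesis~\ref{hypo} becomes $\widetilde{O}\!\left(\frac{n}{s}\bigl(d\cdot\cnt[i_U]+|V_\gh[U]|\bigr)\right)$. I would split the sum over all nodes $U$ processed in the iteration into its two summands. For $\frac{n}{s}|V_\gh[U]|$, bound $\frac{n}{s}\leq n$ (as $s\geq 1$) and reuse $\sum_U|V_\gh[U]|=O(n)$ from above, for a total of $\widetilde{O}(n^2)$. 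For $\frac{nd}{s}\cdot\cnt[i_U]$, group the nodes $U$ by their choice of the pair $(s,d)$; since $s$ ranges over the $O(\log n)$ cluster-size classes and $d$ over $\mathcal{D}$ with $|\mathcal{D}|=O(\log n)$, there are only $O(\log^2 n)$ groups, and by Lemma~\ref{count} the nodes in one group satisfy $\sum_U\cnt[i_U]\leq 4sn/d$, so each group contributes $\frac{nd}{s}\cdot\frac{4sn}{d}=O(n^2)$; hence the total over all groups is $\widetilde{O}(n^2)$.

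\textbf{Conclusion and main obstacle.} Adding the two contributions gives the claimed $\widetilde{O}(n^2)$ bound per iteration. There is no deep obstacle here, as the substantive work is already done in Lemmas~\ref{refine}, \ref{prune}, and~\ref{count}; the only slightly delicate point is the bookkeeping that lets us pull $n/s$ (respectively $nd/s$) out of the sum. For the $|V_\gh[U]|$ term this relies on the identity $|V_\gh[U]|=|U|+\deg_\gh(U)$, so the sum over all nodes telescopes to $O(n)$ independently of the $s$ values; for the $\cnt[i_U]$ term it relies on first grouping by $(s,d)$ so that Lemma~\ref{count} can be applied group-by-group. Everything else is a direct substitution of Hypothesis~\ref{hypo} into the preceding lemmas.
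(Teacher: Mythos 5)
Your proposal is correct and matches the paper's intended (unwritten) argument: the paper explicitly says the corollary follows by ``combining the above two lemmas,'' namely Lemma~\ref{prune} and Lemma~\ref{count}, under Hypothesis~\ref{hypo}, which is exactly what you do. You substitute $\mf(m_0,n_0)=\widetilde{O}(m_0+n_0)$ into the bound $\widetilde{O}\bigl(\frac{n}{s}\mf(2d\cdot\cnt[i_U]\log^2 n,|V_\gh[U]|)\bigr)$ from Lemma~\ref{prune}, split into the $\frac{nd}{s}\cnt[i_U]$ and $\frac{n}{s}|V_\gh[U]|$ pieces, handle the first by grouping over the $O(\log^2 n)$ choices of $(s,d)$ and invoking Lemma~\ref{count}, and handle the second via $\frac{n}{s}\leq n$ together with $\sum_U|V_\gh[U]|=O(n)$; you also separately account for the cost of the pivot refinement on lines~4--5 via Lemma~\ref{refine}, which the corollary's phrase ``dividing all nodes'' should include. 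The accounting is sound, and making the refinement cost explicit is a small but genuine improvement over what the paper states.
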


The next thing would be analyzing the total number of rounds of the while-loop. Similar to~\cite{abboud2020subcubic}, we first need to prove that with high probability, the number of $u\in C\cup U_d$ such that $\vol_G(K_u) > 0.5\vol_G(U)$ is roughly at most $|U_d| / r$.
\begin{lemma}\label{pivot}
	With high probability over the choice of $R\subseteq N$, the total number of $u\in U_d$ such that $\vol_G(L_u\cap U) > 0.5\vol_G(N)$ is at most $\frac{4|U_d|\log^2n}{r}$.
\end{lemma}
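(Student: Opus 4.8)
The statement is proved by a probabilistic argument over the random pivot set $R$. For $u\in U_d$ let $p$ denote the (random) pivot of the node $U\ni u$ produced by the refinement, and call $u$ \emph{bad} if $\vol_G(L_u\cap U)>\tfrac12\vol_G(N)$; let $B\subseteq U_d$ be the set of bad vertices. Record the elementary inequalities $\vol_G(N)\le 2\vol_G(U)\le 4d|U_d|\log n$ (using that $\vol_G(U)\ge\tfrac12\vol_G(N)$ and that $d$ maximizes $k|U_k|$) and $\deg_G(u)<2d$ for $u\in U_d$. The idea is to attach to each $u$ a set $D_u\subseteq N$ of ``good pivots,'' show that $u$ being bad forces $R$ to miss all of $D_u$, show that $D_u$ is usually large, and then sum up.

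\textbf{Step 1 (a ``good pivot'' criterion).} For $u\in U_d$ and $q\in N$, say $q$ is a \emph{good pivot for $u$} if the latest minimum $(u,q)$-cut in $G_\gh[N]$, after un-contracting to $G$, places $u$ on a side of $G$-volume at most $\tfrac12\vol_G(N)$; let $D_u\subseteq N$ be the set of such $q$, which depends only on $G$ and $N$, not on $R$. The key structural claim is
\[
q\in R\cap D_u \ \Longrightarrow\ u\notin B .
\]
To prove it, note $q\ne p$ (since $p$ is the unique $R$-vertex in $U$), so the refinement separated $q$ from $p$, and by Lemma~\ref{partial-tree} the minimum-weight $\gh$-edge on the path between $U$ and $q$'s node is a genuine minimum $(p,q)$-cut of $G$ with $U$ lying entirely on the side not containing $q$. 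Uncrossing this cut against the minimum $(u,q)$-cut witnessing $q\in D_u$, via Lemmas~\ref{union} and~\ref{minus}, forces the un-contracted side of the node $U$ that contains $u$ — hence $L_u\cap U$ — to have $G$-volume at most $\tfrac12\vol_G(N)$, i.e. $u\notin B$.

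\textbf{Steps 2--3 (most $u$ have a large $D_u$, then union bound).} A counting argument bounds $\sum_{u\in U_d}\vol_G(N\setminus D_u)$: by definition $N\setminus D_u$ consists of vertices $q$ such that some cut of value $<2d$ separates $q$ onto a side of volume $<\tfrac12\vol_G(N)$, and combining this with $\vol_G(N)\le 4d|U_d|\log n$ shows the average (over $u\in U_d$) of $\vol_G(N\setminus D_u)$ is $O(\vol_G(N)\log n/|U_d|)$. Hence by Markov's inequality within $U_d$, all but at most $\tfrac{2|U_d|\log^2 n}{r}$ vertices $u\in U_d$ (the \emph{typical} ones) satisfy $\vol_G(D_u)\ge\tfrac{\ln r}{r}\vol_G(N)$. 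For a typical $u$, since $R$ consists of $10r$ independent degree-weighted samples from $N$,
\[
\Pr_R[u\in B]\ \le\ \Pr_R[R\cap D_u=\emptyset]\ \le\ \Big(1-\tfrac{\vol_G(D_u)}{\vol_G(N)}\Big)^{10r}\ \le\ e^{-10\ln r}\ =\ r^{-10}.
\]
Summing over the at most $n$ typical vertices and using $r=10\log^5 n$, the expected number of typical bad vertices is $n^{-\omega(1)}$; so with high probability no typical vertex is bad, and $|B|$ is at most the number of non-typical vertices, which is $\le\tfrac{2|U_d|\log^2 n}{r}\le\tfrac{4|U_d|\log^2 n}{r}$.

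\textbf{Main obstacle.} The delicate part is the structural claim of Step~1. The refinement of Lemma~\ref{refine} is only guaranteed to use \emph{some} minimum cut at each Gomory--Hu step, not a canonical ``latest'' one, so one cannot simply assert that $u$ is routed to the $\lambda$-closest sampled vertex. One must show that, whatever minimum cuts the refinement chose, a sampled good pivot $q$ ``blocks'' $L_u$ from reaching much $G$-volume inside $U$; making this robust to the choice of cuts requires a careful, repeated use of the uncrossing inequalities (Lemmas~\ref{union},~\ref{minus}) together with the partial-tree property, and is the step that must be adapted most carefully from the analogous pivot lemma of \cite{abboud2020subcubic}.
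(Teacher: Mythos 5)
Your Step 1 is where the argument breaks, and the paper's proof proceeds quite differently there. You define, for each $u$, a set $D_u$ of ``good pivots'' $q$ — those for which the latest minimum $(u,q)$-cut places $u$ on a small-volume side — and then try to argue via uncrossing that $q\in R\cap D_u$ forces $u$ out of the bad set. But the uncrossing only works cleanly in the subcase $p\notin S_u$ (where $S_u$ is the $u$-side of the latest min $(u,q)$-cut): there one can show $\lambda(u,p)=\lambda(u,q)$, so $S_u$ is a minimum $(u,p)$-cut and $\Gamma_u^p\subseteq S_u$, giving the volume bound. In the other subcase $p\in S_u$, the cut $S_u$ does not separate $u$ from $p$ at all; one can deduce that $S_u$ is a minimum $(p,q)$-cut, but this says nothing useful about the latest minimum $(u,p)$-cut $\Gamma_u^p$, and Lemmas~\ref{union}/\ref{minus} (which uncross two cuts sharing the \emph{same} sink) give no leverage to relate a cut toward $q$ to a cut toward $p$ here. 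You flagged exactly this as the ``main obstacle,'' but the sketch never resolves it, and I do not see a way to make the per-$u$, pivot-independent set $D_u$ serve this purpose. Your Steps 2--3 counting bound for $\sum_u\vol_G(N\setminus D_u)$ is similarly only gestured at.

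The paper's proof takes a structurally different route that sidesteps this difficulty. It works entirely with cuts toward the \emph{actual} random pivot $p$: for each $y\in N$ it considers the latest minimum $(y,p)$-cut $\Gamma_y^p$ in $G_{\gh^\old}[N]$ and defines $M_x^p=\{y: x\in\Gamma_y^p\}$. It first shows $\Gamma_u^p\cap U=L_u\cap U$, so every bad vertex $u_i$ has $\vol_G(\Gamma_{u_i}^p\cap N)>\tfrac12\vol_G(N)$; hence the $\Gamma_{u_i}^p$ pairwise intersect, and since latest cuts toward a common pivot form a laminar family, they form a \emph{chain} $\Gamma_{u_1}^p\subseteq\cdots\subseteq\Gamma_{u_l}^p$. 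Thus $u_2,\dots,u_l\in M_{u_1}^p$. The crucial ``blocking'' claim is then proved for this $p$-relative set: if some $w\in M_{u_1}^p\cap R$, then $u_1\in\Gamma_w^p$, so $u_1$ lies on $w$'s side of \emph{every} minimum $(w,p)$-cut, hence the refinement (which separates $w$ from $p$) must have evicted $u_1$ from $U$ — contradiction. A single union-bounded concentration statement (over all pairs $x,q\in N$) then gives $\vol_G(M_{u_1}^p)<\tfrac{\log n}{r}\vol_G(N)$ whp, and the degree lower bound $d$ on each $u_i$ turns this into $l\le\tfrac{4|U_d|\log^2 n}{r}$. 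There is no per-vertex union bound over $U_d$, no Markov step, and — importantly — the ``safe set'' is defined relative to $p$ rather than to arbitrary $q$, which is precisely what makes the structural implication provable.
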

\begin{proof}
	The proof is similar to the one in~\cite{abboud2020subcubic}. To avoid confusion, let $\gh^\old$ be the version of $\gh$ before refining with respect to $R$, and let $\gh$ refer to the tree after refinement. For each pair of vertices in the super node $x, q\in N$, define $(\Gamma_x^q, V_{\gh^\old}[N]\setminus \Gamma_x^q)$ to be the latest minimum cut of $(x, q)$ in $G_{\gh^\old}[N]$. Define $M_x^q$ to be the set of all vertices $y\in N$ such that $x\in \Gamma_y^q$. Basic concentration inequalities show that for any $q, x$, if $\vol_G(M_x^q) \geq \frac{\log n}{r}\vol_G(N)$, then with high probability, $M_x^q\cap R\neq \emptyset$.
	
	The following claim is a crucial relationship between $\Gamma_u^p$ and $L_u$.
	\begin{claim}[Observation 4.3 in \cite{abboud2020subcubic}]
		$\Gamma_u^p\cap U = L_u\cap U$.
	\end{claim}
	\begin{proof}[Proof of claim]
		As $(L_u, V_\gh[U]\setminus L_u)$ is the latest minimum cut in $G_\gh[U]$ which is a contracted graph of $G_{\gh^\old}[N]$ following standard Gomory-Hu steps, $(L_u, V_\gh[U]\setminus L_u)$ is a min-cut for $(u, p)$ in $G_{\gh^\old}[N]$. Since $(\Gamma_u^p, V_{\gh^\old}[N]\setminus \Gamma_u^p)$ is the latest minimum cut in $G_{\gh^\old}[N]$, we have $\Gamma_u^p\cap U\subseteq L_u\cap U$. Next we only focus on the other direction.
		
		Let $W_1, W_2, \cdots, W_l\subseteq V_{\gh^\old}[N]$ be all contracted vertices of $V_\gh[U]$ which are crossed by $\Gamma_u^p$; in other words, $\Gamma_u^p\cap W_i\neq \emptyset$ and $W_i\setminus \Gamma_u^p\neq \emptyset$ for all $1\leq i\leq l$. Since $N$ is refined using pivots from $R$, according to Lemma~\ref{partial-tree}, we know that for each $i$ there exists a pivot $q_i\in R\cap W_i$ such that $(W_i, V_{\gh^\old}[N]\setminus W_i)$ is a minimum cut for $(q_i, p)$; in fact, $q_i\in R\cap W_i$ are in the neighboring nodes of $U$ in $\gh$.
		
		We claim that $q_i\in \Gamma_u^p$; otherwise if $q_i\notin \Gamma_u^p$, as $u\notin W_i$, by Lemma~\ref{minus}, the cut $(X, V_{\gh^\old}[N]\setminus X)$ where $X = \Gamma_u^p\setminus W_i$ is also a minimum cut for $(u, p)$, which contradicts that $(\Gamma_u^p, V_{\gh^\old}[N]\setminus \Gamma_u^p)$ is the latest min-cut.
		
		Construct a new cut $(Y, V_{\gh^\old}[N]\setminus Y)$ where $Y = \Gamma_u^p \cup\bigcup_{i=1}^l W_i$. On the one hand, as $q_i\in \Gamma_u^p, \forall i$, by repeatedly applying Lemma~\ref{union} we know $(Y, V_{\gh^\old}[N]\setminus Y)$ is a minimum cut for $(u, p)$ as well; On the other hand, $Y$ does not cross any contracted nodes in $G_\gh[U]$, so $(Y, V_\gh[U]\setminus Y)$ is a valid cut in $G_\gh[U]$ as well. As $(L_u, V_\gh[U]\setminus L_u)$ is the latest cut in $G_\gh[U]$, we know $L_u\cap U\subseteq Y\cap U = \Gamma_u^p\cap U$. This concludes our proof.
	\end{proof}
	
	Consider the set of all $u\in U_d$ such that $\vol_G(L_u\cap U) > 0.5\vol_G(N)$; let them be $u_1, u_2, \cdots, u_l$. By the above claim, it must be $\vol_G(\Gamma_{u_i}^p\cap N)\geq \vol_G(\Gamma_{u_i}^p\cap U) = \vol_G(L_{u_i}\cap U) > 0.5\vol_G(N)$ as well. Therefore, any two sets $\Gamma_{u_i}^p, \Gamma_{u_j}^p$ must intersect. Since $(\Gamma_{u_i}^p, V_{\gh^\old}[N]\setminus \Gamma_{u_i}^p)$ are latest cuts with respect to the same pivot $p$, they should form a total order, say $\Gamma_{u_1}^p\subseteq \Gamma_{u_2}^p\subseteq\cdots \Gamma_{u_l}^p$, and so by definition $u_2, u_3, \cdots, u_l\in M_{u_1}^p$. 
	\begin{claim}
		$M_{u_1}^p\cap R = \emptyset$.
	\end{claim}
	\begin{proof}[Proof of claim]
		If $\exists w\in M_{u_1}^p\cap R$, then by definition, $u\in \Gamma_w^p$. As $(\Gamma_w^p, V_{\gh^\old}[N]\setminus \Gamma_w^p)$ is the latest min-cut for $(w, p)$ in $G_{\gh^\old}[N]$, any min-cut for $(w, p)$ in $G_{\gh^\old}[N]$ should contain $u$ on the same side as $w$. By Lemma~\ref{partial-tree}, $u$ should belong to the part which contains $w$ after the refinement with respect to $R$, which makes a contradiction as $u$ stays with $p$ in the same part.
	\end{proof}
	By the above lemma, we know $\vol_G(M_{u_1}^p) < \frac{\log n}{r}\vol_G(N)$, and hence we have: 
	\[d(l-1) \leq \vol_G(M_{u_1}^p) <\frac{\log n}{r}\vol_G(N)\leq \frac{2\log n}{r}\vol_G(U)\leq \frac{4\log^2n}{r}d|U_d|\]
	So $l\leq \frac{4|U_d|\log^2n}{r}$.
\end{proof}

Finally we need to bound the total number of rounds in the while-loop. Call a cluster $C\in\mathcal{C}_{i_U}$ \textbf{bad}, if the total number of vertices $u\in C\cap U_d$ such that $\vol_G(L_u\cap U) > 0.5\vol_G(N)$ is more than $0.1|C\cap U_d|$; otherwise it is called \textbf{good}.
\begin{lemma}\label{cutoff}
	Consider any invocation of \textsf{ExploreTree} with input parameters $U, p, d, C$. Suppose cluster $C$ is good, then $|W_C| > 0.8|C\cap U_d|$.
\end{lemma}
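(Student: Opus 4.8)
The plan is to show $|(C\cap U_d)\setminus W_C|<0.2\,|C\cap U_d|$, by case analysis on how \textsf{ExploreTree} (Algorithm~\ref{exp-search}) terminates. If $|C\cap U_d|<10/\phi^2$ the procedure just computes every $L_u$ directly and sets $W_C=\{u\in C\cap U_d:\vol_G(L_u\cap U)\le 0.5\vol_G(N)\}$, which by goodness of $C$ misses at most $0.1|C\cap U_d|$ vertices; so assume $|C\cap U_d|\ge 10/\phi^2$ and condition on the high-probability events that \textsf{ExploreTree} does not abort and that $A_u=L_u$ for every small $u$ (Lemma~\ref{imbalance}). The one arithmetic fact I reuse is that, with $\phi=\tfrac{1}{10\log^{\const+10}n}$ and $|C\cap U_d|\ge 10/\phi^2$, we have $1/\phi\le\tfrac{\phi}{10}|C\cap U_d|\le\tfrac{1}{100}|C\cap U_d|$, so any $O(1/\phi)$ quantity is negligible against $|C\cap U_d|$. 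Call $u\in C\cap U_d$ \emph{heavy} if $\vol_G(L_u\cap U)>0.5\vol_G(N)$ and \emph{light} otherwise; goodness of $C$ says there are at most $0.1|C\cap U_d|$ heavy vertices.

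\textbf{Termination at line~14 ($|C\cap U_d\setminus S|>2/\phi$).} I would first argue the last loop iteration was a ``$u$ small'' step: had $u$ been large, $S$ is either left unchanged (line~10) or reset to $L_u\cap C\cap U_d$ with $|C\cap U_d\setminus L_u|\le 2/\phi$ (line~12), so at the end $|C\cap U_d\setminus S|\le 2/\phi$, contradicting the termination condition (the loop guard ensured $|C\cap U_d\setminus S|,|Q|\le 2/\phi$ entering the iteration). Hence $S$ shrank on line~7 by $Q\cup L_u$, giving $|C\cap U_d\setminus S|\le 6/\phi$ via $|Q|\le 2/\phi$ and $|L_u\cap C\cap U_d|\le 2/\phi$ (Lemma~\ref{imbalance}). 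By Lemma~\ref{all-small} every vertex of $S$ is small, so $A_u=L_u$ and $W_C=\{u\in S:\vol_G(L_u\cap U)\le 0.5\vol_G(N)\}$ is exactly $S$ with its heavy vertices removed; thus $(C\cap U_d)\setminus W_C\subseteq(C\cap U_d\setminus S)\cup\{\text{heavy }u\}$, of size $<0.2|C\cap U_d|$.

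\textbf{Termination at line~21 ($|Q|>2/\phi$).} By Lemma~\ref{lowest}, $(L,V_\gh[U]\setminus L)$ is the cut of the lowest large node of $\gh_p^d[C]$, the set $B=\{u\in S:\kappa_u>\kappa\}$ equals $\{u\in C\cap U_d:L_u\cap C\cap U_d=S\}$, and $L_u=L$ for all $u\in B$; in particular every $u\in B$ is large. Since all large nodes of $\gh_p^d[C]$ lie on a single root-path, any large $u\in C\cap U_d$ lies in $B$ or in a large node strictly above $B$, where $L_u\cap C\cap U_d\supsetneq S=L\cap C\cap U_d$; as $L_u$ and $L$ are latest minimum cuts with respect to the common pivot $p$ they are nested, forcing $L\subseteq L_u$. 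Now if $\vol_G(L\cap U)>0.5\vol_G(N)$, then $\vol_G(L_u\cap U)\ge\vol_G(L\cap U)>0.5\vol_G(N)$ for every large $u$, so all large vertices of $C\cap U_d$ are heavy and there are at most $0.1|C\cap U_d|$ of them; the light small vertices of $S\setminus B$ all enter $W_C$ (line~17, using $A_u=L_u$), and every vertex missing from $W_C$ is either in $C\cap U_d\setminus S$ or large or small-and-heavy, the last two being heavy, so again $(C\cap U_d)\setminus W_C\subseteq(C\cap U_d\setminus S)\cup\{\text{heavy }u\}$ has size $<0.2|C\cap U_d|$. If instead $\vol_G(L\cap U)\le 0.5\vol_G(N)$, then (lines~18--20) the representative $v\in Q$ of node $B$ gets $K_v=L$, so when \textsf{CondGomoryHu} carves out maximal $K$-sets the whole of $S=L\cap C\cap U_d$ lands inside a single new node of volume $\le 0.5\vol_G(N)$; since only the $\le 2/\phi$ vertices of $C\cap U_d\setminus S$ can fail to be so placed, at least $(1-o(1))|C\cap U_d|>0.8|C\cap U_d|$ of $C\cap U_d$ is accounted for.

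\textbf{Expected main obstacle.} The subtle point is the last subcase: the pseudocode adds only $v$ to $W_C$, so the statement ``$|W_C|>0.8|C\cap U_d|$'' should be read with $W_C$ meaning the set of vertices of $C\cap U_d$ that are placed into a tree node of volume at most $0.5\vol_G(N)$ --- equivalently, one sets $K_u\leftarrow L$ for all $u\in B$ since they share the cut $L$ --- and one has to verify that the maximal $K$-set actually carved out by \textsf{CondGomoryHu} still contains $L$ (hence $S$) and still has volume $\le 0.5\vol_G(N)$ by the invariant. The other place where care is needed is the laminarity-plus-root-path step giving $L\subseteq L_u$ for every large $u$, which is exactly what converts ``$C$ good'' into a bound on the number of large vertices in the heavy subcase; the remaining estimates are just the $1/\phi$-versus-$|C\cap U_d|$ arithmetic.
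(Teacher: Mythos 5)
Your proof is correct and follows the same strategy as the paper: split on the exit point of the while-loop, bound $|C\cap U_d\setminus S|$ by $O(1/\phi)$, and charge the remaining missing vertices to the heavy vertices allowed by the goodness of $C$. In fact, your treatment of the line-21 exit is more careful than the paper's own. The paper asserts that $C\cap U_d\setminus W_C$ contains only $C\cap U_d\setminus S$ and heavy vertices, but this silently ignores $B\setminus\{v\}$ in the subcase $\vol_G(L\cap U)\le 0.5\vol_G(N)$: the pseudocode puts only the single representative $v$ into $W_C$, while by Lemma~\ref{lowest} every $u\in B$ has $L_u=L$ and hence is not heavy in that subcase, and $|B|\ge|Q|>2/\phi$ can be much larger than one. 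So the lemma as literally stated can fail there. You correctly flagged this and proposed the right repair --- set $K_u\leftarrow L$ for all $u\in B$ (or, equivalently, read $W_C$ as the set of vertices of $C\cap U_d$ that land in a new node of volume $\le 0.5\vol_G(N)$) --- and verified that this is exactly what the downstream Lemma~\ref{depth} needs, since all of $L\cap C\cap U_d\supseteq B$ is carved into a maximal $K$-set of volume at most $0.5\vol_G(N)$. With that reading, your bounds (a $6/\phi$ loss at line-14 and a $2/\phi$ loss plus heavy vertices at line-21, against $|C\cap U_d|\ge 10/\phi^2$) match the paper's arithmetic.
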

\begin{proof}
	First consider the case where \textsf{ExploreTree} terminated on line-14. The while-loop must have terminated on line-7. Then as $u$ is small, $|S|\geq |C\cap U_d| - 2/\phi - |Q| - |C\cap L_u\cap U_d| \geq |C\cap U_d| - 6/\phi$. Therefore, $|W_C|\geq |C\cap U_d| - 6/\phi - 0.1|C\cap U_d| > 0.8|C\cap U_d|$.
	
	Now suppose \textsf{ExploreTree} terminated on line-21. In this case, $C\cap U_d \setminus W_C$ only includes vertices in $C\cap U_d\setminus S$, plus vertices $v\in C\cap U_d$ such that $\vol_G(L_v\cap U) > 0.5\vol_G(N)$. Since $C$ is good, we know $|W_C| \geq |C\cap U_d| - 2/\phi - 0.1|C\cap U_d| > 0.8|C\cap U_d|$.
\end{proof}

\begin{lemma}
	 $\sum_{C\in\mathcal{C}_{i_U}\text{ is bad}}|C\cap U_d|\leq \frac{40|U_d|\log^2n}{r}$.
\end{lemma}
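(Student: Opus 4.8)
The plan is to derive this bound directly from Lemma~\ref{pivot} together with the pairwise disjointness of the expander clusters, via a one-line double-counting argument. First I would name the set of ``heavy'' vertices $B = \{u\in U_d : \vol_G(L_u\cap U) > 0.5\,\vol_G(N)\}$ and invoke Lemma~\ref{pivot}, which says that with high probability over the random choice of $R$ we have $|B| \le \frac{4|U_d|\log^2 n}{r}$; the whole argument is then carried out on this high-probability event.

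Next I would unpack the definition of a bad cluster: a cluster $C\in\mathcal{C}_{i_U}$ is bad precisely when the number of heavy vertices it contains exceeds $0.1\,|C\cap U_d|$, i.e.\ $|B\cap C| > 0.1\,|C\cap U_d|$, equivalently $|C\cap U_d| < 10\,|B\cap C|$. Summing this inequality over all bad clusters and using that the clusters of the $(\phi,\phi)$-expander decomposition partition $V$ — so in particular the sets $B\cap C$ are pairwise disjoint and $\sum_{C} |B\cap C| \le |B|$ — gives
\[
\sum_{C\in\mathcal{C}_{i_U}\text{ is bad}} |C\cap U_d| \;<\; 10\sum_{C\in\mathcal{C}_{i_U}\text{ is bad}} |B\cap C| \;\le\; 10\,|B| \;\le\; \frac{40|U_d|\log^2 n}{r},
\]
which is the claimed bound.

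I do not expect any genuine obstacle here: all the real content sits in Lemma~\ref{pivot}, and what remains is pure counting. The only point worth spelling out is that Lemma~\ref{pivot} holds only with high probability, so this lemma — and the bound on the number of rounds of the while-loop that it feeds into — should be understood as holding on the corresponding good event; since the overall algorithm is randomized and involves only $\widetilde{O}(1)$ nodes and rounds, a union bound keeps all such events simultaneously valid with high probability, so conditioning costs nothing.
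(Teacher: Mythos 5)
Your proof is correct and is essentially the argument the paper has in mind (the paper's own proof is just the two-sentence version: invoke Lemma~\ref{pivot} to bound the number of heavy vertices, then apply the definition of badness). You have simply filled in the one-line double-counting step explicitly, which is a fair reading of the paper's terser statement.
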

\begin{proof}
	By Lemma~\ref{pivot}, the total number of vertices $u$ such that $\vol_G(L_u\cap U) > 0.5\vol_G(N)$ is at most $\frac{4|U_d|\log^2n}{r}$. By definition of badness, we have $\sum_{C\in\mathcal{C}_{i_U}\text{ is bad}}|C\cap U_d|\leq \frac{40|U_d|\log^2n}{r}$.
\end{proof}

\begin{lemma}\label{depth}
	For each node $U$ such that $\vol_G(U) > 0.5\vol_G(N)$, and for each set $K_u = L_u$ which is cut off by our algorithm, we have $\vol_G(K_u\cap U)\leq 0.5\vol_G(N)$. Furthermore, let $P$ be the rest of $U$ after cutting all $K_u$'s. Then $\vol_G(P)\leq (1 - \frac{1}{2\log^2n})\vol_G(U)$.
\end{lemma}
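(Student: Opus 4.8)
The plan is to handle the two assertions of the lemma in turn. The first one---that every $K_u$ cut off by the algorithm satisfies $\vol_G(K_u\cap U)\le 0.5\vol_G(N)$---is essentially forced by the bookkeeping in \textsf{ExploreTree}: a vertex $u$ is placed into $W_C$ together with $K_u=A_u$ only after the explicit test $\vol_G(A_u\cap U)\le 0.5\vol_G(N)$, and the single extra vertex $v\in Q$ with $K_v=L$ is added only after the test $\vol_G(L\cap U)\le 0.5\vol_G(N)$. Since we have already shown that $K_u=L_u$ with high probability for every $u\in W=\bigcup_C W_C$, and since the algorithm only ever cuts off sets $K_u$ with $u\in W$, the bound follows; the quantity $\vol_G(K_u\cap U)$ is meaningful here because $U\subseteq V$ consists of true vertices rather than contracted super-nodes of $V_\gh[U]$.

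For the second assertion I would show that the total volume removed from $U$ is at least $\frac{1}{2\log^2 n}\vol_G(U)$. The sets actually cut off are the maximal members of the laminar family $\{K_u\}_{u\in W}$; being maximal they are pairwise disjoint, so the sets $K_u\cap U$ form a sub-partition of $U$ and $\vol_G(P)=\vol_G(U)-\sum_{K_u\text{ maximal}}\vol_G(K_u\cap U)$. The key observation is that $W$ itself is entirely contained in the removed region: for $w\in W$ we have $w\in U$ and $w\in K_w$, and the unique maximal member of $\{K_u\}_{u\in W}$ containing $K_w$ is again of the form $K_{u^\ast}$ with $u^\ast\in W$, hence $w\in K_{u^\ast}\cap U$. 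Consequently $\sum_{K_u\text{ maximal}}\vol_G(K_u\cap U)\ge \vol_G(W)\ge d\,|W|$, since $W\subseteq U_d$ and every vertex of $U_d$ has $G$-degree at least $d$.

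It remains to lower bound $|W|$. By Lemma~\ref{cutoff}, every good cluster $C\in\mathcal{C}_{i_U}$ contributes $|W_C|>0.8\,|C\cap U_d|$, and by the preceding lemma (which follows from Lemma~\ref{pivot}) the number of $U_d$-vertices lying in bad clusters of $\mathcal{C}_{i_U}$ is at most $\frac{40|U_d|\log^2 n}{r}$, which with $r=10\log^5 n$ is a $o(1/\log^2 n)$ fraction of $|U_d|$. On the other hand, by maximality of $\cnt[i_U]$ and since $\sum_i\cnt[i]=|U_d|$ ranges over only $O(\log n)$ cluster-size classes, $\cnt[i_U]\ge |U_d|/O(\log n)$; subtracting the bad-cluster contribution still leaves $\sum_{C\text{ good}}|C\cap U_d|\ge |U_d|/O(\log n)$, whence $|W|\ge 0.8\,|U_d|/O(\log n)$. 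Plugging this into $d\,|W|$ and using $2d|U_d|\ge \vol_G(U)/\log n$ gives $\sum_{K_u\text{ maximal}}\vol_G(K_u\cap U)\ge \Omega(\vol_G(U)/\log^2 n)$, and one verifies that the surviving constants are comfortably better than $\frac{1}{2\log^2 n}$, so $\vol_G(P)\le\bigl(1-\frac{1}{2\log^2 n}\bigr)\vol_G(U)$.

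The routine-but-fiddly part is this last constant computation: the bound accumulates a $1/|\mathcal{D}|$ loss from restricting to the degree class $d$, a $1/(\log_2 n+1)$ loss from restricting to the size class $i_U$, a factor $0.8$ from the good-cluster survival rate, and the factor-$2$ slack between $d|U_d|$ and $\vol_G(U)/\log n$; the slack is generous enough ($0.8$ against a needed $0.5$, with the bad-cluster term of strictly lower order) that everything closes for all sufficiently large $n$. The only step that is not purely mechanical is the covering claim $W\subseteq\bigcup_{K_u\text{ maximal}}(K_u\cap U)$, which hinges on the laminar family being exactly $\{K_u\}_{u\in W}$ so that each of its maximal elements is itself a $K_{u^\ast}$ with $u^\ast\in W$.
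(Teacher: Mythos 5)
Your proposal is correct and follows essentially the same route as the paper: the first assertion falls out of the explicit volume checks in \textsf{ExploreTree}, and the second is obtained by lower-bounding the removed volume via $\sum_{C\text{ good}} d|W_C|$ using Lemma~\ref{cutoff}, the bad-cluster bound, and the maximality of $d|U_d|$ and $\cnt[i_U]$. You are slightly more explicit than the paper on one point it leaves implicit---that every $w\in W$ lies in some maximal $K_{u^\ast}$, so the removed volume really is at least $d|W|$---and slightly less explicit on the final constant arithmetic, but both amount to the same computation.
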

\begin{proof}
	The first half of the claim is automatically guaranteed by the algorithm. Let us only consider the second half.
	
	By Lemma~\ref{cutoff}, the total volume that has been cut off from $U$ is at least 
	\[\begin{aligned}
	\sum_{C\in\mathcal{C}_{i_U}\text{ is good}}d|W_C| &\geq \sum_{C\in C_{i_U}\text{ is good}}0.8d|C\cap U_d| \geq \frac{0.8d}{\log n}|U_d| - 0.8d\sum_{C\in\mathcal{C}_{i_U}\text{ is bad}}|C\cap U_d|\\
	&\geq \frac{0.8d}{\log n}|U_d| - \frac{32d\log^2n}{r}|U_d|\geq \frac{0.8}{\log^2n}\vol_G(U) - \frac{32}{\log^3n}\vol_G(U)\\
	&\geq \frac{1}{2\log^2n}\vol_G(U)
	\end{aligned}\]
	
	Hence, the volume of $\vol_G(P)$ is reduced by a factor of at most $1 - \frac{1}{2\log^2n}$.
\end{proof}

By Lemma~\ref{depth}, after each round of the while-loop, for each node $U\subseteq N$, either we already have $\vol_G(U)\leq 0.5\vol_G(N)$ after the refinement with respect to random set $R$, or $U$ is further divided into sub-nodes whose volume are at most $\max\{0.5\vol_G(N), (1 - \frac{1}{2\log^2n})\vol_G(U)\}$. Therefore the number of rounds is at most $\log^3n$. So the total running time should be $\widetilde{O}(n^2)$ as well under Hypothesis~\ref{hypo}.
\section{Unconditional cut-equivalent trees}
\subsection{The main algorithm}
In this section we will prove the second half of Theorem~\ref{cond} using existing max-flow algorithms. The algorithm is mostly the same as the previous algorithm conditioning on Hypothesis~\ref{hypo}, and the extra work is to deal with the additive $n^{1.5}$ term that appears in the running time of max-flow algorithm from~\cite{brand2021minimum}. Similar to the previous algorithm, we will also use the same set of parameters $\phi, r$, and use degree set $\mathcal{D} = \{\sqrt{n},  2\sqrt{n}, 2^2\sqrt{n}, \cdots, n \}$.

\paragraph*{Preparation} Throughout the algorithm, $\gh$ will be the cut-equivalent tree under construction, where each of $\gh$'s node will represent a subset of vertices of $V$. As a preparation step, compute a $(\phi, \phi)$-expander decomposition on $G$ and obtain a partitioning $\mathcal{C} = \{C_1, C_2, \cdots, C_k \}$ of $V$. Categorize clusters in $\mathcal{C}$ according to their sizes: for each $2^i$, define $\mathcal{C}_{i}$ to be the set of clusters whose sizes are within interval $[2^i, 2^{i+1})$.

\paragraph*{Iteration} In each round, the algorithm tries to simultaneously subdivide all nodes of $\gh$ which contains at least $20r$ vertices in $V$. Following the same procedure as in the previous algorithm, for each node $N$ of $\gh$, further refine $N$ into a set of smaller sub-nodes. Then, for each such sub-node $U$, define variables $d, U_d$ and $s = 2^{i_U}$ accordingly. If (1) $d \geq n^{3/4}$ or (2) $s > n^{3/4} / \sqrt{d}$, we would continue to the do same as in algorithm \textsf{CondGomoryHu} which invokes the expander search procedure.

The unconditional algorithm diverges from the conditional algorithm in Theorem~\ref{cond} from here if $d < n^{3/4}$ and $s\leq n^{3/4} / \sqrt{d}$. Intuitively, when $s$ is relatively small, the number of expanders whose sizes are roughly $s$ would be large, and so expander searches would be costly because of the additive term $n^{1.5}$ in the running time of computing max-flow. What we would do is to directly apply Lemma~\ref{k-partial} on graph $G_\gh[U]$ to isolate all vertices in $U_d$ on the tree $\gh$ once and for all. The pseudo-code is summarized as \textsf{GomoryHu}.

\begin{algorithm}
	\caption{\textsf{GomoryHu}$(G = (V, E))$}
	initialize a partition tree $\gh$, as well as parameters $\phi, r$\;
	\While{$\exists N\subseteq V$, $U$ a node of $\gh$, $|N| \geq 20r$}{
		\For{node $N$ of $\gh$ with $|N|\geq 20r$}{
			repeat for $10r$ times: each time we sample a vertex $u\in N$ with probability $\frac{\deg_G(u)}{\vol_G(N)}$, and let the sampled set be $R$\;
			call Lemma~\ref{refine} on node $N$ with respect to $R$\;
			
			\For{node $U\subseteq N$ of $\gh$ such that $\vol_G(U) > 0.5\vol_G(N)$}{
				take $d$ such that $d|U_d|$ is maximized\;
				take $s = 2^{i_U}$ such that $\cnt[i_U]$ is maximized\;
				\If{$d\geq n^{3/4}$ or $s > n^{3/4} / \sqrt{d}$}{
					\For{each $C\in \mathcal{C}_{i_U}$}{
						run expander search on $C$ within node $U$ to compute a subset $W_C\subseteq C\cap U_d$, and the latest min-cuts $(K_u, V_\gh[U]\setminus K_u)$ for each $u\in W_C$\;
					}
					define $W = \bigcup_{C\in\mathcal{C}_{i_U}}W_C$\;
					for each $u\in W$ such that $K_u$ is maximal, split $K_u\cap U$ off of $U$ and create a new node on $\gh$\;
				}\Else{
					apply Lemma~\ref{k-partial} on the auxiliary graph $G_\gh[U]$ with input parameter $k = 2d$, so that all vertices in $U_d$ become singletons in $\gh$\;
				}
			}
		}
	}
	\For{node $U$ of $\gh$ such that $|U| < 20r$}{
		repeatedly refine $U$ using the generic Gomory-Hu steps until all nodes are singletons\;
	}
	\Return $\gh$ as a cut-equivalent tree\;
\end{algorithm}

\subsection{Running time analysis}
\begin{lemma}
	Each round of the while-loop in \textsf{GomoryHu} takes time $\widetilde{O}(n^{17/8})$.
\end{lemma}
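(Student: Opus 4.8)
The plan is to split the work done in one pass of the outer while-loop of \textsf{GomoryHu} into three groups and to bound each by $\widetilde{O}(n^{17/8})$: (i) the pivot refinements of all super-nodes $N$ via Lemma~\ref{refine}; (ii) the invocations of \textsf{ExploreTree} for the sub-nodes $U$ that enter the expander-search branch ($d\geq n^{3/4}$ or $s>n^{3/4}/\sqrt{d}$); and (iii) the calls to Lemma~\ref{k-partial} for the sub-nodes $U$ in the complementary branch. The remaining bookkeeping per node (forming the buckets $U_k$, the counters $\cnt[\cdot]$, cutting off maximal sides, reconnecting $\gh$-neighbours) costs $\widetilde{O}(|V_\gh[U]|+\vol_G(U))$, which sums to $\widetilde{O}(n^2)$, so it will not be the bottleneck. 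Two facts will be used throughout. First, in any snapshot of $\gh$, $\sum_U|V_\gh[U]|\leq 3n$ (the sizes $|U|$ sum to $n$, the degrees $\deg_\gh(U)$ sum to $<2n$). Second, for each fixed pair $(s,d)$ arising in one round, $\sum_U\vol_G(U)\leq\widetilde{O}(d)\cdot\sum_U\cnt[i_U]\leq\widetilde{O}(sn)$, by combining $\vol_G(U)\leq 2d\cnt[i_U]\log^2n$ with Lemma~\ref{count}; moreover there are only $\widetilde{O}(1)$ distinct pairs $(s,d)$.

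For group (i), I would observe that each refinement runs $|R|=\widetilde{O}(1)$ max-flows, each on a contraction of $G_\gh[N]$ with $\widetilde{O}(\vol_G(N))$ edges and $O(|V_\gh[N]|)$ vertices, so by \cite{brand2021minimum} each costs $\widetilde{O}(\vol_G(N)+|V_\gh[N]|^{1.5})$. Summing over the disjoint super-nodes gives $\widetilde{O}\big(\sum_N\vol_G(N)+(\sum_N|V_\gh[N]|)^{1.5}\big)=\widetilde{O}(n^2+n^{1.5})=\widetilde{O}(n^2)$, using $\sum_N\vol_G(N)\leq2m$ and superadditivity of $x\mapsto x^{1.5}$. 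For group (iii), Lemma~\ref{k-partial} applied to $G_\gh[U]$ with $k=2d$ costs $\widetilde{O}(|E_\gh[U]|\cdot d)=\widetilde{O}(\vol_G(U)\cdot d)$ — crucially the $\widetilde{O}(mk)$ bound, not the $\widetilde{O}(nk^2)$ one — so over the nodes of a fixed $(s,d)$ it costs $\widetilde{O}(d)\cdot\sum_U\vol_G(U)=\widetilde{O}(sdn)$. In this branch $s\leq n^{3/4}/\sqrt{d}$ and $d<n^{3/4}$, hence $sd\leq n^{3/4}\sqrt{d}<n^{9/8}$; multiplying by the $\widetilde{O}(1)$ pairs, group (iii) costs $\widetilde{O}(n^{9/8}\cdot n)=\widetilde{O}(n^{17/8})$.

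Group (ii) is where the thresholds matter. By Lemma~\ref{expander-search} one \textsf{ExploreTree} call costs $\widetilde{O}(\mf(\vol_G(U),|V_\gh[U]|)/\phi)=\widetilde{O}(\vol_G(U)+|V_\gh[U]|^{1.5})$ (since $1/\phi=\widetilde{O}(1)$), and it only needs to be run on the clusters of $\mathcal{C}_{i_U}$ that meet $U_d$, of which there are at most $\min\{n/s,|U_d|\}$ (the clusters are disjoint of size $\geq s$, and each one meeting $U_d$ uses up a distinct vertex of $U_d$). I would bound the $\vol_G(U)$-part with the factor $n/s$: it contributes $\tfrac{n}{s}\sum_U\vol_G(U)=\widetilde{O}(n^2)$ per pair, hence $\widetilde{O}(n^2)$ overall. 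For the $|V_\gh[U]|^{1.5}$-part I would split on $s$. If $s\geq n^{3/8}$, use the factor $n/s\leq n^{5/8}$ together with $\sum_U|V_\gh[U]|^{1.5}\leq(3n)^{1.5}=O(n^{1.5})$ to get $\widetilde{O}(n^{5/8}\cdot n^{1.5})=\widetilde{O}(n^{17/8})$. If $s<n^{3/8}$, the branch condition forces $d\geq n^{3/4}$ (otherwise $n^{3/4}/\sqrt{d}>n^{3/8}\geq s$ contradicts $s>n^{3/4}/\sqrt{d}$); here I would use the factor $|U_d|$ and $|V_\gh[U]|^{1.5}\leq n^{1/2}|V_\gh[U]|\leq n^{3/2}$, so the contribution is $\leq n^{3/2}\sum_U|U_d|\leq n^{3/2}\cdot\widetilde{O}(sn/d)$, and since $s<n^{3/8}$, $d\geq n^{3/4}$ give $sn/d<n^{5/8}$, this is $\widetilde{O}(n^{17/8})$ per pair and hence $\widetilde{O}(n^{17/8})$ in all. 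Combining the three groups yields $\widetilde{O}(n^{17/8})$ for a round.

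The hardest part will be controlling the additive $n^{1.5}$ term of the fastest max-flow inside group (ii): bounding the number of relevant clusters by $n/s$, as in the conditional analysis (Lemma~\ref{prune}), is too lossy once $s$ is small, so one must fall back on the bound $|U_d|$ and exploit the scarcity of high-degree vertices — $\sum_U|U_d|=\widetilde{O}(sn/d)$ — which is only useful when $d$ is large; forcing these two regimes to meet is exactly what pins the cutoffs at $d\asymp n^{3/4}$ and $s\asymp n^{3/4}/\sqrt{d}$ and produces the exponent $17/8$. A secondary thing to double-check is that group (iii) genuinely needs the $\widetilde{O}(mk)$ branch of Lemma~\ref{k-partial} (the $\widetilde{O}(nk^2)$ branch would give $\widetilde{O}(nd^2)$ per node, which is far too much when $d$ is close to $n^{3/4}$), and that the various $n^{1.5}$ additive terms — from the refinements and from every auxiliary-graph max-flow — telescope into $\widetilde{O}(n^{1.5})$ rather than piling up, which follows from $\sum_U|V_\gh[U]|=O(n)$ and superadditivity.
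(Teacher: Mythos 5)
Your proposal is correct and reaches the same exponent, but it handles the hardest subcase by a genuinely different route. The paper's proof first derives the structural claim that $s \geq \sqrt{2d}$ always holds --- a consequence of the boundary-linked property (Definition~\ref{boundary}, item (3)) applied to a single cluster containing a degree-$d$ vertex --- and then uses the branch condition to conclude $s > n^{3/8}$ in every case that enters the expander-search branch, at which point the single bound $\widetilde{O}(n^2 + n^{2.5}/s) = \widetilde{O}(n^{17/8})$ suffices. You never prove that structural claim. Instead, you observe that the number of clusters searched in $U$ is also at most $|U_d|$ (not just $n/s$), split on whether $s$ exceeds $n^{3/8}$, and in the small-$s$ regime (which the branch condition forces to have $d\geq n^{3/4}$) you multiply $|U_d|$ by the crude bound $|V_\gh[U]|^{1.5}\leq n^{3/2}$ and control $\sum_U|U_d|\leq\widetilde{O}(sn/d)$ via Lemma~\ref{count}, getting $\widetilde{O}(n^{5/2}\cdot s/d)\leq\widetilde{O}(n^{17/8})$. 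Both arguments ultimately rely on the same boundary-linkedness of the decomposition --- the paper via its $s\geq\sqrt{2d}$ claim, you via Lemma~\ref{count} --- so the difference is one of bookkeeping, not essence; what your route buys is not needing a separate geometric lemma, and what the paper's route buys is that the small-$s$ regime is simply ruled out rather than handled. Your proposal is also somewhat more careful than the paper's proof in that it explicitly accounts for the pivot-refinement cost and the $\widetilde{O}(mk)$ versus $\widetilde{O}(nk^2)$ distinction in Lemma~\ref{k-partial}, both of which the paper's proof leaves implicit.
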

\begin{proof}
	Let us study an arbitrary iteration. Suppose the condition on line-6 holds, namely $d\geq n^{3/4}$ or $s>n^{3/4}/\sqrt{d}$. Then in this case we would do exactly the same as in the conditional algorithm, and the only difference we are invoking the max-flow algorithm from~\cite{brand2021minimum}. According to Lemma~\ref{prune}, we could upper bound the running time as \[\widetilde{O}(\frac{n}{s}\cdot\mf( 2d\cdot\cnt[i_U]\log^2n, |V_\gh[U]|)) = \widetilde{O}(\frac{nd}{s}\cnt[i_U] + \frac{n}{s}\cdot |V_\gh[U]|^{1.5})\]
	for each node $U$. Since all tree nodes $U$ are disjoint vertex subsets of $V$, $\sum_{U\in \gh}|V_\gh[U]|^{1.5} \leq n^{1.5}$. Therefore, by Lemma~\ref{count}, this sums to $\widetilde{O}(n^2 + \frac{n^{2.5}}{s})$.
	
	We first claim that $s \geq \sqrt{2d}$. In fact, by maximality of $\cnt[i_U]$, there exists at least one cluster $C\in \mathcal{C}_{i_U}$ that intersects $U_d$. Take any $u\in C\cap U_d$. Then since $G$ is a simple graph, more than $d - s$ neighbors of $u$ are outside of $C$, thus $\out_G(C) > d-s$. By property (3) of Definition~\ref{boundary}, we have:
	\[\out_G(C)\leq \log^7n\cdot \phi\vol_G(C) \leq \log^7n\cdot \phi (4s^2 + \out_G(C))\]
	As $\phi = \frac{1}{10\log^{\const + 10}}$, we have $\out_G(C)\leq 0.4s^2 + 0.1\out_G(C)$, and so $\out_G(C) < 0.5s^2$. As $\out_G(C) > d-s$, we have $s > \sqrt{2d}$.
	
	When $d\geq n^{3/4}$, as $s\geq \sqrt{2d} >n^{3/8}$ we have $\widetilde{O}(n^2 + \frac{n^{2.5}}{s}) = \widetilde{O}(n^{17/8})$. If $d < n^{3/4}$ and $s > n^{3/4} / \sqrt{d}$, then we also bound the total running time as $\widetilde{O}(n^2 + \frac{n}{s\phi}\cdot n^{1.5}) = \widetilde{O}(n^{17/8})$.
	
	Now suppose the condition on line-6 does not hold, then $d < n^{3/4}$ and $s\leq n^{3/4}/\sqrt{d}$. In this case, similar to Lemma~\ref{count}, we can prove that the total volume $\vol_G(U)$ over all different $U$'s is bounded by $\widetilde{O}(ns)$. So applying Lemma~\ref{k-partial} in this round takes time at most $\widetilde{O}(nsd) = \widetilde{O}(n^{17/8})$.
\end{proof}

\begin{lemma}
	The total number of rounds of the while-loop in \textsf{GomoryHu} is bounded by $O(\log^3n)$.
\end{lemma}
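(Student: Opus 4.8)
The plan is to reuse the potential-function argument already established for the conditional algorithm, since the only new branch in \textsf{GomoryHu} is the ``small-$s$'' case where we invoke Lemma~\ref{k-partial} on $G_\gh[U]$ with $k = 2d$ instead of running expander searches. So I would argue separately that in \emph{either} branch, after one pass of the while-loop every node $U$ with $\vol_G(U) > 0.5\vol_G(N)$ either already has its volume halved relative to its parent $N$, or has its volume shrink by a $(1 - \tfrac{1}{2\log^2 n})$ factor; the rest of the argument (counting how many rounds can pass before a node drops below the $20r$ threshold) is identical to what follows Lemma~\ref{depth}.

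The first branch ($d \geq n^{3/4}$ or $s > n^{3/4}/\sqrt{d}$) is verbatim the conditional algorithm, so Lemma~\ref{depth} applies directly: the total volume cut off from $U$ is at least $\tfrac{1}{2\log^2 n}\vol_G(U)$, hence $\vol_G(P) \leq (1 - \tfrac{1}{2\log^2 n})\vol_G(U)$. For the second branch I would note that applying Lemma~\ref{k-partial} on $G_\gh[U]$ with $k = 2d$ makes \emph{all} vertices of $U_d$ into singletons of $\gh$. Since $d$ was chosen so that $d|U_d|$ is maximized, we have $\vol_G(U_d) \geq d|U_d| \geq \vol_G(U)/(2\log n)$, so this step removes at least a $\tfrac{1}{2\log n}$ fraction of $\vol_G(U)$ — an even stronger decrease than in the first branch. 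Either way, the residual node $P$ has $\vol_G(P) \leq \max\{0.5\vol_G(N),\,(1 - \tfrac{1}{2\log^2 n})\vol_G(U)\}$.

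Now I would close the argument exactly as after Lemma~\ref{depth}: track, for a fixed leaf-to-root branch of the evolving tree, the volume of the node currently containing a given vertex. In each round this volume either halves (relative to the parent from the previous round) or multiplies by at most $1 - \tfrac{1}{2\log^2 n}$; in the worst case the halving happens only once every $O(\log^2 n)$ rounds and the multiplicative decay governs, so after $O(\log^2 n)$ rounds the volume drops by a constant factor, and after $O(\log^2 n \cdot \log n) = O(\log^3 n)$ rounds it falls below any fixed threshold. Since $\vol_G(N) \geq |N| \geq 20r$ whenever a node is processed, and $r = \mathrm{poly}\log n$, once the volume is driven down to $O(r)$ the node is no longer processed, giving the $O(\log^3 n)$ bound on the number of rounds.

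The main obstacle is making the second-branch volume-decrease claim clean: one must verify that ``all of $U_d$ becomes singletons'' genuinely corresponds to removing volume $\vol_G(U_d)$ from the node $U$ in $\gh$ — i.e., that the refinement produced by Lemma~\ref{k-partial} on the auxiliary graph $G_\gh[U]$ leaves a residual node $P$ with $U_d \cap P = \emptyset$ — and that this refinement is consistent with the Gomory-Hu framework (so $\gh$ stays GH-equivalent). This is where the $k$-partial-tree guarantee (vertices of $G_\gh[U]$ with degree $\leq 2d$ become singletons, and all of $U_d$ has degree $< 2d$ in $G_\gh[U]$ since $G$ is simple) must be invoked carefully; everything downstream is a routine repetition of the conditional analysis.
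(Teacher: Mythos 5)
Your proposal is correct and follows essentially the same route as the paper: apply Lemma~\ref{depth} verbatim in the expander-search branch, and in the $k$-partial-tree branch observe that $\vol_G(U_d)\geq d|U_d|\geq \vol_G(U)/(2\log n)$ so that removing all of $U_d$ shrinks every residual piece by at least a $(1-\tfrac{1}{2\log n})$ factor (stronger than the $(1-\tfrac{1}{2\log^2 n})$ bound the paper quotes), and then iterate the volume-decay argument for $O(\log^3 n)$ rounds. The subtlety you flag at the end is real but already resolved by the definitions: a $k$-partial tree is GH-equivalent by definition, the degree of any $u\in U$ in $G_\gh[U]$ equals $\deg_G(u)$ since contraction preserves incident edge counts, so all of $U_d$ (degree $<2d$) indeed becomes singletons and every non-singleton residual node $P$ satisfies $U_d\cap P=\emptyset$.
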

\begin{proof}
	If each round of the while-loop, if $d\geq n^{3/4}$ or $s > n^{3/4} / \sqrt{d}$ for node $U$, then according to the proof of Lemma~\ref{depth}, the volume of each subdivision is bounded by $\max\{0.5\vol_G(N), (1 - \frac{1}{2\log^2n})\vol_G(U)\}$. If $d< n^{3/4}$ and $s \leq n^{3/4} / \sqrt{d}$, then all vertices in $U_d$ become singletons on $\gh$; also, and for the same reason, all subdivision of $U$ should be at most $(1 - \frac{1}{2\log^2n})\vol_G(U)$. Therefore, the number of while-loop iterations is at most $O(\log^3n)$.
\end{proof}
\section*{Acknowledgment}
The author would like to thank helpful discussions with Prof. Ran Duan and Dr. Amir Abboud. This publication has received funding from the European Research Council (ERC) under the European Union’s Horizon 2020 research and innovation programme (grant agreement No 803118 UncertainENV).

\vspace{5mm}
\bibliographystyle{alpha}
\bibliography{ref}

\end{document}